\theoremstyle{remark}
\newtheorem{definition}{Definition}
\newtheorem{example}{Example}
\newtheorem{remark}{Remark}
\newtheorem{theorem}{Theorem}
\newtheorem{lemma}{Lemma}
\newtheorem{corollary}{Corollary}
\newcommand{\biggg}{\bBigg@{3}}
\newcommand{\qbin}{\genfrac{[}{]}{0pt}{}}
\begin{document}
\title{Performance Bounds and Degree-Distribution Optimization of Finite-Length BATS Codes}

\author{
	Mingyang~Zhu,~\IEEEmembership{Member,~IEEE,} Shenghao~Yang,~\IEEEmembership{Member,~IEEE,} Ming~Jiang,~\IEEEmembership{Member,~IEEE,} and~Chunming~Zhao,~\IEEEmembership{Member,~IEEE}
	\thanks{This work was supported in part by the National Natural Science Foundation of China under Grant 62331002, Grant 62171399, and Grant 62171238; in part by the National Key Research and Development Program of China under Grant 2022YFA1005000; in part by the Fundamental Research Funds for the Central Universities of China under Grant NKU 050-63233070; and in part by the Fellowship Award from the Research Grants Council of the Hong Kong Special Administrative Region, China, under Grant CUHK SRFS2223-4S03. \textit{(Corresponding author: Shenghao Yang.)}}
	\thanks{Mingyang Zhu was with the National Mobile Communications Research Laboratory, Southeast University, Nanjing 210096, China. He is now with the Institute of Network Coding, The Chinese University of Hong Kong, Hong Kong SAR, China. (e-mail: mingyangzhu@cuhk.edu.hk).}
	\thanks{Shenghao Yang is with the School of Science and Engineering, The Chinese University of Hong Kong, Shenzhen, Guangdong 518172, China (e-mail: shyang@cuhk.edu.cn).}
	\thanks{Ming Jiang and Chunming Zhao are with the National Mobile Communications Research Laboratory, Southeast University, Nanjing 210096, China, and also with the Purple Mountain Laboratories, Nanjing 211111, China (e-mail: jiang\_ming@seu.edu.cn; cmzhao@seu.edu.cn).}
}
\maketitle

\begin{abstract}
	Batched sparse (BATS) codes were proposed as a reliable communication solution for networks with packet loss. In the finite-length regime, the error probability of BATS codes under belief propagation (BP) decoding has been studied in the literature and can be analyzed by recursive formulae. However, all existing analyses have not considered precoding or have treated the BATS code and the precode as two separate entities. In this paper, we analyze the word-wise error probability of finite-length BATS codes with a precode under \textit{joint decoding}, including BP decoding and maximum-likelihood (ML) decoding. The joint BP decoder performs peeling decoding on a joint Tanner graph constructed from both the BATS and the precode Tanner graphs, and the joint ML decoder solves a single linear system with all linear constraints implied by the BATS code and the precode. We derive closed-form upper bounds on the error probability for both decoders. Specifically, low-density parity-check (LDPC) precodes are used for BP decoding, and any generic precode can be used for ML decoding. Even for BATS codes without a precode, the derived upper bound for BP decoding is more accurate than the approximate recursive formula, and easier to compute than the exact recursive formula. The accuracy of the two upper bounds has been verified by many simulation results. Based on the two upper bounds, we formulate an optimization problem to optimize the degree distribution of LDPC-precoded BATS codes, which improves BP performance, ML performance, or both. In our experiments, to transmit $128$ packets over a line network with packet loss, the optimized LDPC-precoded BATS codes reduce the transmission overhead to less than 50\% of that of standard BATS codes under comparable decoding complexity constraints.
\end{abstract}

\begin{IEEEkeywords}
	Batched sparse codes, finite-length analysis, upper bounds, belief propagation, maximum likelihood, error probability.
\end{IEEEkeywords}
\IEEEpeerreviewmaketitle

\section{Introduction}
Batched sparse (BATS) codes are an efficient solution for network communications with packet loss~\cite{BATS,FL_analysis_BATS}. BATS codes can be regarded as \textit{concatenated codes} over a finite field $\mathbb{F}_q$, where the outer code is a generalized fountain code using the matrix-form Luby Transform (LT)~\cite{LT} and the inner code is a random linear network code (RLNC). More precisely, the outer code generates a sequence of \textit{batches}, each of which consists of $M$ linear combinations of specific input symbols, and $M$ is called the \textit{batch size}. Then, the inner code operates at the intermediate network nodes, generating linear combinations of the symbols belonging to the same batch.

BATS codes share some similarities with many low-complexity linear network coding schemes, such as L-chunked codes~\cite{Yang2014Lchunked,Tang2018Lchunked}, overlapped chunked codes~\cite{5191397, overlapChunk, 5695118, Tang2012EOC}, and Gamma codes~\cite{GammaCodes2012}. A common feature of these low-complexity network coding schemes is the utilization of additional check constraints to \textit{couple} chunked network codes, improving the achievable rate by sharing information through global or local check constraints. Compared to the aforementioned low-complexity schemes, BATS codes generally achieve higher rate in the asymptotic regime and have the ability to generate unlimited batches. Numerical results of the achievable rates of these schemes can be found in~\cite{Tang2018Lchunked}.

The asymptotic performance of BATS codes under belief propagation (BP) decoding\footnote{The BP decoding of BATS codes still involves Gaussian eliminations. However, the Gaussian eliminations only apply to relatively small linear systems, each of which is associated to a batch.} has been analyzed in~\cite{BATS,Yang2016tree}. Both works provided a sufficient condition such that the BP decoder can recover a given fraction of the input symbols with high probability when the number of input symbols tends to infinity. The proof of this sufficient condition in~\cite{BATS} is based on a differential equation approach, while the proof in~\cite{Yang2016tree} uses a tree-based approach similar to the density evolution for low-density parity-check (LDPC) codes~\cite{luby1998analysis}.

The degree distribution of BATS codes plays a crucial role in their decoding performance. In~\cite{BATS}, some optimization problems based on the asymptotic analysis were raised to design asymptotically optimal degree distributions, which can be approximately solved by linear programming. In~\cite{FL_analysis_BATS}, an exact and an approximate recursive formulae on the error probability of BP decoding of finite-length BATS codes have been derived. Using one of these recursive formulae as the objective function, a general framework~\cite{Raptor,Lazaro2017inactivation} to numerically optimize the degree distribution for fountain codes can be adopted to optimize the degree distribution for BATS codes. Some heuristic methods to design the degree distribution of finite-length BATS codes have been studied in~\cite{Xu2017QUBATS,Zhu2023LDPCBATS}. The numerical results in~\cite{FL_analysis_BATS, Xu2017QUBATS, Zhu2023LDPCBATS} demonstrate that reasonable degree-distribution designs in the finite-length regime can lead to significant performance gains compared to using asymptotically optimal degree distributions.

However, to date, there has been no analysis of the error probability for BATS codes \textit{with a precode} in the finite-length regime, either for BP decoding or maximum-likelihood (ML) decoding. As a generalization of fountain codes, we can empirically infer that the precode is significant for BATS codes. Therefore, we need theoretical guidance to optimize precoded BATS codes. The main contributions of this paper are the establishment of performance upper bounds for precoded BATS codes in the finite-length regime and a method for optimizing the degree distribution based on these bounds, as summarized in Sec.~\ref{subsection:introduction_main_results}. Before presenting our results, we review some existing results on the finite-length analyses of fountain codes, LDPC codes, and BATS codes.

\subsection{Related Works}
There are several existing studies on the finite-length performance of BP decoding for fountain codes, LDPC codes, and BATS codes. Karp \textit{et al.} developed a recursive formula to compute the error probability of LT codes based on a finite-state machine model~\cite{FL_analysis_LT}. Shokrollahi provided an analysis on the error probability of Raptor codes, utilizing the finite-length analyses of LT codes and a class of variable-regular LDPC codes~\cite{Raptor,shokrollahi2001finite}. His analysis is based on a cascade of an LT decoder and an LDPC decoder, which neglects the joint iteration between the two parts, and so becomes an upper bound on the overall error probability. Moreover, several methods for finite-length analysis of regular LDPC codes on the erasure channel presented in~\cite{FL_analysis_LDPC,Modern_Coding_Theory,Johnson2009finite} can also be directly applied to improve Shokrollahi's upper bound for Raptor codes with more general LDPC precodes. Yang \textit{et al.} provided two recursive formulae to compute the error probability of BATS codes for a given number of received batches~\cite{FL_analysis_BATS}. These two recursive formulae also follow the model of a finite-state machine, but in comparison to the recursive formula for LT codes, they are more computationally complex. Note that the precode of BATS codes are not taken into account in these two recursive formulae.

A few works have addressed the error probability of LT and Raptor codes under ML decoding~\cite{Rahnavard2007, Birgit2013LT, Wang2016performanceRaptorML, Zhang2017boundsRaptorQ, Lazaro2017inactivation, Lazaro2021bounds}. In~\cite{Rahnavard2007}, Rahnavard \textit{et al.} derived upper and lower bounds on the bit error probability for binary LT and Raptor codes, where all entries of the parity-check matrix of the precode are independent and identically distributed (i.i.d.) Bernoulli random variables. In~\cite{Birgit2013LT}, Schotsch \textit{et al.} provided the upper and lower bounds on the error probability of LT codes on word as well as on symbol level, where the upper bound on the symbol-wise error probability is a generalization of~\cite{Rahnavard2007}. Wang \textit{et al.} provided upper and lower bounds on error probability of binary Raptor codes with a systematic low-density
generator-matrix precode~\cite{Wang2016performanceRaptorML}. This work was elegantly extended in~\cite{Zhang2017boundsRaptorQ}, where $q$-ary Raptor codes are considered. Both~\cite{Wang2016performanceRaptorML} and~\cite{Zhang2017boundsRaptorQ} consider very short Raptor codes (the number of input symbols $\le 100$). Furthermore, upper and lower bounds for $q$-ary Raptor codes with generic $q$-ary linear precodes were derived in~\cite{Lazaro2021bounds}. 

\textit{Inactivation decoding} is an efficient ML decoding for sparse linear systems, and its complexity is typically measured by the number of inactive symbols (see~\cite{InactivationPatent,Raptor_codes_foundations_and_trends,Lazaro2017inactivation} for inactivation decoding and inactive symbols). In~\cite{FL_analysis_BATS}, Yang \textit{et al.} developed a recursive formula to compute the expected number of inactive symbols for BATS codes (the inactivation decoding of BATS codes is similar to that of fountain codes, which is introduced in Sec.~\ref{subsection:decoding_std_BATS}). This recursive formula becomes essentially the same as the recursive formula for fountain codes presented in~\cite{Lazaro2017inactivation} when the batch size is one. However, there has been no analysis on the error probability of BATS codes under ML decoding, even without a precode.

\subsection{Summary of Results}\label{subsection:introduction_main_results}
We derive two upper bounds on the error probability\footnote{In this paper, the error probability stands for word-wise error probability, i.e., the probability that the decoder cannot recover all input symbols.} of \textit{precoded BATS codes} under BP decoding and ML decoding, and both bounds result from application of the union bound. For both BP decoding and ML decoding, \textit{joint decoders} that simultaneously decode the BATS code and the precode are considered. A degree-distribution optimization method based on the derived performance bounds is developed.

For BP decoding, the precode needs to be sparse so that it can effectively participate in the decoding process. Thus, we consider the precode to be an LDPC code. To facilitate the analysis, we do not focus on particular LDPC codes, but rather consider randomly selecting the precode from a $q$-ary regular LDPC \textit{ensemble} (this technique is widely used in the analysis of LDPC codes). The derived upper bound reveals the graph structures (in the graphs of precoded BATS codes) that lead to BP decoding failures. When the number of input symbols is relatively small, e.g., 128, this upper bound is generally tight at the error probability below $0.01$. The accuracy of this upper bound is independent of $q$, i.e., the order of the finite field from which the code is constructed. Even when considering BATS codes without a precode, this upper bound (with a trivial precode of rate $1$) still has some advantages compared to the two recursive formulae in~\cite{FL_analysis_BATS}. On one hand, this upper bound is easier to compute than the \textit{exact recursive formula}~\cite[Theorem 1]{FL_analysis_BATS}. On the other hand, this upper bound is more accurate than the \textit{approximate recursive formula}~\cite[Theorem 16]{FL_analysis_BATS}. To the best of the authors' knowledge, this is the first work which analyzes the BP performance of precoded BATS codes in the finite-length regime.


For ML decoding, the derived upper bound is applicable for BATS codes with any generic precode as long as the weight enumerator, i.e., the number of codewords having a certain Hamming weight, of the precode is known. While there have been previous studies on the performance of joint ML decoding of fountain codes \cite{Wang2016performanceRaptorML, Zhang2017boundsRaptorQ, Lazaro2021bounds}, these bounds cannot be easily extended to our case due to the matrix generalization of LT codes and the RLNCs used in BATS codes. This upper bound is a function of the number $n$ of transmitted batches, and clearly reveals how the error probability of ML decoding decreases exponentially with $n$. Although the upper bound we provide for ML decoding has certain limitations, such as its applicability to BATS codes with a relatively small number of input symbols and low-order finite fields (unlike the upper bound for BP decoding, this upper bound becomes looser as $q$ increases), it represents the first analytical result on the error probability of BATS codes under ML decoding.

Using the two upper bounds to design the objective function, we formulate an optimization problem to optimize the degree distribution of LDPC-precoded BATS (LDPC-BATS) codes. This optimization problem aims to find the degree distribution that minimizes the BP performance upper bound under the constraint that the ML performance upper bound remains below a certain threshold $\epsilon^*$. For $\epsilon^* = 1$, the resulting degree distribution is near optimal for BP decoding. For $\epsilon^* < 1$ and is chosen appropriately, the resulting degree distribution ensures that inactivation decoding has guaranteed performance and low decoding complexity, with the latter achieved through the joint BP and ML decoding property of inactivation decoding. Similar to \cite{FL_analysis_BATS}, this optimization problem can be numerically solved by iterative optimization. In our experience, this approach converges to a degree distribution that is significantly better than the initial one after very few iterations (e.g., $1 \sim 5$). The optimized LDPC-BATS codes outperform the conventional BATS codes (optimized by the existing methods) under BP decoding, ML decoding, or both. For example, when transmitting $128$ packets over a network with packet loss, the optimized LDPC-BATS codes can reduce the transmission overhead to below 50\% of that of the standard BATS codes optimized by~\cite{FL_analysis_BATS}. Furthermore, the optimized LDPC-BATS codes can achieve low complexity in inactivation decoding, even with a relatively low-rate precode.

\subsection{Paper Organization}
The rest of the paper is organized as follows. Section~\ref{sec:Preliminaries} introduces the fundamentals of BATS codes. Section~\ref{Sec_main_results} presents two upper bounds on the error probability of BP decoding and ML decoding, respectively, and provides numerical results as validation. The proofs of the bounds in Section~\ref{Sec_main_results} are provided in appendices. Section~\ref{Sec_opt_DD} presents how to use the two upper bounds to optimize the degree distribution of LDPC-BATS codes. At last, Section~\ref{sec:Conclusion} summarizes the paper.

\subsection{Notations}
\textit{Set and field notations:} The integer set is denoted by $\mathbb{Z}$ and the positive integer set is denoted by $\mathbb{Z}^+$. The ordered integer set from $i$ to $j$ ($j \ge i \ge 0$) is represented by $[i:j]$, i.e., $[i:j] = \{k \in  {\mathbb{Z}}: i \le k \le j\}$. The set difference of ${\cal A}$ and ${\cal B}$ is denoted by ${\cal A} \setminus {\cal B}$, which is the set of all elements belonging to ${\cal A}$ but not ${\cal B}$. Let $\mathbb{F}_q$ denote the finite field consisting of $q$ elements. 

\textit{Vectors and matrix notations:} The indices of vectors and matrices start from $1$. For a vector ${\bf a}$, denote by ${\bf a}[i:j]$ the subvector formed by the components with indices in $[i:j]$. For an $m \times n$ matrix $\bf A$, ${\bf A}[i_1:i_2,j_1:j_2]$ is the $(i_2-i_1+1)\times(j_2-j_1+1)$ submatrix formed by the components with the row indices in $[i_1:i_2]$ and column indices in $[j_1:j_2]$. Define two shorthand notations $[i]$ and $[:]$ related to $[i:j]$, where $[i] \triangleq [i:i]$ and $[:]$ is the universal set of row indices or column indices. Therefore, we can write ${\bf a}[i]$, ${\bf A}[i,j]$, ${\bf A}[:,j_1:j_2]$, ${\bf A}[i_1:i_2,:]$, etc. The rank of a matrix $\bf A$ is denoted by ${\rm rk}({\bf A})$, and the transpose of a matrix $\bf A$ is denoted by ${\bf A}^{\textsf{T}}$.

\section{Preliminaries}\label{sec:Preliminaries}
As a starting point, Sec.~\ref{subsec:BATS}, \ref{subsection:Tanner_std_BATS}, and \ref{subsection:decoding_std_BATS} review the encoding, the Tanner graph representation, and the decoding of BATS codes, respectively. Then, Sec.~\ref{sec_preliminaries_LDPCBATS} provides a natural extension of the three aforementioned aspects for LDPC-BATS codes. In Sec.~\ref{subsction:ensembles}, we define the ensembles for LDPC-BATS codes, which are useful for the proofs of the performance bounds.

\subsection{BATS Codes}\label{subsec:BATS}
Following \cite{BATS}, we first introduce BATS codes without a precode, referred to as \textit{standard BATS codes}. Fix a finite field $\mathbb{F}_q$, referred to as the \textit{base field}. $K$ input symbols ${\bf B}= (B_1,B_2,\ldots,B_K) \in \mathbb{F}_q^K$ are going to be transmitted through a network with erasure channels. Consider a BATS code with $K$ input symbols\footnote{In general, we may consider $K$ input packets, each of which is a vector over the base field. However, considering $K$ input symbols does not affect the performance analysis in this paper.} of the base field and $n$ batches\footnote{In practice, BATS codes can be \textit{rateless}, that is, continuous batches can be transmitted until the decoder declares a decoding success. Here, we consider a fixed number of batches merely for theoretical analysis.}. 

Fix an integer $M \ge 1$ called the \textit{batch size}. The outer code of a BATS code generates a sequence of $n$ batches ${\bf X}_1,{\bf X}_2,\ldots,{\bf X}_n$:
\begin{equation}\label{eq_outer_encode}
	{\bf X}_i = {\bf B}_i {\bf G}_i,
\end{equation}
where ${\bf B}_i$ is a row vector consisting of ${\rm dg}_i$ distinct input symbols chosen from $\bf B$, and ${\bf G}_i$ is a ${\rm dg}_i \times M$ \textit{totally random matrix} over the base field (i.e., all components are independently and uniformly chosen from $\mathbb{F}_q$ at random), called the \textit{generator matrix}\footnote{The concept of a ``generator matrix'' can sometimes be confused with the generator matrix of a precode. In this paper, unless explicitly specified as the ``generator matrix of a precode'', the term ``generator matrix'' refers to ${\bf G}_i$ defined here.}. ${\rm dg}_i$ is called the \textit{degree of the $i$-th batch}. The degrees ${\rm dg}_i$, $i = 1,2,\ldots$, are i.i.d. random variable following a given distribution ${\bm \Psi} = (\Psi_1,\ldots,\Psi_K)$, i.e., $\Pr\{{\rm dg}_i = d\} = \Psi_d$. ${\bm \Psi} $ is called the \textit{degree distribution of the BATS code}. Let ${\cal I}_i \subset [1:K]$ be the index set containing the indices of ${\rm dg}_i$ input symbols in ${\bf B}_i$. The outer encoding equation (\ref{eq_outer_encode}) is a matrix generalization of fountain codes. 

Then, the batches are transmitted through a network, where each node performs RLNC to linearly combine symbols belonging to the same batch. At a destination node, the received row vector of the $i$-th batch can be written as
\begin{equation}\label{eq_batch_equation}
	{\bf Y}_i = {\bf X}_i  {\bf H}_i = {\bf B}_i {\bf G}_i {\bf H}_i,
\end{equation}
where ${\bf H}_i$ is an $M$-row random matrix over the base field called the \textit{transfer matrix}. Each column of ${\bf H}_i$ is a coefficient vector of a linear combination of the $M$ outer-encoded symbols, i.e., ${\bf X}_i$. Thus, the RLNC is also called the inner code of a BATS code. Eq. (\ref{eq_batch_equation}) is referred to as the \textit{batch equation} in this paper. 

Assume that the ranks of transfer matrices are i.i.d., and define $h_k = \Pr\{{\rm rk}({\bf H}_i) = k\}$. We call ${\bf h} = (h_0,h_1,\ldots,h_M)$ the \textit{empirical rank distribution}, which is determined by the network and the RLNC scheme. Define the \textit{rate} of a standard BATS code as $K/n$. The capacity of a destination node with empirical rank distribution $\bf h$, i.e., the maximum rate at which error-free transmission can be achieved, is~\cite{LOC2010}:
\begin{equation}\label{eq:h_capacity}
	\bar{h} \triangleq \sum_{i=1}^{M} i h_i.
\end{equation}
We may consider $\bf h$ as a part of the code. Then, a standard BATS code is specified by a 6-tuple $(K,n,M,{\bm \Psi},{\bf h},q)$, denoted by $\mathscr{C}_{\rm std}^{(q)}(K,n,M,{\bm \Psi},{\bf h})$. Note that $\mathscr{C}_{\rm std}^{(q)}(K,n,M,{\bm \Psi},{\bf h})$ is a \textit{random code}.

\subsection{Tanner Graph Representation of BATS Codes}\label{subsection:Tanner_std_BATS}
An instance of a BATS code can be uniquely represented by a Tanner graph. Here, ``instance'' means that in a Tanner graph, ${\cal I}_i$, ${\bf G}_i$, and ${\bf H}_i$ are fixed for all $i \in [1:n]$. The Tanner graph of a standard BATS code defined from the encoding perspective is shown in Fig.~\ref{fig:Tanner_std_BATS}(a), where a circle represents a variable node (VN), corresponding to an input symbol over $\mathbb{F}_q$; a square represents an outer code constraint ${\bf X}_i = {\bf B}_i {\bf G}_i$; and a triangle represents an inner code constraint ${\bf Y}_i = {\bf X}_i {\bf H}_i$. From the decoding perspective, solving a batch equation is equivalent to solving the combination of a square and the corresponding triangle in Fig.~\ref{fig:Tanner_std_BATS}(a). Thus, we can regard the combination of a square and the corresponding triangle as a single check node (CN), referred to as a BATS CN (B-CN). The Tanner graph defined from the decoding perspective in shown in Fig.~\ref{fig:Tanner_std_BATS}(b), where a square with~$\times$ is a B-CN (i.e., a batch equation).

\begin{figure*}[!t]
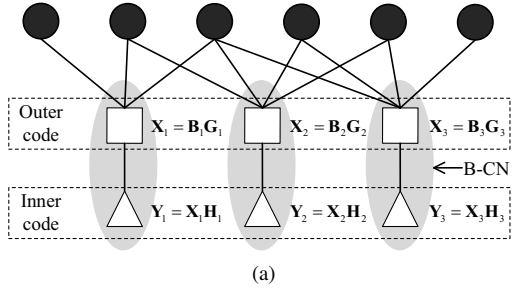

	\centering
	\subfloat[]{\includegraphics[width=2.7in]{./figs/Tanner_Standard_BATS_1}
		\label{}}
	\hfil
	\subfloat[]{\includegraphics[width=2.7in]{./figs/Tanner_Standard_BATS_2}
		\label{}}
	\caption{Tanner graphs of a standard BATS code. (a) Encoding perspective. (b) Decoding perspective.}
	\label{fig:Tanner_std_BATS}
\end{figure*}

\subsection{BP Decoding and ML Decoding of BATS Codes}\label{subsection:decoding_std_BATS}
We first introduce BP decoding of BATS codes. In this paper, since erasure channels are considered, a BP decoder is equivalent to a peeling decoder, which is commonly used to analyze the performance of BP decoding, and which operates based on the Tanner graph defined as Fig.~\ref{fig:Tanner_std_BATS}(b). The peeling decoder progressively removes nodes and edges from the graph. We describe the $K$-step BP decoding for a given number $n$ of batches. At each step $t$, we will remove one decodable input symbol and the edges connected to it. Meanwhile, the generator matrix ${\bf G}_i^{(t)}$, the index set ${\cal I}_i^{(t)}$, and the received vector ${\bf Y}_i^{(t)}$ for each batch $i$ will be updated. Initially, let ${\bf G}_i^{(0)} = {\bf G}_i$ and ${\bf Y}_i^{(0)} = {\bf Y}_i$ for all $i$. We say a batch is \textit{decodable} at step $t$ if ${\rm rk}({\bf G}_i^{(t)} {\bf H}_i) = |{\cal I}_i^{(t)}|$.  We say the $j$-th symbol $B_j$ is \textit{decodable} at step $t$ if there exists a decodable batch $i$ such that $j \in {\cal I}_i^{(t)}$. At each step $t$, the BP decoder operates:

i) Randomly choose a decodable input symbol $B_j$ and remove symbol $B_j$ (i.e., the VN).

ii) For each batch $i$, if $j \in {\cal I}_i^{(t)}$, then ${\cal I}_i^{(t+1)} = {\cal I}_i^{(t)} \backslash \{j\}$, ${\bf G}_i^{(t+1)}$ is formed by removing the row ${\bf g}$ corresponding to the $j$-th symbol $B_j$ in ${\bf G}_i^{(t)}$, and ${\bf Y}_i^{(t+1)} = {\bf Y}_i^{(t)} - B_j {\bf g} {\bf H}_i$; else ${\cal I}_i^{(t+1)} = {\cal I}_i^{(t)}$, ${\bf G}_i^{(t+1)} = {\bf G}_i^{(t)}$, and ${\bf Y}_i^{(t+1)} = {\bf Y}_i^{(t)}$.

iii) For each batch $i$, if $|{\cal I}_i^{(t+1)}| = 0$, then remove the batch $i$ (i.e., the B-CN).

The BP decoding process stops when there is no decodable input symbol. If the BP decoding stops before the step-$K$, there remain undecoded input symbols. In this case, the remaining input symbols are still possible to be recovered by the \textit{ML decoding using Gaussian elimination}, which combines all remaining batches to form a single linear system and solves the linear equations. However, the computational complexity of Gaussian elimination is quite high, proportional to the cube of the number of remaining input symbols. Alternatively, inactivation decoding is an efficient ML erasure decoding for sparse linear systems~\cite{Raptor_codes_foundations_and_trends,InactivationPatent}. 

The decoding process of inactivation decoding is the same as the BP decoding until there is no decodable input symbol in the residual graph. When BP decoding stops at a non-empty residual graph, the inactivation decoder tries to resume the BP decoding process by ``inactivating'' a certain undecoded input symbol. The inactive symbol is marked as ``decoded'' and then removed from the graph. The decoder continuously inactivate undecoded input symbols until decodable input symbols appear. Then the BP decoding continues. The above process repeats until the graph is empty. Let $I_{\rm ina}$ denote the number of inactive symbols after decoding, and $b_1,b_2,\ldots,b_{I_{\rm ina}}$ denote the inactive symbols. If $I_{\rm ina} = 0$, inactivation decoding is the same as BP decoding without inactivation and all input symbols are successfully decoded. Otherwise, each input symbol $B_i$ can be expressed as
\begin{equation}\label{eq_inactive_symbol}
	B_i = c_{i,0} + \sum_{j=1}^{I_{\rm ina}} c_{i,j} b_j,
\end{equation}
where $c_{i,j} \in \mathbb{F}_q$ is the coefficient determined by the decoding process. After decoding, the batches with inactive symbols (i.e., the batch has an input symbol $B_i$ expressed with at least one non-zero coefficient $c_{i,j}$, $j \ge 1$) impose linear constraints on inactive symbols. This linear system of inactive symbols is usually solved by Gaussian elimination. If the inactive symbols can be uniquely solved, all input symbols can be recovered by substituting the inactive symbols back into (\ref{eq_inactive_symbol}) for all $i$.

\subsection{LDPC-BATS Codes: Encoding, Decoding, and Tanner Graphs}\label{sec_preliminaries_LDPCBATS}
LDPC codes are commonly used as a precode for fountain codes, e.g., they have been adopted as one of the precodes\footnote{Note that R10 and RaptorQ codes adopt both an LDPC precode and a high-density parity-check precode, and thus the actual precode is a serial concatenation of the two precodes.} in standardized R10 and RaptorQ codes~\cite{Raptor_codes_foundations_and_trends}. For the theoretical analysis in this paper, we only consider regular LDPC codes as the precode. However, when it comes to encoding and decoding algorithms, we can easily provide a unified description for LDPC-BATS codes, whether regular or irregular LDPC codes are used as the precode.

Consider a $(K,K')$ LDPC code over the same base field $\mathbb{F}_q$ with dimension $K'$ and code rate $R' = \frac{K'}{K}$. In the rest of this paper, we use \textit{input symbols} and \textit{intermediate symbols} to respectively represent the symbols before and after precoding, avoiding any confusion on the coding efficiency. The encoding process of an LDPC-BATS code includes the following steps: 

i) Encode $K'$ input symbols using the $(K,K')$ LDPC code to $K$ LDPC-coded symbols (i.e., intermediate symbols).

ii) Encode the $K$ LDPC-coded symbols into $n$ batches using the encoding process described in Sec.~\ref{subsec:BATS}. 

Considering the overhead of the precode, the \textit{rate} of an LDPC-BATS code is defined as $R = R'K/n = K'/n$. For an $(\mathtt{l},\mathtt{r})$-regular LDPC code (i.e., the parity-check matrix of the LDPC code has the same column weight $\mathtt{l}$ and the same row weight $\mathtt{r}$), the rate satisfies $R' \ge 1-{\mathtt{l}/\mathtt{r}}$. This bound for $R'$ is generally tight, and we can easily construct a regular LDPC code exactly with rate $1-{\mathtt{l}/\mathtt{r}}$. When a regular LDPC code is used as precode, the LDPC-BATS code is denoted by $\tilde{\mathscr{C}}_{\rm pre}^{(q)}(K,n,M,{\bm \Psi},{\bf h},\mathtt{l},\mathtt{r})$. 

There is an equivalence between a \textit{batch equation} and an \textit{LDPC parity-check equation}. This equivalence implies that the BP decoder for any LDPC-BATS codes operates in the same way as that for standard BATS codes. Let ${\bf Q} = [q_{i,j}]_{1 \le i \le K'-K, 1 \le j \le K'}$ be the parity-check matrix of the LDPC code, where $q_{i,j} \in \mathbb{F}_q$. 
Let $\mathtt{r}_i = \sum_{j=1}^{K} \mathds{1}_{\{q_{i,j} \ne 0\}}$ be the weight of the $i$-th row of $\bf Q$, where $\mathds{1}_{\{E\}}$ is the indicator function, which equals one if the event $E$ is true and otherwise zero. The $i$-th parity-check equation is equivalent to the batch equation ${\bf B}_{i}' {\bf G}_{i}' {\bf H}_{i}' = {\bf Y}_{i}'$, where ${\bf B}_{i}'$ is the row vector of $\mathtt{r}_i$ associated intermediate symbols, ${\bf G}_{i}'$ is a column vector consisting of $\mathtt{r}_i$ non-zero elements of the $i$-th row of $\bf Q$, ${\bf H}_{i}'$ is the constant integer $1$, and ${\bf Y}_{i}'$ is the constant integer $0$. Hence, concatenating an LDPC precode is equivalent to adding $K-K'$ extra ``batches'' of size one into the original BATS code. 

By setting ${\bf B}_{i+n} = {\bf B}_{i}'$, ${\bf G}_{i+n} = {\bf G}_{i}'$, ${\bf H}_{i+n} = {\bf H}_{i}'$, and ${\bf Y}_{i+n} = {\bf Y}_{i}'$ for $i \in [1:K-K']$, we have the Tanner graph of an LDPC-BATS code, shown in Fig.~\ref{fig:Tanner_LDPCBATS}. In the Tanner graph of an LDPC-BATS code, the extra $K-K'$ ``batches equations'', which are essentially LDPC parity-check equations, are referred to as LDPC CNs (L-CNs). Based on such a Tanner graph, the BP decoding introduced in Sec.~\ref{subsection:decoding_std_BATS} can jointly decode $K-K'+n$ ``batches'', but one should note that the concept of input symbols in Sec.~\ref{subsection:decoding_std_BATS} needs to be replaced by intermediate symbols. If BP decoding fails, inactivation decoding of an LDPC-BATS code also has the same process as the decoding process described in Sec.~\ref{subsection:decoding_std_BATS}.

\begin{figure}[!t]
	\centering
	\includegraphics[width=3in]{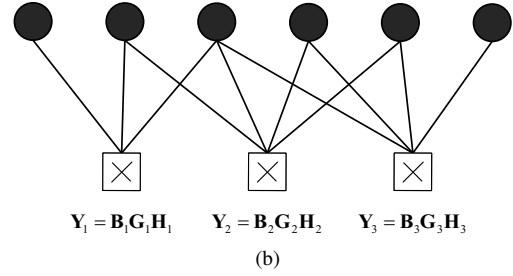}
	\caption{The Tanner graph of an LDPC-BATS code with a $(2,5)$-regular LDPC precode.}
	\label{fig:Tanner_LDPCBATS}
\end{figure}

\subsection{Ensembles}\label{subsction:ensembles}
Following \cite{BATS}, the previous Sec.~\ref{subsec:BATS} introduces BATS codes from the perspective of random codes, i.e., following a probability model. So, a decoder observes an instance of the BATS code (i.e., a specific Tanner graph) with a certain probability. Equivalently, we can define a set consisting of specific Tanner graphs, called the \textit{ensemble}\cite{Ridchardson2001,FL_analysis_LDPC,Weight_distribution_LDPC,Weight_distribution_NBLDPC}, and then a realization of a random code becomes a sampling from the ensemble\footnote{Probability distribution may be assigned to the elements in an ensemble, so the sampling is not necessarily uniform. But for the ensembles defined in this paper, uniform distribution for the elements in the ensemble is preserved.}. Although there is no essential difference between the probability model-based analysis and the ensemble-based analysis of a random code, the latter facilitates a clear expression of the finite-length analysis in this paper.

The $(K,n,M,{\bm \Psi},{\bf h})_q$ BATS ensemble equivalent to the standard BATS code $\mathscr{C}_{\rm std}^{(q)}(K,n,M,{\bm \Psi},{\bf h})$ is defined as follows.

\begin{definition}[BATS Ensemble]\label{def:BATS_ensemble}
	The $(K,n,M,{\bm \Psi},{\bf h})_q$ BATS ensemble is a set of Tanner graphs of the form shown in Fig.~\ref{fig:Tanner_std_BATS}(b). Each Tanner graph in this ensemble has $K$ VNs and $n$ B-CNs. The $(K,0,M,{\bm \Psi},{\bf h})_q$ BATS ensemble contains a single element: a trivial graph consisting only of $K$ VNs. To sample an element $\cal T$ from the $(K,n,M,{\bm \Psi},{\bf h})_q$ BATS ensemble, first sample an element ${\cal T}'$ from the $(K,n-1,M,{\bm \Psi},{\bf h})_q$ BATS ensemble. Next, sample an integer ${\rm dg}$ from $[1:K]$ with $\Pr\{{\rm dg} = d\} = \Psi_d$, and add a B-CN to ${\cal T}'$ to connect ${\rm dg}$ distinct VNs that are chosen uniformly at random. Then, sample a ${\rm dg} \times M$ matrix $\bf G$ from $\mathbb{F}_q^{{\rm dg} \times M}$ uniformly at random. Subsequently, sample an $M$-row matrix $\bf H$ from $\bigcup_{i=0}^{\infty} \mathbb{F}_q^{M \times i}$ with $\Pr\{{\rm rk}({\bf H}) = k\} = h_k$. Finally, associate $\bf G$ and $\bf H$ with the newly added B-CN.
\end{definition}

For a practical LDPC-BATS code, usually the LDPC precode is fixed, which may be specifically constructed to avoid small weaknesses in the graph. However, the finite-length analysis for a fixed LDPC code is difficult. Instead, we analyze the finite-length performance of a \textit{random LDPC-BATS code} whose LDPC precode is chosen from a $q$-ary regular LDPC ensemble uniformly at random. Here, the $q$-ary regular LDPC ensemble is the same as the well-known binary regular LDPC ensemble~\cite[Sec. II-A]{Ridchardson2001} except that each edge is labeled by an element uniformly chosen from $\mathbb{F}_q \setminus \{0\}$. To be specific, the $q$-ary regular LDPC ensemble with $K$ VNs, VN degree $\mathtt{l}$, and CN degree $\mathtt{r}$ is called the $(K,\mathtt{l},\mathtt{r})_q$ LDPC ensemble.

Furthermore, we can \textit{expurgate} the $(K,\mathtt{l},\mathtt{r})_q$ LDPC ensemble to eliminate all stopping sets (see \cite{FL_analysis_LDPC} for the definition of stopping sets) of size in the range $[1:\nu_{\rm min}-1]$ by declaring some bits in the stopping set to be ``dummy'' bits~\cite{Weight_distribution_LDPC}.\footnote{Note that expurgation can be realized by different methods. In \cite{1023273,Johnson2009finite}, expurgation is the direct removal of some specific Tanner graphs in the original ensemble, so the cardinality of the expurgated ensemble is smaller than the original ensemble. In \cite{Weight_distribution_LDPC}, expurgation is accomplished by introducing ``dummy'' bits. If a randomly chosen bit in a stopping set of size $s$ is declared to be ``dummy'', this stopping set of size $s$ disappears. The expurgated ensemble constructed by the latter method has the same cardinality as the original ensemble.} This ensemble is denoted by the $(K,\mathtt{l},\mathtt{r},\nu_{\rm min})_q$ \textit{expurgated} LDPC ensemble.

We define the $(K,n,M,{\bm \Psi},{\bf h},\mathtt{l},\mathtt{r},\nu_{\rm min})_q$ LDPC-BATS ensemble as follows. The random LDPC-BATS code equivalent to this ensemble is denoted by ${\mathscr{C}}_{\rm pre}^{(q)}(K,n,M,{\bm \Psi},{\bf h},\mathtt{l},\mathtt{r},\nu_{\rm min})$.

\begin{definition}[LDPC-BATS Ensemble]\label{def:LDPC_BATS_ensemble}
	The $(K,n,M,{\bm \Psi},{\bf h},\mathtt{l},\mathtt{r},\nu_{\rm min})_q$ LDPC-BATS ensemble is a set of Tanner graphs of the form shown in Fig.~\ref{fig:Tanner_LDPCBATS}. Each Tanner graph in this ensemble has $K$ VNs, $n$ B-CNs, and $K\mathtt{l}/\mathtt{r}$ L-CNs. To sample an element $\tilde{\cal T}$ from this ensemble, first sample a Tanner graph $\cal G$ from the $(K,\mathtt{l},\mathtt{r},\nu_{\rm min})_q$ expurgated LDPC ensemble. Then, sample a Tanner graph $\cal T$ from the $(K,n,M,{\bm \Psi},{\bf h})_q$ BATS ensemble. Finally, $\tilde{\cal T}$ is obtained by sharing the $K$ VNs in ${\cal G}$ and ${\cal T}$.
\end{definition}

At the end of this subsection, we briefly discuss reasonable choices of the expurgation parameter $\nu_{\rm min}$. The following lemma gives the expected stopping set distribution for the regular LDPC ensemble.
\begin{lemma}[\cite{Weight_distribution_LDPC}]
	Consider a graph $\cal G$ in the $(K,\mathtt{l},\mathtt{r})_q$ LDPC ensemble. Let $A_{\rm ss}({\cal G},w)$ be the number of stopping sets of size $w$ in the code defined by $\cal G$. Then,
	\begin{equation}\label{eq_expected_ss_dist}
		\mathbb{E}\left[ A_{\rm ss}({\cal G},w) \right] = \binom{K}{w}\frac{{\rm coef}\left\{\left((1+x)^\mathtt{r}-\mathtt{r}x\right)^{K\mathtt{l}/\mathtt{r}}, x^{w\mathtt{l}}\right\}}{\binom{n\mathtt{l}}{w\mathtt{l}}}.
	\end{equation}
\end{lemma}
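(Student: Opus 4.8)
The plan is to reduce the expectation to a single socket-counting problem in the configuration model underlying the $(K,\mathtt{l},\mathtt{r})_q$ LDPC ensemble. First I would record that whether a set $S$ of VNs is a stopping set is a property of the VN--CN incidence structure alone: $S$ is a stopping set precisely when every CN adjacent to $S$ is joined to $S$ by at least two edges. In particular it does not depend on the $\mathbb{F}_q\setminus\{0\}$ edge labels, so for this lemma the $q$-ary ensemble behaves exactly like the binary one and the labels may be discarded at the outset. By linearity of expectation, $\mathbb{E}[A_{\rm ss}({\cal G},w)]=\sum_{S:\,|S|=w}\Pr\{S\text{ is a stopping set}\}$, and since the ensemble is invariant under permutations of the VNs this probability is the same for every $w$-subset; hence $\mathbb{E}[A_{\rm ss}({\cal G},w)]=\binom{K}{w}\,p_w$, where $p_w:=\Pr\{S_0\text{ is a stopping set}\}$ for a fixed set $S_0$ of $w$ VNs.

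Next I would compute $p_w$ in the socket model: the Tanner graph is a uniform perfect matching between the $K\mathtt{l}$ VN-sockets and the $K\mathtt{l}$ CN-sockets, there being $m:=K\mathtt{l}/\mathtt{r}$ CNs, each contributing $\mathtt{r}$ sockets. The $w\mathtt{l}$ sockets carried by the VNs of $S_0$ are matched onto some size-$(w\mathtt{l})$ set $T$ of CN-sockets, and a short symmetry count (each target $T$ admits $(w\mathtt{l})!\,(K\mathtt{l}-w\mathtt{l})!$ completing matchings) shows that $T$ is uniformly distributed over all $\binom{K\mathtt{l}}{w\mathtt{l}}$ such sets. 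The event that $S_0$ is a stopping set is exactly the event that $T$ contains $0$ or at least $2$ sockets of each CN (a parallel edge from one VN of $S_0$ to a CN simply contributes $\ge 2$ and is permitted). Hence $p_w=N(w)/\binom{K\mathtt{l}}{w\mathtt{l}}$, where $N(w)$ is the number of size-$(w\mathtt{l})$ subsets of the $K\mathtt{l}$ CN-sockets that meet no CN in exactly one socket.

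The count $N(w)$ then follows from a generating-function bookkeeping. Selecting $j$ of the $\mathtt{r}$ sockets of a single CN can be done in $\binom{\mathtt{r}}{j}$ ways, so, removing the forbidden value $j=1$, the per-CN polynomial with $x$ marking the number of selected sockets is $\sum_{j\neq 1}\binom{\mathtt{r}}{j}x^j=(1+x)^{\mathtt{r}}-\mathtt{r}x$. Multiplying over the $m$ CNs gives $\bigl((1+x)^{\mathtt{r}}-\mathtt{r}x\bigr)^{K\mathtt{l}/\mathtt{r}}$, so that $N(w)={\rm coef}\{\bigl((1+x)^{\mathtt{r}}-\mathtt{r}x\bigr)^{K\mathtt{l}/\mathtt{r}},\,x^{w\mathtt{l}}\}$. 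Substituting this and the value of $p_w$ into $\mathbb{E}[A_{\rm ss}({\cal G},w)]=\binom{K}{w}p_w$ yields the claimed identity; here the denominator is $\binom{K\mathtt{l}}{w\mathtt{l}}$, namely the total number $K\mathtt{l}=m\mathtt{r}$ of edges chosen $w\mathtt{l}$ at a time (the ``$n$'' appearing in the displayed formula is the LDPC blocklength, which is $K$ in the notation of this paper).

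I expect the only subtle step to be the reduction itself --- checking that the induced law of the covered CN-socket set $T$ is genuinely uniform and that ``$S_0$ is a stopping set'' is equivalent, verbatim, to ``$T$ meets no CN in exactly one socket'', with the bookkeeping for parallel edges handled correctly. Everything downstream is the routine coefficient extraction above, and the argument is, in effect, the classical regular-ensemble stopping-set enumeration of \cite{Weight_distribution_LDPC}.
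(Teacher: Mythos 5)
Your proof is correct. The paper itself gives no proof of this lemma---it is imported verbatim from \cite{Weight_distribution_LDPC}---and your argument is exactly the standard configuration-model derivation used there: reduce to a single $w$-subset by linearity and VN-symmetry, observe that the covered CN-socket set is uniform of size $w\mathtt{l}$, and count admissible socket subsets via the per-CN generating polynomial $(1+x)^{\mathtt{r}}-\mathtt{r}x$. Your two side remarks (that the $\mathbb{F}_q\setminus\{0\}$ edge labels are irrelevant to the stopping-set property, and that the ``$n$'' in the displayed denominator is the LDPC blocklength, i.e.\ $K$ in this paper's notation) both correctly resolve the only points where the statement as printed could cause confusion.
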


Note that the original version of the above lemma is for the $(K,\mathtt{l},\mathtt{r})_2$ binary LDPC ensemble, but changing binary to $q$-ary does not affect this result\footnote{This modification is equivalent to multiplying both the numerator and denominator in (\ref{eq_expected_ss_dist}) by a factor of $(q-1)^{w\mathtt{l}}$.}. As pointed out in \cite[p. 138]{Modern_Coding_Theory}, there exist graphs in the $(K,\mathtt{l},\mathtt{r})_q$ LDPC ensemble which do not contain stopping sets of size smaller than $\nu$ if $\sum_{w=1}^{\nu-1} \mathbb{E}\left[ A_{\rm ss}({\cal G},w) \right] < 1$. Therefore, the maximum expurgation parameter we may choose is
\begin{equation}\label{eq:max_nu_min}
	\nu_{\rm min}^* \triangleq \sup\left\{\nu \in \mathbb{Z}^+: \sum_{w=1}^{\nu-1} \mathbb{E}\left[ A_{\rm ss}({\cal G},w) \right] < 1\right\}.
\end{equation}

\section{Performance Bounds for BP and ML Decoding}\label{Sec_main_results}
This section presents two upper bounds on the error probability of LDPC-BATS codes: Theorem \ref{theorem_LDPC_BATS_BP} about a bound for the error probability of random LDPC-BATS codes under BP decoding, and Theorem \ref{theorem_LDPC_BATS_INA} about a bound for the error probability of random LDPC-BATS codes under ML decoding. When the regular LDPC precode has rate $1$, i.e., set $\mathtt{l} = 0$, the aforementioned bounds reduce to the bounds for standard BATS codes. Furthermore, Theorem \ref{theorem_LDPC_BATS_INA} can, in fact, be applied to BATS codes with \textit{any} precode as long as the precode's weight enumerator is known, because the precode does not play a role in the proof of Theorem \ref{theorem_LDPC_BATS_INA} except for its weight enumerator.

The numerical results to verify these bounds are provided in Sec.~\ref{subsection:numerical_1}--\ref{subsection:numerical_3}. Subsequently, in Sec.~\ref{Sec_opt_DD}, we present a practical application of these two upper bounds: Optimizing the degree distribution $\bm \Psi$ for LDPC-BATS codes.

\subsection{Upper Bounds}
Before presenting the performance upper bounds, we introduce some useful functions and symbols. Recalling that ${\bf \Psi} = (\Psi_1,\Psi_2,\ldots,\Psi_K)$ is the degree distribution of BATS codes, let $D_{\rm min}$ be the smallest degree $d$ with $\Psi_d > 0$. Recall that ${\bf h} = (h_0,h_1,\ldots,h_M)$ is the \textit{empirical rank distribution} of the transfer matrix, i.e., $\Pr\{{\rm rk}({\bf H}_i) = r\} = h_r$, and assume $\bf h$ is known. Define 
\begin{equation}\label{eq_zeta_function}
	\begin{small}
	\begin{aligned}
	\zeta_r^m = \begin{cases}
		(1-q^{-m})(1-q^{-m+1})\cdots(1-q^{-m+r-1}) & \text{if }r > 0,\\
		1 &\text{if }r=0,
	\end{cases}
	\end{aligned}
	\end{small}
\end{equation}
and
\begin{equation}\label{eq_hbar_prime}
	\hbar_k' = \begin{cases}
		\sum_{s=k}^{M} \zeta_k^s h_s & \text{if }k \in [0:M],\\
		0 & \text{if }k > M.
	\end{cases}
\end{equation}	
According to \cite{BATS,FL_analysis_BATS}, $\zeta_r^m$ is the probability that $r$ vectors independently picked from $\mathbb{F}_q^m$ are linearly independent, and $\hbar_k'$ is the probability that a batch with degree $k$ is decodable.

The p.m.f. of the hypergeometric distribution is denoted by
\begin{multline}\nonumber
			{\rm hyge}(k;n,i,j) \\ \triangleq 
			\begin{cases}
			\frac{\binom{i}{k}\binom{n-i}{j-k}}{\binom{n}{j}} & \text{if }\max\{0,i+j-n\} \le k \le \min\{i,j\},\\
			0 & \mbox{otherwise.}
		\end{cases}
\end{multline}

Let ${\rm coef}\left\{f(x),x^i\right\}$ denote the coefficient of $x^i$ in $f(x)$, e.g., ${\rm coef}\left\{2x^2+x, x^2\right\} = 2$. 

\begin{theorem}\label{theorem_LDPC_BATS_BP}
	Let $P_{\textsf{E}}^{\rm BP}({\mathscr{C}}_{\rm pre})$ denote the error probability of BP decoding of the random LDPC-BATS code ${\mathscr{C}}_{\rm pre}^{(q)}(K,n,M,{\bm \Psi},{\bf h},\mathtt{l},\mathtt{r},\nu_{\rm min})$. $P_{\textsf{E}}^{\rm BP}({\mathscr{C}}_{\rm pre})$ can be bounded by
	\begin{equation}\label{eq_theorem_2}
		\begin{aligned}
			P_{\textsf{E}}^{\rm BP}({\mathscr{C}}_{\rm pre}) \le& \min\biggg\{ 1,~\sum_{a = 1}^{K} \sum_{b = 0}^{n} \binom{K}{a} \binom{n}{b} \\
			&\cdot \left(\sum_{d = 1}^{K} \varphi_{d,K-a}\right)^{n-b} \left(\sum_{d=1}^{K} \Psi_d \sum_{k=1}^{\min\{a,d\}} \xi_{d,k} \right)^b \\
			&\cdot \biggg( \sum_{j=0}^{s_{\rm max}(a)} L(a,j;K,\mathtt{l},\mathtt{r}) \\
			&\cdot U(K-a-j;K-a,n-b,{\bm \Phi}^{(K-a)}) \biggg) \biggg\}.
		\end{aligned}
	\end{equation}
	Here, $s_{\rm max}(a) = \left\lfloor \frac{(K-a)\mathtt{l}}{\mathtt{r}} \right\rfloor$.	For $d \in [1:K]$, $k \in [0:K]$,
	\begin{equation}\label{eq_varphi}
		\varphi_{d,k} = \Psi_d \frac{\binom{k}{d}}{\binom{K}{d}},
	\end{equation}
	\begin{equation}\label{eq_xi}
		\xi_{d,k}= (1-\hbar_k') {\rm hyge}(k;K,a,d).
	\end{equation}
	For $k \in [0:K]$,
	\begin{equation}\label{eq_phi_vector}
		{\bm \Phi}^{(k)} = ({\Phi}_{0}^{(k)},{\Phi}_{1}^{(k)},{\Phi}_{2}^{(k)},\ldots,{\Phi}_{k}^{(k)}),
	\end{equation}
	where for $d \in [0:K]$,
	\begin{equation}\label{eq_Phi}
		{\Phi}_{d}^{(k)} = \begin{cases}
			\frac{\varphi_{d,k}}{\sum_{d' = 1}^{k} \varphi_{d',k}} &\text{if }d \ge 1 \text{ and } k \ge D_{\rm min},\\
			1 &\text{if }d=0 \text{ and } k < D_{\rm min},\\
			0 & \text{otherwise.}
		\end{cases} 
	\end{equation}
	For $j \in [0:m]$, $m,s \in \mathbb{Z}$,
	\begin{multline}\label{eq_U}
		U(j;m,s,{\bm \Phi}^{(m)}) \\ = 
		\binom{m}{j} \sum_{k=0}^{j} (-1)^{j-k} \binom{j}{k} \left(\sum_{d = 0}^{k} \Phi_d^{(m)} \frac{\binom{k}{d}}{\binom{m}{d}}\right)^s.
	\end{multline}	
	For $j \in [0:s_{\rm max}(a)]$,
	\begin{multline}
		L(a,j;K,\mathtt{l},\mathtt{r}) \\= \sum_{c = 0}^{K\mathtt{l}/\mathtt{r}} \min \Bigg\{ \binom{K\mathtt{l}/\mathtt{r}}{c} \hat{p}(a,0,c),
		\binom{K\mathtt{l}/\mathtt{r}}{j+c} \hat{p}(a,j,c) \Bigg\},
	\end{multline}
where
\begin{equation}\label{eq:hat_p}
	\begin{aligned}
		&\hat{p}(a,a',c) =\\
		&\begin{cases}
			\begin{footnotesize}	
			\begin{aligned}
			&\frac{{\rm coef}\{ ((1+x)^\mathtt{r}-1-\mathtt{r}x)^{c},x^{a\mathtt{l}} \}}{\binom{K\mathtt{l}}{a\mathtt{l}}} \\
			&\cdot \prod_{t=1}^{a'} \frac{t(c+t) \mathtt{r} \binom{(K\mathtt{l}/\mathtt{r}-a'+t-1)\mathtt{r}-(a+t-1)\mathtt{l}}{\mathtt{l}-1} (\mathtt{l})!}{\prod_{i=0}^{\mathtt{l}-1} (K\mathtt{l}-(a+t-1)\mathtt{l}-i)} 
			\end{aligned}
		    \end{footnotesize}
			&\begin{aligned}
				&\text{if }a \ge \nu_{\rm min}\\ &~~\text{and } a' \ge 1,
			\end{aligned}\\[1cm]
			\begin{footnotesize}
			\begin{aligned}
			\frac{{\rm coef}\{ ((1+x)^\mathtt{r}-1-\mathtt{r}x)^{c},x^{a\mathtt{l}} \}}{\binom{K\mathtt{l}}{a\mathtt{l}}} 
			\end{aligned}
		    \end{footnotesize}
			&\begin{aligned}
					&\text{if }a \ge \nu_{\rm min}\\ &~~\text{and } a' = 0,
			\end{aligned}\\
			0 & \text{if }a < \nu_{\rm min}.	
		\end{cases}
	\end{aligned}
\end{equation}
\end{theorem}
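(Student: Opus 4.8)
The plan is to bound $P_{\textsf{E}}^{\rm BP}$ by a union bound over the possible ``stopping-set-like'' structures in the joint Tanner graph of the LDPC-BATS code that cause BP (peeling) decoding to halt before all $K$ intermediate symbols are recovered. First I would characterize a BP decoding failure: peeling stops with a nonempty residual graph precisely when there is a nonempty set $\mathcal S$ of intermediate symbols such that \emph{every} B-CN and \emph{every} L-CN is ``useless'' with respect to $\mathcal S$ — i.e., each B-CN either has no neighbor outside $\mathcal S$ or, restricted to its neighbors in $\mathcal S$, is not decodable (rank of its reduced generator-transfer product is strictly less than the number of such neighbors), and each L-CN has either zero or $\ge 2$ neighbors in $\mathcal S$ (the usual LDPC stopping-set condition, which on the $q$-ary ensemble additionally must be compatible with the edge labels). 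So a failure is witnessed by such a set $\mathcal S$; let $a = K - |\mathcal S|$ be the number of symbols \emph{outside} $\mathcal S$ (equivalently, the number eventually peeled in the ``worst'' successful partial decoding), and $b$ the number of B-CNs that touch $\mathcal S$. The union bound then sums over $a$, over $b$, over the choice of which $a$ symbols lie outside and which $b$ B-CNs touch $\mathcal S$ (giving the $\binom{K}{a}\binom{n}{b}$ factor), and finally over the LDPC-side substructure.

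Next I would compute, for a fixed such configuration, the probability (over the random code) that it actually arises. The $n-b$ B-CNs that do \emph{not} touch $\mathcal S$ must have all their degree-$d$ neighbors among the $a = K-|\mathcal S|$ outside symbols; for a single B-CN this has probability $\sum_d \Psi_d \binom{K-a}{d}/\binom{K}{d} = \sum_d \varphi_{d,K-a}$, which explains the $(\sum_d \varphi_{d,K-a})^{n-b}$ factor (independence across B-CNs). The $b$ B-CNs that \emph{do} touch $\mathcal S$ must each fail to be decodable on their $\mathcal S$-neighbors: conditioning on a B-CN having degree $d$ and exactly $k$ of its $d$ neighbors in $\mathcal S$ — probability ${\rm hyge}(k;K,a,d)$ after accounting that $a$ symbols are outside — the probability that it is not decodable on those $k$ symbols is $1-\hbar_k'$ by the stated interpretation of $\hbar_k'$; summing gives $\sum_d \Psi_d \sum_k \xi_{d,k}$ and hence the factor raised to the $b$. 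The remaining bracket must bound the contribution of: (i) how many intermediate symbols among the $K-a$ in $\mathcal S$ are ``covered only through the LDPC side'' — handled by the term $U(\cdot)$, which is an inclusion–exclusion expression (the alternating $\sum_k (-1)^{j-k}\binom jk$) counting configurations in which a residual BATS-only sub-ensemble on $K-a$ symbols and $n-b$ B-CNs leaves a deficit, driven by the renormalized degree distribution ${\bm \Phi}^{(K-a)}$; and (ii) the LDPC stopping-set count $L(a,j;K,\mathtt l,\mathtt r)$, obtained from the expected stopping-set enumerator of the expurgated $q$-ary regular LDPC ensemble (cf. the displayed Lemma and the coefficient-extraction in $\hat p$), with the two-sided $\min$ reflecting a tighter of two counting bounds (counting via the $c$ ``touched'' L-CNs on the all-$\mathcal S$ side versus the $j+c$ side) and the vanishing for $a<\nu_{\rm min}$ coming from expurgation.

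I would then assemble these factors: by the union bound, $P_{\textsf{E}}^{\rm BP}$ is at most the sum over $a,b$ of (number of configurations) $\times$ (probability of each), which is exactly the right-hand side before taking $\min\{1,\cdot\}$; the outer $\min\{1,\cdot\}$ is trivially valid since a probability never exceeds $1$. The key steps in order are: (1) prove the combinatorial failure characterization (nonempty ``BP-stopping set'' in the joint graph); (2) derive the per-B-CN probabilities for the touching / non-touching cases using $\varphi$, ${\rm hyge}$, and $\hbar_k'$; (3) derive, via inclusion–exclusion, the quantity $U(\cdot)$ bounding the BATS-only residual deficit on $K-a$ symbols; (4) invoke the expected stopping-set distribution of the expurgated $q$-ary regular LDPC ensemble to produce $L(a,j;K,\mathtt l,\mathtt r)$ and $\hat p$; (5) multiply and sum, verifying the independence that makes the product form legitimate, and apply the union bound.

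The main obstacle will be step (1) combined with the bookkeeping in steps (3)–(4): a BP decoding failure on the \emph{joint} graph is not simply ``a BATS stopping set OR an LDPC stopping set'', because partially-decoded symbols propagate between the two sides, so one must show that \emph{any} halting residual graph contains a set $\mathcal S$ satisfying the joint condition, and then carefully avoid double counting when the same failure is witnessed by configurations with different $(a,b,j,c)$ — this is why the bound is a (possibly loose) union bound rather than an equality, and why the $\min$'s appear inside $L$. Getting the renormalization ${\bm \Phi}^{(K-a)}$ correct (conditioning the degree distribution on a B-CN landing entirely inside the $K-a$ symbols that are \emph{not} already removed) and matching it to the inclusion–exclusion form of $U$ is the most delicate routine part; everything else is a direct, if lengthy, probability computation.
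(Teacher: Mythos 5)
Your overall architecture is the paper's: characterize BP failure by a stopping structure $({\cal A},{\cal B})$ in the joint Tanner graph, take a union bound over the $\binom{K}{a}\binom{n}{b}$ placements, factor the probability into the three independent/conditional pieces (B-CNs outside ${\cal B}$ avoiding ${\cal A}$, B-CNs in ${\cal B}$ being rank-deficient on their ${\cal A}$-neighbors, and the coverage of the complement), and obtain $U(\cdot)$ by inclusion--exclusion on the union of the neighborhoods of the $n-b$ surviving B-CNs. Two problems remain, one cosmetic and one substantive. The cosmetic one: you define $a=K-|{\cal S}|$ as the number of symbols \emph{outside} the residual set, yet immediately use $\varphi_{d,K-a}=\Psi_d\binom{K-a}{d}/\binom{K}{d}$ for a B-CN landing entirely outside ${\cal S}$ and ${\rm hyge}(k;K,a,d)$ for $k$ neighbors landing \emph{inside} ${\cal S}$; these formulas force $a=|{\cal S}|$, so your convention is internally inconsistent (in the paper $a=|{\cal A}|$ is the size of the undecoded set).

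The substantive gap is your step (4). You propose to obtain $L(a,j;K,\mathtt{l},\mathtt{r})$ ``from the expected stopping-set enumerator of the expurgated ensemble,'' but that enumerator only yields $\hat{p}(a,0,c)$, i.e., the probability that ${\cal A}$ alone is a stopping set with $c$ fixed neighboring L-CNs. It cannot produce $\hat{p}(a,j,c)$ for $j\ge 1$, whose product over $t\in[1:j]$ counts something different: the number of ways to attach $j$ additional VNs so that each one carries a reserved degree-one ``peeling'' L-CN, i.e., so that $({\cal A},{\cal X})$ is an \emph{extended stopping set} whose maximal stopping set is ${\cal A}$. Two things are missing to justify inserting $\sum_j L(a,j)\,U(K-a-j;\cdots)$ at all. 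First, a necessity argument on the \emph{complement} of the residual set: the $j$ VNs outside ${\cal A}$ that have no B-CN neighbor outside ${\cal B}$ were in fact decoded, and constraint ${\rm C2}$ implies they cannot have been decoded by any B-CN in ${\cal B}$, so each must have been peeled through a degree-one L-CN in the reverse reconstruction; this is exactly the extended-stopping-set condition (and also yields the cap $j\le s_{\rm max}(a)$ from socket counting, which your sketch never produces). Second, a separate recursive constellation-counting bound on the probability of that event, which is where the factor $t(c+t)\mathtt{r}\binom{(K\mathtt{l}/\mathtt{r}-j+t-1)\mathtt{r}-(a+t-1)\mathtt{l}}{\mathtt{l}-1}(\mathtt{l})!/\prod_i(K\mathtt{l}-(a+t-1)\mathtt{l}-i)$ in $\hat{p}$ and the two-sided $\min$ in $L$ actually come from. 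Without these, your argument only supports the weaker bound in which the bracket is replaced by the probability that ${\cal A}$ is an ordinary LDPC stopping set, which is not the stated theorem. (Relatedly, the $\min$ inside $L$ is a tightening of the extended-stopping-set probability by the plain stopping-set probability, not a device for removing double counting across $(a,b,j,c)$ as you suggest.)
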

\begin{proof}
	The proof is given in Appendix~\ref{Sec_Proof_Theorem_BP}. Here is the sketch. First of all, we need to understand what kind of graph structures lead to BP decoding failures. Assume that $\cal A$ is the set containing the remaining VNs after BP (i.e., peeling) decoding. On one hand, we immediately know that $\cal A$ is a stopping set \cite{FL_analysis_LDPC} of the LDPC precode; otherwise, some VNs in $\cal A$ can be decoded by the LDPC precode. On the other hand, $\cal A$ does not have neighboring B-CNs that are decodable. Informally, we say that $\cal A$ is also a ``stopping set of the BATS code''. Assume that ${\bf Y} = {\bf B} {\bf G} {\bf H}$ is the original batch equation of a certain B-CN that is a neighbor of $\cal A$. After peeling all VNs except for those in ${\cal A}$, this batch equation becomes ${\bf Y} - {\bf B}^{({\cal A}^c)} {\bf G}^{({\cal A}^c)} {\bf H} = {\bf B}^{(\cal A)} {\bf G}^{(\cal A)} {\bf H}$, where ${\cal S}^c$ is the complement set of $\cal S$, and ${\bf B}^{(\cal S)}$ (resp. ${\bf G}^{(\cal S)}$) is the submatrix formed by the columns (resp. rows) of $\bf B$ (resp. $\bf G$) that have corresponding VNs in $\cal S$. To ensure that the modified batch equation is undecodable, we need ${\rm rk}({\bf G}^{(\cal A)} {\bf H}) < s$, where $s$ is the number of rows of ${\bf G}^{(\cal A)}$. Thus, each B-CN that is the neighbor of $\cal A$ needs to satisfy this constraint on the rank. At this point, we have comprehended the graph structures that lead to BP decoding failures.
	
	Subsequently, we can bound the probability that a randomly chosen $\cal A$ is a stopping set of both the LDPC code and the BATS code. Considering all choices of $\cal A$, we can obtain a union bound. However, such an upper bound may be loose, because the graph structure associated with VNs outside of $\cal A$ is not considered at all. To tighten the bound, we impose certain constraints on the graph structure outside of $\cal A$ to increase the likelihood that $\cal A$ is the ``maximal stopping set'' of the Tanner graph, thereby reducing the double counting in the union bound.
\end{proof}

\begin{remark}
	Unless otherwise specified, we use the convention $\binom{n}{k} = 0$ for $k>n$, $k<0$, or $n<0$. To cover the case $\mathtt{l} = 0$, we adopt the convention $\prod_{i=0}^{\mathtt{l-1}} x_i = 1$ if $\mathtt{l} \le 0$ for $\hat{p}(a,a',c)$.
\end{remark}

\begin{theorem}\label{theorem_LDPC_BATS_INA}
	Let $P_{\textsf{E}}^{\rm ML}({\mathscr{C}}_{\rm pre})$ denote the error probability of ML decoding of the random LDPC-BATS code ${\mathscr{C}}_{\rm pre}^{(q)}(K,n,M,{\bm \Psi},{\bf h},\mathtt{l},\mathtt{r},\nu_{\rm min}=1)$. $P_{\textsf{E}}^{\rm ML}({\mathscr{C}}_{\rm pre})$ can be bounded by
	\begin{equation}\label{eq:ML_UB}
		\begin{aligned}
			P_{\textsf{E}}^{\rm ML}({\mathscr{C}}_{\rm pre}) \le &\min\biggg\{1, \sum_{l = 1}^{K} \frac{A_l}{q-1} \left(\tilde{\pi}_l^n - \left(\sum_{d=1}^{K} \varphi_{d,K-l}\right)^n\right)  \\
			&+ \min\left\{\frac{A_l}{q-1},\binom{K}{l}\right\} \left(\sum_{d=1}^{K} \varphi_{d,K-l}\right)^n\biggg\},
		\end{aligned}
	\end{equation}
	where 
	\begin{equation}
		\tilde{\pi}_l = \sum_{d = 1}^{K} \Psi_d \left(\frac{\binom{K-l}{d}}{\binom{K}{d}} + \left(1-\frac{\binom{K-l}{d}}{\binom{K}{d}}\right) \left(\sum_{k = 0}^{M} h_k q^{-k}\right)\right),
	\end{equation}
	and $A_l$ is the expected weight enumerator for the $(K,\mathtt{l},\mathtt{r},1)_q$ LDPC ensemble, given below
	\begin{equation}\label{eq_WE}
		A_l = \binom{K}{l} \frac{{\rm coef}\left( \left(\frac{(1+(q-1)x)^{\mathtt{r}}+(q-1)(1-x)^{\mathtt{r}}}{q}\right)^{K\frac{\mathtt{l}}{\mathtt{r}}}, x^{\mathtt{l}l} \right)}{\binom{K\mathtt{l}}{\mathtt{l}l}(q-1)^{(\mathtt{l}-1)l}}.
	\end{equation}
\end{theorem}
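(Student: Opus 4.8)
The plan is to reduce ML decoding failure to an algebraic event about the precode's codewords, and then apply a union bound over those codewords grouped by Hamming weight, being careful to exploit $q$-ary scaling and support symmetry to keep the bound tight.

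\textbf{Step 1 (failure characterization).} By linearity of every batch equation and every parity-check constraint, the error event of ML (equivalently, inactivation) decoding does not depend on the transmitted message, so we may assume the all-zero codeword is sent. The decoder recovers the $K'$ input symbols uniquely iff the associated homogeneous linear system has only the trivial solution; since the precode's generator matrix has full row rank $K'$, this fails iff there exists a nonzero codeword $\mathbf{v}\in\mathbb{F}_q^K$ of the LDPC precode that is \emph{invisible} to every batch, i.e. $\mathbf{v}^{(\mathcal{I}_i)}\mathbf{G}_i\mathbf{H}_i=\mathbf{0}$ for all $i\in[1:n]$. Hence $P_{\textsf{E}}^{\rm ML}({\mathscr{C}}_{\rm pre})$ equals the probability, over the $(K,n,M,{\bm\Psi},{\bf h},\mathtt{l},\mathtt{r},1)_q$ LDPC-BATS ensemble, that such a $\mathbf{v}$ exists. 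Crucially, in Definition~\ref{def:LDPC_BATS_ensemble} the precode (which determines the set of codewords) and the BATS part (which determines the $\mathcal{I}_i,\mathbf{G}_i,\mathbf{H}_i$, hence invisibility) are sampled independently.

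\textbf{Step 2 (per-codeword probabilities).} Fix a nonzero $\mathbf{v}$ of Hamming weight $l$. For one batch of degree $d$: with probability $\binom{K-l}{d}/\binom{K}{d}$ the index set $\mathcal{I}_i$ avoids $\mathrm{supp}(\mathbf{v})$ and the batch equation is trivially satisfied; otherwise $\mathbf{v}^{(\mathcal{I}_i)}\neq\mathbf{0}$, so $\mathbf{v}^{(\mathcal{I}_i)}\mathbf{G}_i$ is uniform over $\mathbb{F}_q^M$ because $\mathbf{G}_i$ is totally random, and therefore $\mathbf{v}^{(\mathcal{I}_i)}\mathbf{G}_i\mathbf{H}_i=\mathbf{0}$ with probability $q^{-\mathrm{rk}(\mathbf{H}_i)}$, i.e. $\sum_{k}h_k q^{-k}$ on average. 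Averaging over $\Psi_d$ shows the per-batch invisibility probability depends on $\mathbf{v}$ only through $l$ and equals $\tilde\pi_l$; as the $n$ batches are i.i.d., $\mathbf{v}$ is invisible to all of them with probability $\tilde\pi_l^n$. Let $\beta(\mathbf{v})$ be the sub-event ``every $\mathcal{I}_i$ avoids $\mathrm{supp}(\mathbf{v})$'', which has probability $(\sum_d\varphi_{d,K-l})^n$ by \eqref{eq_varphi} and plainly implies invisibility; thus the invisibility event is the disjoint union of $\beta(\mathbf{v})$ and an event of probability $\tilde\pi_l^n-(\sum_d\varphi_{d,K-l})^n$, and this split is the source of the two summands in \eqref{eq:ML_UB}.

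\textbf{Step 3 (two union bounds, then assemble).} Using independence of the precode and the BATS part, I bound the two pieces separately. For the piece with $\neg\beta(\mathbf{v})$, union-bound over one-dimensional subspaces spanned by weight-$l$ codewords: all $q-1$ nonzero scalar multiples of a codeword share the same weight, support, and invisibility event, and the expected number of such subspaces is $A_l/(q-1)$, giving $\frac{A_l}{q-1}\big(\tilde\pi_l^n-(\sum_d\varphi_{d,K-l})^n\big)$. For the $\beta(\mathbf{v})$ piece, note $\beta(\mathbf{v})$ depends only on $\mathrm{supp}(\mathbf{v})$; union-bounding over the $\binom{K}{l}$ candidate supports of size $l$ and using that the expected number of size-$l$ supports carrying a codeword is at most $\min\{A_l/(q-1),\binom{K}{l}\}$ gives $\min\{A_l/(q-1),\binom{K}{l}\}(\sum_d\varphi_{d,K-l})^n$. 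Summing over $l\in[1:K]$, intersecting with the trivial bound $1$, and substituting the expected weight enumerator $A_l$ of the $(K,\mathtt{l},\mathtt{r},1)_q$ LDPC ensemble from \eqref{eq_WE}—which follows from a standard generating-function/type-counting argument on the $q$-ary regular ensemble, analogous to the stopping-set enumerator of \cite{Weight_distribution_LDPC}—yields exactly \eqref{eq:ML_UB}. The main obstacle is the bookkeeping in Step 3: recognizing that the ``all batches miss the support'' contribution must be union-bounded over supports rather than over codewords (yielding the $\min$ with $\binom{K}{l}$), and correctly treating ``number of codewords / one-dimensional subspaces / supports'' as expectations over the random precode that factor against the BATS-side probabilities; the $A_l/(q-1)$ reduction is what keeps the bound from being loose by a factor of $q-1$.
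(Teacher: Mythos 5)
Your proposal is correct and follows essentially the same route as the paper's proof: the same reduction to the existence of a nonzero precode codeword $\mathbf{v}$ with $\mathbf{v}\tilde{\bf G}={\bf 0}$, the same computation of $\tilde{\pi}_l$ via the uniformity of $\mathbf{v}^{({\cal I}_i)}{\bf G}_i$ and $\Pr\{{\bf g}{\bf H}={\bf 0}\}=q^{-{\rm rk}({\bf H})}$, the same $1/(q-1)$ reduction over scalar multiples, and the same refinement that isolates the ``all batches miss the support'' event and union-bounds it over supports to obtain the $\min\{A_l/(q-1),\binom{K}{l}\}$ term. The only cosmetic difference is that you invoke the left-null-space argument directly rather than the paper's general rank-distribution lemma, which is an equally valid shortcut.
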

\begin{proof}
	The proof is given in Appendix~\ref{Sec_proof_Theorem_3}. Here is the sketch. Let $\tilde{\bf G}_i'$ be the $K \times M$ matrix obtained by adding $K - {\rm dg}_i$ all-zero rows to ${\bf G}_i$ such that the ${\rm dg}_i$ rows of $\tilde{\bf G}_i'$ indexed by ${\cal I}_i$ form ${\bf G}_i$, where ${\cal I}_i$ contains the indices of B-CN $i$'s neighboring VNs, and ${\rm dg}_i = |{\cal I}_i|$. Thus, the batch equation for the batch $i$ can be written as ${\bf Y}_i = {\bf v} \tilde{\bf G}_i' {\bf H}_i$, where ${\bf v} = (v_1,v_2,\ldots,v_K)$ is a codeword in $\mathscr{C}$, and $\mathscr{C}$ is the codebook of the precode. Without loss of generality, assume the zero codeword is transmitted. We then see that the necessary and sufficient condition such that the ML decoding fails is $\exists {{\bf v} \in \mathscr{C} \setminus \{{\bf 0}\}}, {\bf v} \tilde{\bf G}_i' {\bf H}_i = {\bf 0}, \forall i \in [1:n]$. Given $\bf v$, noting that this probability only depends on the Hamming weight $w_H({\bf v})$ of $\bf v$, it can be computed as $\left(\Pr\{{\bf v} \tilde{\bf G}_i' {\bf H}_i = {\bf 0} \mid w_H({\bf v}) = l\}\right)^n \triangleq {\tilde{\pi}}_l^n$. Utilizing some properties of random matrices, ${\tilde{\pi}}_l$ can be derived. Then, we can obtain a union bound of the form like $\sum_{l=1}^{K} A_l {\tilde{\pi}}_l^n$. By eliminating some double counting in the union bound, Theorem~\ref{theorem_LDPC_BATS_INA} follows.
\end{proof}

\begin{remark}
Since Theorem \ref{theorem_LDPC_BATS_INA} considers randomly choosing a precode from the LDPC ensemble, $A_l$ is the expected weight enumerator. If the precode is fixed, $A_l$ becomes the weight enumerator for the fixed precode.
\end{remark}

\begin{remark}
	According to the proof of Theorem \ref{theorem_LDPC_BATS_INA}, the specific choice of the precode does not play a role in the proof except for its weight enumerator. Thus, Theorem \ref{theorem_LDPC_BATS_INA} can, in fact, be applied to BATS codes with any precode as long as $A_l$ is known.
\end{remark}

\begin{remark}\label{remark:minimum_ML_bound}
	In our experiments, we observed that the minimum of the ML performance upper bound with respect to the degree distribution $\bm \Psi$ is always attained when $\Psi_K = 1$. Although this fact is not readily apparent from (\ref{eq:ML_UB}), we can consider a slightly relaxed upper bound to support it to some extent. By replacing the term $\min\left\{\frac{A_l}{q-1},\binom{K}{l}\right\}$ on the right hand side of (\ref{eq:ML_UB}) with $\frac{A_l}{q-1}$, we obtain a relaxed upper bound $P_{\textsf{E}}^{\rm ML}({\mathscr{C}}_{\rm pre}) \le \min\left\{1, \sum_{l=1}^{K} \frac{A_l}{q-1} \tilde{\pi}_l^n\right\}$. Since $\sum_{k = 0}^{M} h_k q^{-k}$ is a constant in $[0,1]$, it is straightforward to see that the minimum of $\tilde{\pi}_l$ is achieved when $\Psi_K = 1$ for all $l$.
\end{remark}

\begin{corollary}\label{corollary_std_BATS_BP}
	Let $P_{\textsf{E}}^{\rm BP}({\mathscr{C}}_{\rm std})$ denote the error probability of BP decoding of the standard BATS code $\mathscr{C}_{\rm std}^{(q)}(K,n,M,{\bm \Psi},{\bf h})$. $P_{\textsf{E}}^{\rm BP}({\mathscr{C}}_{\rm std})$ can be bounded by
	\begin{equation}
		\begin{aligned}
			P_{\textsf{E}}^{\rm BP}({\mathscr{C}}_{\rm std}) \le &\min\biggg\{ 1,~\sum_{a = 1}^{K} \sum_{b = 0}^{n} \binom{K}{a} \binom{n}{b} \\
			&\cdot \left(\sum_{d = 1}^{K} \varphi_{d,K-a}\right)^{n-b}  \left(\sum_{d=1}^{K} \Psi_d \sum_{k=1}^{\min\{a,d\}} \xi_{d,k} \right)^b \\
			&\cdot U(K-a;K-a,n-b,{\bm \Phi}^{(K-a)})\biggg\}.
		\end{aligned}
	\end{equation}
\end{corollary}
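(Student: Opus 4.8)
The plan is to derive Corollary~\ref{corollary_std_BATS_BP} as the $\mathtt{l}=0$ (rate-$1$ precode) specialization of Theorem~\ref{theorem_LDPC_BATS_BP}. First I would justify this specialization at the ensemble level. By Definition~\ref{def:LDPC_BATS_ensemble}, an element of the $(K,n,M,{\bm\Psi},{\bf h},0,\mathtt{r},\nu_{\rm min})_q$ LDPC-BATS ensemble has $K\mathtt{l}/\mathtt{r}=0$ L-CNs, and its LDPC part, drawn from the $(K,0,\mathtt{r},\nu_{\rm min})_q$ expurgated LDPC ensemble, is simply $K$ isolated VNs (the parity-check matrix has no rows, so the precode is trivial with $K'=K$; moreover $\mathbb{E}[A_{\rm ss}({\cal G},1)]=\binom{K}{1}\ge 1$ forces $\nu_{\rm min}=1$ via (\ref{eq:max_nu_min})). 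Sharing the $K$ VNs leaves $\tilde{\cal T}={\cal T}$, an element of the $(K,n,M,{\bm\Psi},{\bf h})_q$ BATS ensemble, and since there are no L-CNs the joint BP decoder of Sec.~\ref{sec_preliminaries_LDPCBATS} acts exactly as the peeling decoder of Sec.~\ref{subsection:decoding_std_BATS}. Hence $P_{\textsf{E}}^{\rm BP}(\mathscr{C}_{\rm std}^{(q)}(K,n,M,{\bm\Psi},{\bf h}))=P_{\textsf{E}}^{\rm BP}({\mathscr{C}}_{\rm pre}^{(q)}(K,n,M,{\bm\Psi},{\bf h},0,\mathtt{r},1))$, so it suffices to substitute $\mathtt{l}=0$, $\nu_{\rm min}=1$ into the bound (\ref{eq_theorem_2}).

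The remaining work is a term-by-term reduction of the right-hand side of (\ref{eq_theorem_2}). The factors $\binom{K}{a}$, $\binom{n}{b}$, $\bigl(\sum_{d}\varphi_{d,K-a}\bigr)^{n-b}$ and $\bigl(\sum_{d}\Psi_d\sum_{k}\xi_{d,k}\bigr)^{b}$ do not depend on $\mathtt{l}$ and are carried over verbatim. For the remaining bracket I would use $s_{\rm max}(a)=\lfloor (K-a)\mathtt{l}/\mathtt{r}\rfloor=0$, which collapses the sum over $j$ to the single term $j=0$, together with $K\mathtt{l}/\mathtt{r}=0$, which collapses the sum over $c$ in $L(a,j;K,\mathtt{l},\mathtt{r})$ to $c=0$. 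Finally I would evaluate $\hat p(a,0,0)$: since $a\ge\nu_{\rm min}=1$ for every $a\in[1:K]$, we land in the branch ``$a\ge\nu_{\rm min}$ and $a'=0$'', and with the conventions $\binom{0}{0}=1$, ${\rm coef}\{((1+x)^{\mathtt{r}}-1-\mathtt{r}x)^{0},x^{0}\}={\rm coef}\{1,x^{0}\}=1$, and the empty-product convention from the remark following Theorem~\ref{theorem_LDPC_BATS_BP}, this branch equals $1$. Therefore $L(a,0;K,0,\mathtt{r})=\min\{\binom{0}{0}\hat p(a,0,0),\binom{0}{0}\hat p(a,0,0)\}=1$, the bracket reduces to $U(K-a;K-a,n-b,{\bm\Phi}^{(K-a)})$, and substituting this back reproduces exactly the bound claimed in Corollary~\ref{corollary_std_BATS_BP}.

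There is no genuine analytical obstacle here — the corollary is a direct specialization — so the only thing to watch is consistency of the degenerate conventions ($\binom{0}{0}=1$, empty products equal $1$, and the coefficient of a constant polynomial) and the verification that $\nu_{\rm min}=1$ is forced when $\mathtt{l}=0$, which guarantees that the vanishing branch $a<\nu_{\rm min}$ of $\hat p$ is never triggered for $a\in[1:K]$. As an independent cross-check, rather than one I would write out in full, one can inspect the proof of Theorem~\ref{theorem_LDPC_BATS_BP} in Appendix~\ref{Sec_Proof_Theorem_BP}: when there are no L-CNs the residual VN set $\cal A$ only has to be a ``stopping set of the BATS code'', so the LDPC stopping-set enumeration contributing the factors $L(a,j;\cdot)$ and the sum over $j$ simply disappears, and the union bound over choices of $\cal A$ becomes precisely the right-hand side of Corollary~\ref{corollary_std_BATS_BP}.
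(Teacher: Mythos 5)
Your proposal is correct and follows essentially the same route as the paper: specialize Theorem~\ref{theorem_LDPC_BATS_BP} to $\mathtt{l}=0$ and observe that the inner summation collapses (since $s_{\rm max}(a)=0$ and $L(a,0;K,0,\mathtt{r})=1$) to $U(K-a;K-a,n-b,{\bm\Phi}^{(K-a)})$. Your extra care with the degenerate conventions ($\binom{0}{0}$, the empty product, and $\nu_{\rm min}=1$ being forced) is a welcome, if brief, elaboration of what the paper states in one line.
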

\begin{proof}
	If $\mathtt{l} = 0$, the random LDPC-BATS code reduces to the standard BATS code. By noting that the summation in the third and fourth lines of (\ref{eq_theorem_2}) is equal to $U(K-a;K-a,n-b,{\bm \Phi}^{(K-a)})$ for $\mathtt{l} = 0$, this corollary follows.
\end{proof}

\begin{corollary}\label{corollary_std_BATS_INA}
	Let $P_{\textsf{E}}^{\rm ML}({\mathscr{C}}_{\rm std})$ denote the error probability of ML decoding of the standard BATS code $\mathscr{C}_{\rm std}^{(q)}(K,n,M,{\bm \Psi},{\bf h})$. $P_{\textsf{E}}^{\rm ML}({\mathscr{C}}_{\rm std})$ can be bounded by
	\begin{multline}
		P_{\textsf{E}}^{\rm ML}({\mathscr{C}}_{\rm pre}) \le \\
		\min\biggg\{1, \sum_{l = 1}^{K} \binom{K}{l}(q-1)^{l-1} \left(\tilde{\pi}_l^n - \left(\sum_{d=1}^{K} \varphi_{d,K-l}\right)^n\right) \\
		+ \binom{K}{l} \left(\sum_{d=1}^{K} \varphi_{d,K-l}\right)^n\biggg\},
	\end{multline}
\end{corollary}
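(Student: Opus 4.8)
The plan is simply to specialize Theorem~\ref{theorem_LDPC_BATS_INA} to $\mathtt{l} = 0$, exactly as Corollary~\ref{corollary_std_BATS_BP} specializes Theorem~\ref{theorem_LDPC_BATS_BP}. When $\mathtt{l} = 0$ the regular LDPC precode has rate $1$, there are no L-CNs, and the random LDPC-BATS code ${\mathscr{C}}_{\rm pre}^{(q)}(K,n,M,{\bm \Psi},{\bf h},0,\mathtt{r},1)$ is precisely the standard BATS code $\mathscr{C}_{\rm std}^{(q)}(K,n,M,{\bm \Psi},{\bf h})$. Inspecting the right-hand side of (\ref{eq:ML_UB}), the only precode-dependent quantity is the expected weight enumerator $A_l$; the quantities $\tilde\pi_l$ and $\varphi_{d,K-l}$ do not involve the precode. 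So the first step is to evaluate $A_l$ at $\mathtt{l} = 0$.

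For this I would either appeal directly to the combinatorial meaning used in the proof of Theorem~\ref{theorem_LDPC_BATS_INA} --- with a trivial precode the codebook $\mathscr{C}$ appearing there is all of $\mathbb{F}_q^{K}$, whose number of weight-$l$ words is $\binom{K}{l}(q-1)^{l}$ --- or read off the same value from (\ref{eq_WE}) under the paper's conventions: the exponent $K\mathtt{l}/\mathtt{r}$ is $0$, so the bracketed power series is the empty product $1$, its coefficient of $x^{\mathtt{l}l} = x^{0}$ is $1$, $\binom{K\mathtt{l}}{\mathtt{l}l} = \binom{0}{0} = 1$, and $(q-1)^{(\mathtt{l}-1)l} = (q-1)^{-l}$, giving $A_l = \binom{K}{l}(q-1)^{l}$.

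The remaining step is to substitute $A_l = \binom{K}{l}(q-1)^{l}$ into (\ref{eq:ML_UB}) and simplify the two precode-dependent coefficients. Then $\frac{A_l}{q-1} = \binom{K}{l}(q-1)^{l-1}$, which produces the first summand of the claimed bound, and since $q \ge 2$ we have $(q-1)^{l-1} \ge 1$ for every $l \ge 1$, hence $\frac{A_l}{q-1} \ge \binom{K}{l}$ and therefore $\min\{\frac{A_l}{q-1},\binom{K}{l}\} = \binom{K}{l}$, which produces the second summand. Plugging these into (\ref{eq:ML_UB}) yields the stated inequality. I do not anticipate any real obstacle; the only points needing a little care are the degenerate evaluation of (\ref{eq_WE}) at $\mathtt{l}=0$ (empty product, $\binom{0}{0}$, and the negative power of $q-1$) and the elementary monotonicity fact $(q-1)^{l-1} \ge 1$ that decides which term attains the minimum.
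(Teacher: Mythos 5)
Your proposal is correct and follows essentially the same route as the paper: specialize Theorem~\ref{theorem_LDPC_BATS_INA} at $\mathtt{l}=0$, observe that (\ref{eq_WE}) reduces to $A_l = \binom{K}{l}(q-1)^l$, and substitute into (\ref{eq:ML_UB}). Your additional care in evaluating the degenerate cases of (\ref{eq_WE}) and in resolving the $\min$ via $(q-1)^{l-1}\ge 1$ just makes explicit what the paper leaves implicit.
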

\begin{proof}
	Applying Theorem \ref{theorem_LDPC_BATS_INA} with $\mathtt{l}=0$, the expected weight enumerator given by (\ref{eq_WE}) reduces to $A_l = \binom{K}{l} (q-1)^l$, and then this corollary follows.
\end{proof}

\subsection{Numerical Results: BP Performance of Standard BATS Codes}\label{subsection:numerical_1}
The purpose of this subsection is to verify the tightness of the BP performance bound (Theorem~\ref{theorem_LDPC_BATS_BP}) for standard BATS codes, that is, LDPC-BATS codes with a trivial precode of rate $1$. As mentioned before, BP performance of standard BATS codes can be analyzed by the two recursive formulae in \cite{FL_analysis_BATS}, which provide the exact and the approximate BP performance, respectively. Therefore, we make a comparison among Theorem~\ref{theorem_LDPC_BATS_BP} and the two recursive formulae in \cite{FL_analysis_BATS}. Note that for standard BATS codes, Theorem~\ref{theorem_LDPC_BATS_BP} can be written in a more concise form, see Corollary~\ref{corollary_std_BATS_BP}.

In numerical evaluation, we consider a length-2 line network with packet loss rate $0.2$, and employ the RLNC at the intermediate node as the scheme provided in \cite[Sec. VII-A]{BATS}. All used degree distributions ${\bm \Psi}_i$ and empirical rank distributions ${\bf h}_i$ are given in Appendix~\ref{tables}.

\begin{example}\label{example_1_1}	
	This example considers the standard BATS code ${\mathscr{C}}_{\rm std}^{(256)}(128,n,16,{\bf \Psi}_1,{\bf h}_1)$ under BP decoding. The numerical results are given in Fig.~\ref{fig_pure_BATS_K128_performance}, where the three error probability curves are computed by the exact recursive formula \cite[Theorem 1]{FL_analysis_BATS}, the approximate recursive formula \cite[Theorem 16]{FL_analysis_BATS}, and the upper bound in Theorem~\ref{theorem_LDPC_BATS_BP}, respectively. 
	
	In the relatively-low error probability region (e.g., below $10^{-1}$), the approximate recursive formula overestimates the error probability, and the estimation bias increases as $n$ increases or error probability decreases. In Fig.~\ref{fig_pure_BATS_K128_performance}, at the error probability of $10^{-6}$, the approximate recursive formula overestimates the error probability by about one order of magnitude. While, the upper bound provides an excellent match to the actual error probability (i.e., the exact recursive formula) for error probability below $10^{-1}$. 
	
	Moreover, we can expect that the upper bound is more efficient to compute than the exact recursive formula for large $n$, because the computational complexity of the upper bound is $\Theta(n)$ while that of the exact recursive formula is $\Theta(n^2)$. In Table~\ref{table_cpu_time}, we provide the CPU time for computing the two recursive formulae and the upper bound in MATLAB, utilizing an i9-12900K CPU. We observe that the computation of the approximate recursive formula has the fastest execution speed, with a computational complexity independent of $n$. The CPU time for the upper bound follows closely behind the CPU time for the approximate recursive formula with only a small gap. The computation of the exact recursive formula, on the other hand, is the slowest, requiring several times the CPU time for the other two algorithms. As a conclusion, the upper bound for standard BATS codes achieves a good tradeoff between accuracy and computational complexity.
\end{example}

\begin{figure}[!t]
	\centering
	\includegraphics[width=3.5in]{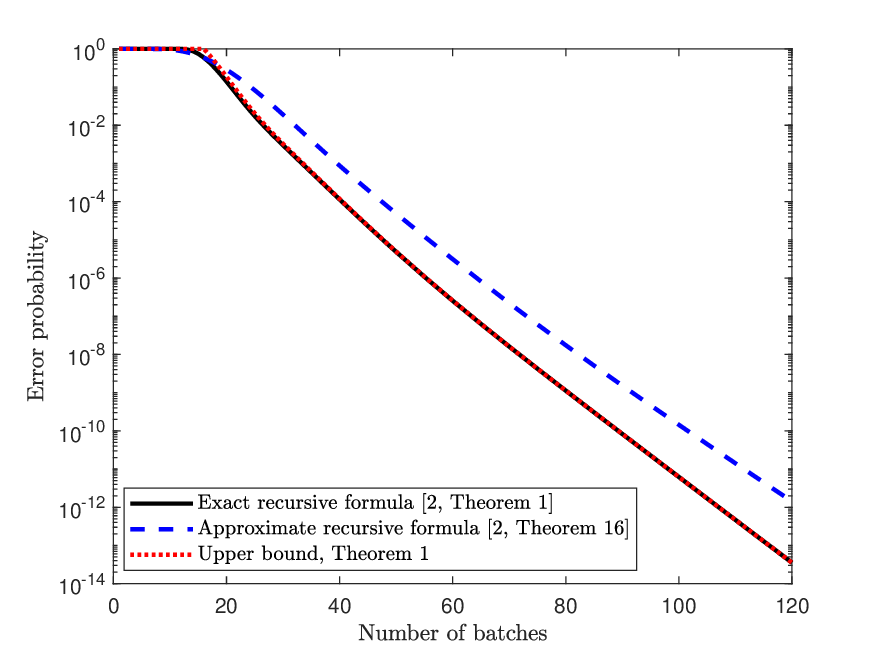}
	\caption{The error probabilities of the standard BATS code ${\mathscr{C}}_{\rm std}^{(256)}(128,n,16,{\bf \Psi}_1,{\bf h}_1)$ under BP decoding. The error probabilities are computed by the two recursive formulae in \cite{FL_analysis_BATS} and Theorem~\ref{theorem_LDPC_BATS_BP}.}
	\label{fig_pure_BATS_K128_performance}
\end{figure}

\begin{table*}[t]
	\caption{CPU Time for Computing the Upper Bound, the Exact Recursive Formula, and the Approximate Recursive Formula for ${\mathscr{C}}_{\rm std}^{(256)}(128,n,16,{\bf \Psi}_1,{\bf h}_1)$ BATS Code in MATLAB.}
	\normalsize
	\centering
	\rowcolors{1}{lightgray!20}{}
	\begin{tabular}{lllllll}
		\hline
		Algorithm / $n$  &  $20$ &   $30$ &    $40$  &  $50$ &   $75$&    $100$ \\
		Upper bound & 1.784 s  &  1.807 s  &   1.902 s   & 1.986 s  & 2.302 s  & 2.520 s \\
		Exact recursive formula & 2.244 s  &  3.883 s  &  6.517 s    &  10.959 s & 31.799 s  & 74.400 s \\
		Approximate recursive formula & 1.210 s  &  1.214 s  &  1.211 s    &  1.220 s & 1.210 s  & 1.213 s \\
		\hline
	\end{tabular}\label{table_cpu_time}
\end{table*}

\begin{example}\label{example_1_2}	
	This example verifies the BP performance bound for the standard BATS code ${\mathscr{C}}_{\rm std}^{(256)}(256,n,16,{\bf \Psi}_2,{\bf h}_1)$. The numerical results are shown in Fig.~\ref{fig_pure_BATS_K256_performance}, where the three error probability curves are obtained in the same ways as Example~\ref{example_1_1}. 
	
	In Fig.~\ref{fig_pure_BATS_K256_performance}, we can observe a phenomenon similar to that in Example~\ref{example_1_1}. When the error probability is below $10^{-1}$, the upper bound is much closer to the accurate result compared to the approximate recursive formula. However, as $K$ increases, the upper bound aligns with the exact recursive formula only when the error probability is below $10^{-2}$, while in Example~\ref{example_1_1} the upper bound is sufficiently accurate for the error probability below $10^{-1}$. It is a common phenomenon for the union bound to become looser as the code length increases.
\end{example}

\begin{figure}[!t]
	\centering
	\includegraphics[width=3.5in]{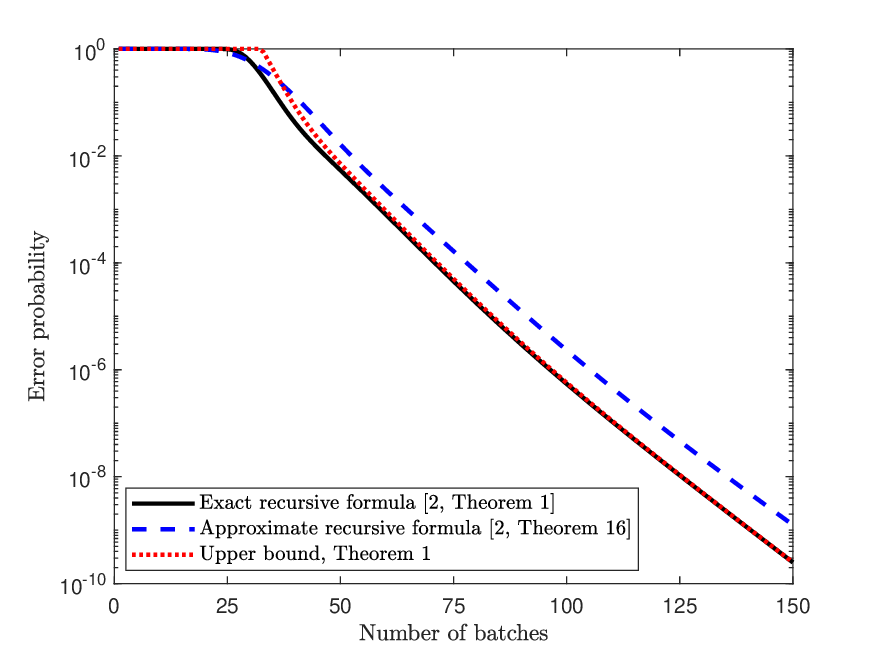}
	\caption{The error probabilities of the standard BATS code ${\mathscr{C}}_{\rm std}^{(256)}(256,n,16,{\bf \Psi}_2,{\bf h}_1)$ under BP decoding. The error probabilities are computed by the two recursive formulae in \cite{FL_analysis_BATS} and Theorem~\ref{theorem_LDPC_BATS_BP}.}
	\label{fig_pure_BATS_K256_performance}
\end{figure}

\subsection{Numerical Results: BP Performance of LDPC-BATS Codes}\label{subsection:numerical_2}
The goal of this subsection is to verify the BP performance bound (Theorem~\ref{theorem_LDPC_BATS_BP}) for LDPC-BATS codes. The network is the same as that in Sec.~\ref{subsection:numerical_1} and the empirical rank distributions ${\bf h}_i$ can be found in Appendix~\ref{tables}. The degree distributions used in this subsection are optimized using the method presented in Sec.~\ref{Sec_opt_DD} and the details of them can be found in Appendix~\ref{tables}.

Due to the absence of any analytical formulae for LDPC-BATS codes at the time of writing this paper, we employ Monte Carlo simulations in the following two examples to validate the accuracy of Theorem~\ref{theorem_LDPC_BATS_BP}. The simulation settings merit some discussion. To verify the bound, we perform \textit{simulation for random LDPC-BATS codes}, where more than $10,000$ LDPC codes in the $(K,\mathtt{l},\mathtt{r},\nu_{\rm min})_q$ LDPC ensemble are pre-generated and each transmission randomly choose one to be the precode. In addition, noting that practical communication systems usually employ well-constructed LDPC codes rather than randomly choosing codes from the ensemble, we also consider LDPC-BATS codes with a regular LDPC precode constructed by the PEG algorithm~\cite{PEG}. With some abuse of notation, let $\tilde{\mathscr{C}}_{\rm pre}^{(q)}(K,n,M,{\bm \Psi},{\bf h},\mathtt{l},\mathtt{r})$ denote an LDPC-BATS code with a fixed $(\mathtt{l},\mathtt{r})$-regular LDPC precode constructed by the PEG algorithm. The simulations for such codes are denoted as \textit{simulation for PEG-LDPC-BATS} in the legends of Fig.~\ref{fig_LDPC_BATS_K128_performance} and Fig.~\ref{fig_LDPC_BATS_K256_performance}. 

However, Theorem~\ref{theorem_LDPC_BATS_BP} is only an upper bound for ${\mathscr{C}}_{\rm pre}^{(q)}(K,n,M,{\bf \Psi},{\bf h},\mathtt{l},\mathtt{r},\nu_{\rm min})$, but may not necessarily apply as an upper bound for $\tilde{\mathscr{C}}_{\rm pre}^{(q)}(K,n,M,{\bm \Psi},{\bf h},\mathtt{l},\mathtt{r})$. As illustrated in the following two examples, we can adjust the expurgation parameter $\nu_{\rm min}$ in the upper bound to estimate the performance of $\tilde{\mathscr{C}}_{\rm pre}^{(q)}(K,n,M,{\bm \Psi},{\bf h},\mathtt{l},\mathtt{r})$. Eq. (\ref{eq:max_nu_min}) offers a conservative way to determine $\nu_{\rm min}$.\footnote{Empirically speaking, the size of the minimum stopping set in a PEG-LDPC code is likely to be larger than $\nu_{\rm min}^*$ by (\ref{eq:max_nu_min}).}

\begin{example}\label{Example_LDPC_BATS_BP_1}
	This example considers LDPC-BATS codes with $K = 128$, $M = 16$, and $(3,6)$-regular precoding (including random LDPC precoding and PEG-LDPC precoding). More precisely, the codes are ${\mathscr{C}}_{\rm pre}^{(256)}(128,n,16,{\bf \Psi}_3/{\bf \Psi}_{17},{\bf h}_1,3,6,1/4)$ and $\tilde{\mathscr{C}}_{\rm pre}^{(256)}(128,n,16,{\bm \Psi}_3,{\bf h}_1,3,6)$ (here, $a/b$ stands for $a$ or $b$). For these codes, the number of input symbols is approximately $K'=64$. 
	
	The upper bounds and simulation results are shown in Fig.~\ref{fig_LDPC_BATS_K128_performance}. We see that the accuracy of the upper bound may vary with the degree distribution. For ${\bf \Psi}_3$, the upper bounds are tight compared to the simulation results when the error probability is below $10^{-1}$. However, for ${\bf \Psi}_{17}$, noting that the error probability of this code decreases more rapidly, the upper bound is tight below the error probability of $0.003$. As mentioned before, the upper bound with $\nu_{\rm min} = 4$ may be regarded as an approximation for $\tilde{\mathscr{C}}_{\rm pre}^{(256)}(128,n,16,{\bm \Psi}_3,{\bf h}_1,3,6)$. We see that this approximation well matches the actual performance for ${\bm \Psi}_3$.
\end{example}

\begin{figure}[!t]
	\centering
	\includegraphics[width=3.5in]{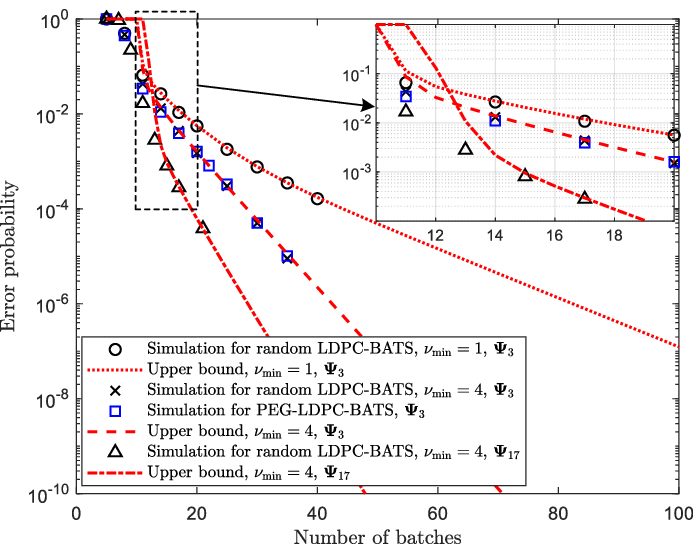}
	\caption{The simulation results and the upper bounds (Theorem \ref{theorem_LDPC_BATS_BP}) for the LDPC-BATS codes ${\mathscr{C}}_{\rm pre}^{(256)}(128,n,16,{\bf \Psi}_3/{\bf \Psi}_{17},{\bf h}_1,3,6,1/4)$ and $\tilde{\mathscr{C}}_{\rm pre}^{(256)}(128,n,16,{\bm \Psi}_3,{\bf h}_1,3,6)$ under BP decoding.}
	\label{fig_LDPC_BATS_K128_performance}
\end{figure}

\begin{example}\label{Example_LDPC_BATS_BP_2}	
	This example verifies the accuracy of the BP performance bound for LDPC-BATS codes with $K = 256$, $M = 16$, and $(3,6)$-regular precoding (including random LDPC precoding and PEG-LDPC precoding). The codes in this example are ${\mathscr{C}}_{\rm pre}^{(256)}(256,n,16,{\bf \Psi}_4/{\bf \Psi}_5,{\bf h}_1,3,6,1/7)$ and $\tilde{\mathscr{C}}_{\rm pre}^{(256)}(256,n,16,{\bf \Psi}_4/{\bf \Psi}_5,{\bf h}_1,3,6)$ (here, $a/b$ stands for $a$ or $b$).
	
	The simulation results and the upper bounds are shown in Fig.~\ref{fig_LDPC_BATS_K256_performance}. For ${\bf \Psi}_4$, when the error probability is below $0.03$, the upper bound is fairly accurate for the random LDPC-BATS code regardless of whether $\nu_{\rm min} = 1$ or $\nu_{\rm min} = 7$. When $\nu_{\rm min} = 7$, the upper bound also aligns well with the performance of the LDPC-BATS code with a PEG-LDPC precode. However, for ${\bf \Psi}_5$, the upper bound does not become tight until the error probability is below $10^{-4}$. This may be because in this case, the error probability of the code decreases extremely rapidly (on the other hand, the gap measured by $n$ is not large). 
	
	Additionally, we observe that when $\nu_{\rm min}$ is set according to (\ref{eq:max_nu_min}), Theorem~\ref{theorem_LDPC_BATS_BP} may not always be an accurate approximation for PEG-LDPC-precoded BATS codes. For ${\bf \Psi}_5$, the upper bound excessively overestimates the error probability of $\tilde{\mathscr{C}}_{\rm pre}^{(256)}(256,n,16,{\bf \Psi}_5,{\bf h}_1,3,6)$ even when the error probability is low to $10^{-6}$. Although we can continue to increase $\nu_{\rm min}$ to make the upper bound a better approximation, it does not have much theoretical significance.
\end{example}

\begin{figure}[!t]
	\centering
	\includegraphics[width=3.5in]{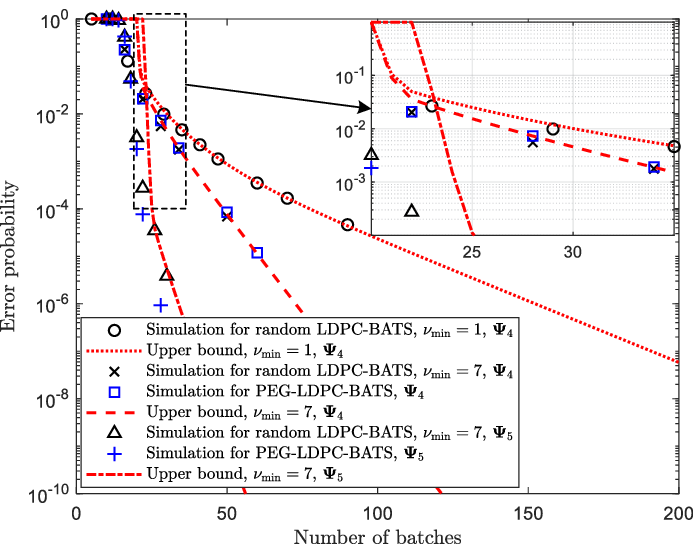}
	\caption{The simulation results and the upper bounds (Theorem \ref{theorem_LDPC_BATS_BP}) for the LDPC-BATS codes ${\mathscr{C}}_{\rm pre}^{(256)}(256,n,16,{\bf \Psi}_4/{\bf \Psi}_5,{\bf h}_1,3,6,1/7)$ and $\tilde{\mathscr{C}}_{\rm pre}^{(256)}(256,n,16,{\bf \Psi}_4/{\bf \Psi}_5,{\bf h}_1,3,6)$ under BP decoding.}
	\label{fig_LDPC_BATS_K256_performance}
\end{figure}

\subsection{Numerical Results: ML Performance of LDPC-BATS Codes}\label{subsection:numerical_3}
This subsection aims to verify the tightness of the ML performance bound (Theorem~\ref{theorem_LDPC_BATS_INA}) for LDPC-BATS codes. The network is the same as that in Sec.~\ref{subsection:numerical_1} and the empirical rank distributions ${\bf h}_i$ can be found in Appendix~\ref{tables}. The details of all degree distributions used in this subsection are given in Appendix~\ref{tables} and most of them are optimized using the method introduced in Sec.~\ref{Sec_opt_DD}.

Similar to Sec.~\ref{subsection:numerical_2}, we verify the bound by Monte Carlo simulation. In the simulation, the LDPC precode is randomly chosen from more than $10000$ pre-generated codes in the $(K,\mathtt{l},\mathtt{r})_q$ LDPC ensemble. Here we only consider the standard LDPC ensemble, because the weight enumerator for expurgated LDPC ensembles is unknown.

\begin{example}\label{example_LDPC_BATS_ML_1}	
	In this example, we consider two random LDPC-BATS codes based on $\mathbb{F}_4$ and with batch size $M = 8$. In this case, the empirical rank distribution is ${\bf h}_2$, which is provided in Table~\ref{table_h2}. Both codes have $K' = 64$, but their procodes are $(3,6)$-regular and $(3,15)$-regular, respectively. The simulation results and the upper bounds are shown in Fig.~\ref{fig_inactivation_sim_1}. We see that for both codes with different precodes, the upper bounds are tight when error probability is below $10^{-2}$. 
\end{example}

\begin{example}\label{example_LDPC_BATS_ML_2}
	This example considers two \textit{binary} random LDPC-BATS codes with $K'=128$ and batch size $M = 32$. The precodes of the two codes are different: one is $(3,6)$-regular and the other is $(3,15)$-regular. The empirical rank distribution ${\bf h}_3$ is provided in Table~\ref{table_h3}. In Fig.~\ref{fig_inactivation_sim_2}, it is evident that the upper bound by Theorem \ref{theorem_LDPC_BATS_INA} is very tight at error probabilities below $10^{-2}$. In addition, the simulation results in this figure illustrate that the binary random LDPC-BATS codes still perform well under ML decoding, when the batch size $M$ is relatively big. For the rank distribution ${\bf h}_3$ used in this figure, the capacity is $\bar{h} = 24.04$ and so the least number of batches required for error-free transmission is $K'/\bar{h} = 5.32$.
\end{example}

\begin{figure}[!t]
	\centering
	\includegraphics[width=3.5in]{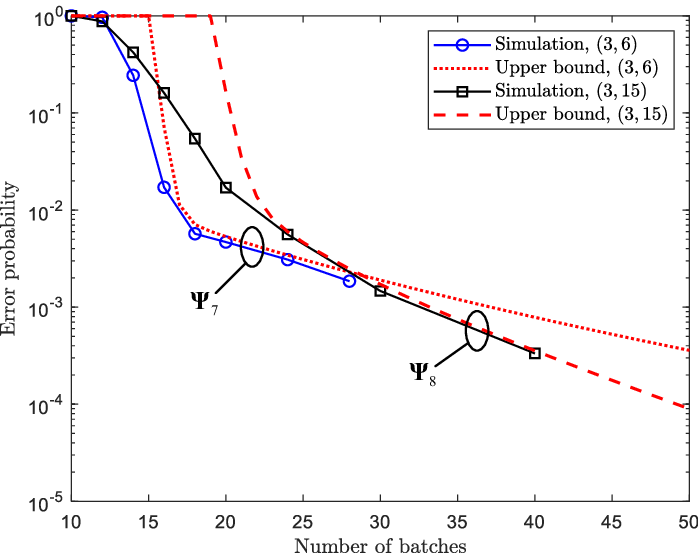}
	\caption{The simulation results and the upper bounds (Theorem \ref{theorem_LDPC_BATS_INA}) for ${\mathscr{C}}_{\rm pre}^{(4)}(128,n,8,{\bm \Psi}_7,{\bf h}_2,3,6,1)$ and ${\mathscr{C}}_{\rm pre}^{(4)}(80,n,8,{\bm \Psi}_8,{\bf h}_2,3,15,1)$ under ML decoding. Both codes have the number of input symbols $K'=64$.}
	\label{fig_inactivation_sim_1}
\end{figure}

\begin{figure}[!t]
	\centering
	\includegraphics[width=3.5in]{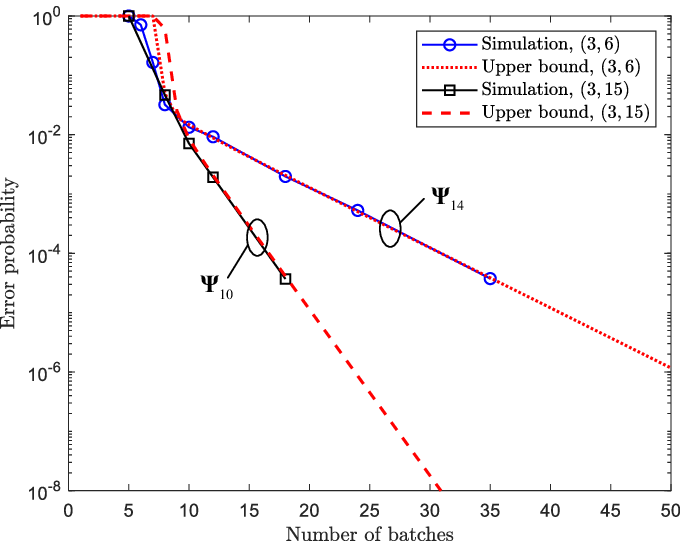}
	\caption{The simulation results and the upper bounds (Theorem \ref{theorem_LDPC_BATS_INA}) for ${\mathscr{C}}_{\rm pre}^{(2)}(256,n,32,{\bm \Psi}_{14},{\bf h}_3,3,6,1)$ and ${\mathscr{C}}_{\rm pre}^{(2)}(160,n,32,{\bm \Psi}_{10},{\bf h}_3,3,15,1)$ under ML decoding. Both codes have the number of input symbols $K'=128$.}
	\label{fig_inactivation_sim_2}
\end{figure}

\section{Optimization of the Degree Distribution}\label{Sec_opt_DD}
As a practical application of Theorem \ref{theorem_LDPC_BATS_BP} and Theorem \ref{theorem_LDPC_BATS_INA}, we introduce how to optimize the degree distribution of LDPC-BATS codes with the proposed performance bounds.

\subsection{Optimization Problem}
Our goal is to optimize the degree distribution of LDPC-BATS codes for BP decoding and ML decoding, respectively. For BP decoding, we can naturally use the BP performance upper bound as the objective function, ensuring good BP performance by minimizing this upper bound. However, for ML decoding, it is improper to directly minimize the ML performance upper bound, as this results in a trivial degree distribution $\Psi_K = 1$ in all the cases we evaluated (see Remark~\ref{remark:minimum_ML_bound} for some discussion). A BATS code with $\Psi_K = 1$ is a trivial dense code with the outer code being a random linear code. If each packet consists of $T$ finite-field symbols, such a dense code requires $O(K^2+TK)$ finite-field operations per packet to decode with Gaussian elimination, thus losing the low complexity property of BATS codes. When optimizing ML performance, low decoding complexity should be preserved. As an efficient ML decoding algorithm, inactivation decoding significantly reduces the dimension of Gaussian elimination by incorporating BP decoding. Empirically speaking, good BP performance is a prerequisite for inactivation decoding to achieve low complexity. Based on this experience, we still need to consider the BP performance when optimizing the performance of inactivation decoding.

In the optimization, the code parameters $q$, $K$, $M$, $\bf h$, $\mathtt{l}$, and $\mathtt{r}$ are fixed. Let $f_{\rm BP}(n,{\bm \Psi},\nu_{\rm min})$ and $f_{\rm ML}(n,{\bm \Psi},\nu_{\rm min})$ denote the upper bounds in Theorem \ref{theorem_LDPC_BATS_BP} and Theorem \ref{theorem_LDPC_BATS_INA}, respectively, where the fixed parameters are omitted. Based on the earlier mention that inactivation decoding possesses the property of joint BP and ML decoding, we can formulate the following optimization problem to address both BP decoding and inactivation decoding.
	\begin{equation}\label{eq:optimization}\tag{P1}
		\begin{aligned}
			\min~~& f_{\rm BP}(n_1,{\bm \Psi},\nu_{\rm min}) + \mathds{1}_{\{f_{\rm ML}(n_2,{\bm \Psi},1) > \epsilon^* \}}\\
			{\rm s.t.}~~ & \sum_{d=1}^{K} \Psi_d = 1\\
			& \Psi_d \ge 0, \forall d
		\end{aligned}
	\end{equation}
Here, $\mathds{1}_{\{\cdot\}}$ is the indicator function, and the parameters $n_1,n_2,\nu_{\rm min} \in \mathbb{Z}$ and $0 < \epsilon^* \le 1$ need to be properly set. In the second term of the objective function, $\nu_{\rm min}$ is fixed to $1$. This is because the weight enumerator for the expurgated LDPC ensemble is unknown to date. When optimizing the degree distribution for BP decoding, $\epsilon^* = 1$ is chosen. In this case, the second term in the objective function equals $0$ and can be ignored, so the optimization problem finds the degree distribution possibly with the optimal BP performance. When optimizing the degree distribution for inactivation decoding, the BP performance bound is still considered into the objective function to ensure low decoding complexity, and additionally, $\epsilon^* <1$ needs to be appropriately chosen to guarantee that the ML performance does not fall below a certain threshold. There may be other objective functions for inactivation decoding or other ML decoding algorithms that differ from the one we provide, but the core principle remains the same, that is, they must include considerations of ML decoding complexity.

Next, we provide some heuristics for selecting $n_1$, $n_2$, $\nu_{\rm min}$, and $\epsilon^*$.

\textit{Selecting $n_1$ and $n_2$:} Both $n_1$ and $n_2$ should be larger than $n_{\bar{h}} \triangleq {K(1-\mathtt{l}/\mathtt{r})}/{\bar{h}}$, where $K(1-\mathtt{l}/\mathtt{r})$ is the number of input symbols, and $\bar{h}$ is the capacity of the linear operation channel specified by $\bf h$, see (\ref{eq:h_capacity}). Since finite-length BP performance generally has a gap to $n_{\bar{h}}$, $n_1$ can not be set very close to $n_{\bar{h}}$. Empirically speaking, $n_1 \in [1.5n_{\bar{h}}:3n_{\bar{h}}]$ is a good choice for short LDPC-BATS codes. If the LDPC-BATS code is expected to have a low error floor, a larger $n_1$ is better for the optimization. $n_2$ needs to be set close to $n_{\bar{h}}$ because we want the ML decoding to be capacity approaching. Usually, we set $n_2 \approx 1.5n_{\bar{h}}$.

\textit{Selecting $\nu_{\rm min}$ and $\epsilon^*$:} $\nu_{\rm min} = \nu_{\rm min}^*$ is a good setting for $f_{\rm BP}(\cdot,\cdot,\cdot)$, see (\ref{eq:max_nu_min}) for $\nu_{\rm min}^*$. $\epsilon^*$ is useful only for low-order base field, e.g., $q = 2,4,8,16$, because the upper bound for ML decoding becomes looser as $q$ increases. For $q > 16$, we set $\epsilon^* = 1$ to ignore the second term in the objective function. For $q \le 16$, $\epsilon^*$ should not be set to a very small value such as $10^{-8}$. This is because $\nu_{\rm min} = 1$ is used for $f_{\rm ML}(\cdot,\cdot,\cdot)$ and so the ML decoding may have an error floor, see Fig.~\ref{fig_inactivation_sim_1} and Fig.~\ref{fig_inactivation_sim_2}. We observe that $0.001 \le \epsilon^* \le 0.1$ is suitable.

To solve the optimization problem (\ref{eq:optimization}), we follow the iterative optimization approach presented in \cite[Sec. V]{FL_analysis_BATS}. At the beginning of the iterative optimization, we choose an initial degree distribution ${\bm \Psi}^{(0)}$, which can be chosen according to asymptotic analysis or heuristic methods. For example, we can solve the optimization problem \cite[(P1)]{BATS} to obtain an asymptotically optimal degree distribution ${\bm \Psi}^{(0)}$, where the recovery rate $\bar{\eta}$ used in \cite[(P1)]{BATS} can be set to $\min\{0.99,1-\mathtt{l}/\mathtt{r}\}$ empirically. In each iteration $\ell$, a potential new degree distribution is updated by
	\begin{equation}\nonumber
		{\bm \Psi}^{(\ell + 1)} = \frac{{\bm \Psi}^{(\ell)} + \delta {\bf e}_{d}}{1 + \delta},
	\end{equation}
	where $d \in [1:K]$ is the selection of a degree, $\delta \in [-{\bm \Psi}^{(\ell)}[d],\delta_{\rm max}]$ is the adjustment for the selected degree, and ${\bf e}_{d}$ is the all-zero vector except that the $d$-th component is one. Here, $\delta_{\rm max}$ is an empirical value, e.g., $1$. In each iteration, we try to find the optimal $\delta$ and $d$ such that ${\bm \Psi}^{(\ell + 1)}$ reduces the objective function by the largest amount. We use the brute-force search as follows.

Assuming that the adjustment step of $\delta$ is $s$, the following candidate $\delta$ will be generated for each degree $d$ in iteration $\ell$:
	\begin{multline}\nonumber
		{\rm For~each~}{\bf e}_{d}:\\
		 -{\bm \Psi}^{(\ell)}[d],-{\bm \Psi}^{(\ell)}[d]+s,\ldots, -{\bm \Psi}^{(\ell)}[d] + \left\lfloor\frac{\delta_{\rm max}+{\bm \Psi}^{(\ell)}[d]}{s}\right\rfloor s.
	\end{multline}
	Based on $d \in [1:K]$ and the corresponding candidate $\delta$, we have a lot of candidate degree distributions ${\bm \Psi}^{(\ell + 1)}$. Then, compute the objective function for each candidate degree distribution and find the optimal ${\bm \Psi}^{(\ell + 1)}$. When a new degree distribution that reduces the objective function cannot be found or the maximum number of iterations is reached, the optimization is terminated. Obviously, the above iterative optimization can only result in a sub-optimal solution.

\begin{remark}
	In order to efficiently compute $f_{\rm BP}(n_1,{\bm \Psi},\nu_{\rm min})$ in the iterative optimization, a total of $K \times \left(\left\lfloor \frac{(K-1)\mathtt{l}}{\mathtt{r}} \right\rfloor+1\right)$ values of $L(a,j;K,\mathtt{l},\mathtt{r})$ ($a\in [1:K]$, $j \in [0:\lfloor (K-1)\mathtt{l}/\mathtt{r}\rfloor]$) can be pre-generated and stored.
\end{remark}

\subsection{Numerical Results}
We provide two examples of the degree-distribution optimization. To be specific, we consider the precode rate $R' = 1$ (i.e., without precode) and $R' = 0.5$. For $R' = 0.5$, we use the $q$-ary $(3,6)$-regular LDPC precode, which is constructed by the PEG algorithm~\cite{PEG} and the non-zero elements on the edges are randomly chosen. Therefore, in the following examples, the LDPC-BATS codes have a \textit{fixed precode} rather than a random precode, which is more practical. All examples consider the line network of length 2 (i.e., 2 links and 3 nodes). Unless otherwise stated, the links have the same erasure probability. The degree distributions used in this subsection can be found in Appendix~\ref{tables}.

\begin{example}
	Consider $q = 256$, $K' = 128$, $K = 128$ (standard BATS) or $256$ (LDPC-BATS), and $M = 16$. In this case, the empirical rank distribution is ${\bf h}_1$, see Table~\ref{table_h1}. In the optimization, we set $\epsilon^* = 1$, i.e., only consider the BP decoding performance. This is because, as mentioned before, the upper bound for ML decoding is very loose for $q = 256$.
	
	First, let us see the BP performance improvement produced by the proposed optimization method. The simulation results are shown in Fig.~\ref{fig:DD_opt_q256_1}. In this figure, the three curves without a precode are computed by the exact recursive formula~\cite[Theorem 1]{FL_analysis_BATS}, and the remaining curves are obtained via Monte Carlo simulation. It is worthy noting that ``finite ($n_1 = n$)'' in the legend represents that the code is optimized by the \textit{approximate recursive formula} \cite[Theorem 16]{FL_analysis_BATS} at $n$.
	
	For the standard BATS codes, we optimize the degree distribution at the operating point $n_1 = 30$. In Sec.~\ref{subsection:numerical_1}, we see that the approximate recursive formula overestimates the error probability noticeably, and the upper bound is more accurate when $n$ is relatively large. Thus, in Fig.~\ref{fig:DD_opt_q256_1}, we observe that the degree distribution optimized by (\ref{eq:optimization}) leads to a smaller error probability when $n \ge n_1$.
	
	For LDPC-BATS codes with $R'=0.5$, both the asymptotic optimization method~\cite{BATS} and the finite-length optimization method~\cite{FL_analysis_BATS} show that the LDPC-BATS codes (with ${\bm \Psi}_{6}$ and ${\bm \Psi}_{2}$) perform \textit{worse} than the standard BATS codes (with ${\bm \Psi}_{1}$ and ${\bm \Psi}_{12}$). Does this mean that using a low-rate precode is detrimental to BP decoding? The answer is no. The reason is that the existing degree optimization methods do not provide a proper degree distribution for the LDPC-BATS code with $R' = 0.5$. Using (\ref{eq:optimization}), the LDPC-BATS code with ${\bm \Psi}_5$ has the best performance among all the codes in Fig.~\ref{fig:DD_opt_q256_1}. The performance improvement at low error probability region is significant. Define $n-K'/\bar{h}$ as the transmission overhead. To achieve an error probability below $0.001$, the standard BATS code and the optimized LDPC-BATS code need $n = 32$ and $n=21$, respectively. Thus, for BP decoding, the optimized LDPC-BATS code lowers the transmission overhead to 48.2\% of that of the standard BATS code.
	
	Then, we discuss the impact of the parameter selections in the optimization. Since this example only involves BP decoding, we will focus on $n_1$ and $\nu_{\rm min}$. Recall that the $(3,6)$-regular LDPC code of length 256 is used as the precode. For the corresponding LDPC ensemble, we have $\nu_{\rm min}^* = 7$, see (\ref{eq:max_nu_min}). Changing the selections of $n_1$ and $\nu_{\rm min}$ with $n_1 \in \{21,26\}$ and $\nu_{\rm min} \in \{1,7\}$, we construct four LDPC-BATS codes. The simulation results of these codes are shown in Fig.~\ref{fig:DD_opt_q256_2}. All the codes perform well when the error probability is above $0.05$. Below this error probability, the two codes with $n_1 = 26$ have a lower error floor, and the one with $\nu_{\rm min} = 7$ performs better. The reason behind this phenomenon is quite simple: When the LDPC-BATS code uses a PEG-LDPC precode, the error probability decreases \textit{rapidly} with $n$. When the error probability is not low enough, the upper bound overestimates the error probability, see Fig.~\ref{fig_LDPC_BATS_K256_performance}. Selecting a larger $n_1$ can avoid this issue. Choosing a larger $\nu_{\rm min}$ ensures that the upper bound for the ensemble is closer to the performance of the LDPC-BATS code with a PEG-LDPC precode.
	
	Finally, we briefly discuss the impact of the erasure probability of the links on BP decoding performance. The LDPC-BATS code with ${\bm \Psi}_5$ is considered, which is optimized for erasure probability of $0.2$. By varying the erasure probability of the links, we present the simulation results in Fig.~\ref{fig:DD_opt_q256_3}. We see that the LDPC-BATS code exhibits a certain level of robustness to small-scale variations in the erasure probability, but it is not a universal code. At the erasure probability of $0.2$, this code performs best (measured by the gap between error probability of $0.005$ and $K'/\bar{h}$). How to design a quasi-universal LDPC-BATS code requires further research. A potentially feasible way is to follow the optimization framework with multiple $\bf h$ in \cite{Xu2017QUBATS}, but the formulation of the optimization problem in \cite{Xu2017QUBATS} needs some refinements based on the theorems in this paper.
\end{example}

\begin{figure}[!t]
	\centering
	\includegraphics[width=3.5in]{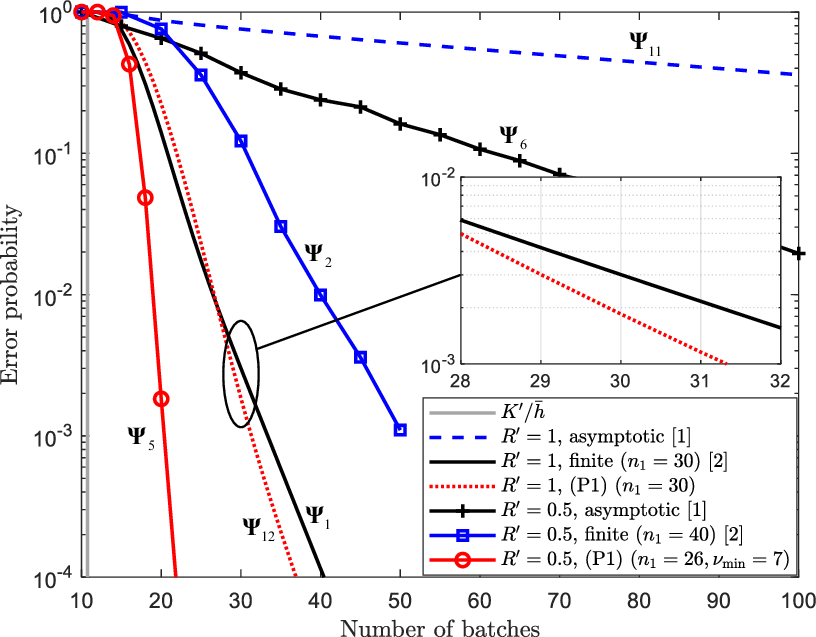}
	\caption{The BP decoding performance of several standard BATS and LDPC-BATS codes with $q=256$, $K' = 128$, and $M=16$.}
	\label{fig:DD_opt_q256_1}
\end{figure}

\begin{figure}[!t]
	\centering
	\includegraphics[width=3.5in]{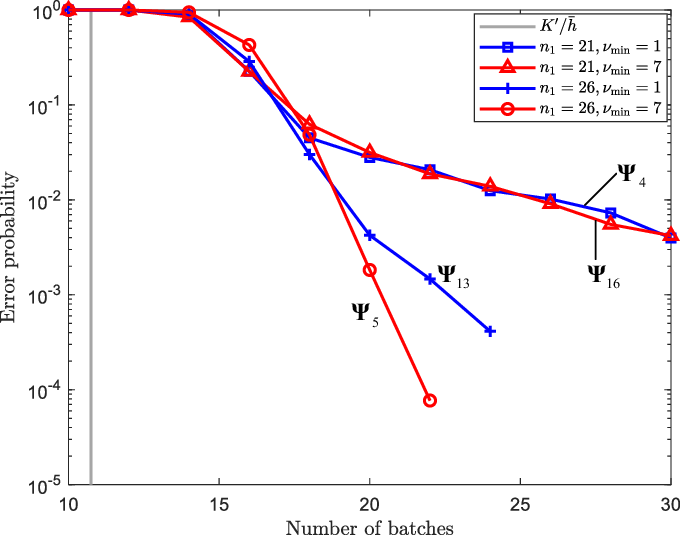}
	\caption{The BP decoding performance of four LDPC-BATS codes with $q=256$, $K' = 128$, and $M=16$, which are optimized at different $n_1$ and $\nu_{\rm min}$.}
	\label{fig:DD_opt_q256_2}
\end{figure}

\begin{figure}[!t]
	\centering
	\includegraphics[width=3.5in]{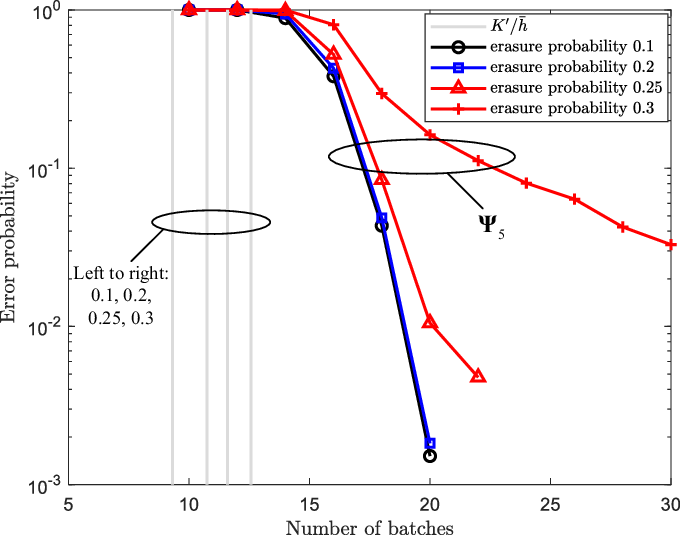}
	\caption{The BP decoding performance of the LDPC-BATS code with $q=256$, $K' = 128$, and $M=16$ through the line networks with different erasure probabilities. (This code is optimized for erasure probability $0.2$.)}
	\label{fig:DD_opt_q256_3}
\end{figure}

\begin{example}
	Consider $q = 2$, $K' = 128$, $K = 128$ (standard BATS) or $256$ (LDPC-BATS), and $M = 32$. In this case, the empirical rank distribution is ${\bf h}_3$, see Table~\ref{table_h3}. For $q = 2$, the upper bound for ML decoding is relatively tight, so we can optimize the degree distribution for both BP decoding and ML decoding. Here, we fix $n_1 = 15$, $\nu_{\rm min} = 7$, and $n_2 = 8$ in the optimization. Two settings of $\epsilon^*$, $0.05$ and $1$, will be compared.
	
	The simulation results of decoding performance are shown in Fig.~\ref{fig:DD_opt_q2_1}. For BP decoding, both the ML-optimized (i.e., $\epsilon^* = 0.05$) and the BP-optimized (i.e., $\epsilon^* = 1$) LDPC-BATS codes significantly outperform the asymptotically optimal LDPC-BATS code obtained by~\cite{BATS}. For achieving an error probability below $0.001$, the BP-optimized LDPC-BATS code reduces the transmission overhead to 37.8\% of that of the standard BATS code optimized by~\cite{FL_analysis_BATS}.
	
	Now consider the performance of ML decoding. To be specific, inactivation decoding \cite{InactivationPatent,Lazaro2017inactivation} is used. By accounting for the performance of ML decoding, the ML-optimized LDPC-BATS code achieves a gap of $1.5$ batches than the BP-optimized one. It is not surprising that the asymptotically optimal LDPC-BATS code has the best ML decoding performance among three LDPC-BATS codes, since its \textit{average degree} (about $31.1$) is larger than the other two LDPC-BATS codes (about $28.2$ and $23.7$). However, later we will see that the asymptotically optimal LDPC-BATS code has much higher ML decoding complexity. For achieving an error probability below $0.001$, the ML-optimized LDPC-BATS code reduces the transmission overhead to 22.0\% of that of the standard BATS code optimized by~\cite{FL_analysis_BATS}.
	
	Next, we discuss the decoding complexity of inactivation decoding, which is measured in terms of the \textit{average number of inactive symbols}. In our simulation, the inactive symbols are chosen uniformly at random from the active symbols\footnote{This strategy is sometimes referred to as ``random inactivation'' in the literature~\cite{FL_analysis_BATS,Lazaro2017inactivation}. This is certainly not the only feasible inactivation strategy, nor is it the one that results in the fewest inactive symbols. For more sophisticated inactivation strategies, we refer readers to~\cite{InactivationPatent,6266770}.}. Empirically speaking, worse BP performance tends to result in a larger average number of inactive symbols. In Fig.~\ref{fig:DD_opt_q2_2}, as expected, the complexity of the asymptotically optimal LDPC-BATS code is significantly higher than the other three codes. Due to the finite-length optimization for BP decoding, the average number of inactive symbols for the two LDPC-BATS codes optimized by (\ref{eq:optimization}) is less than $10$ when $n \ge 8$ and can even be fewer than that for the standard BATS code. This suggests a somewhat counterintuitive conclusion: As long as the degree distribution is appropriate, using a low-rate LDPC precode does not necessarily increase the complexity of inactivation decoding.\footnote{In contrast, when using a high-density parity-check precode, the complexity of inactivation decoding increases significantly with the number of parity-check symbols of the precode, regardless of how the degree distribution of the BATS code is optimized.} Interestingly, we observe that the intersection points marked in Fig.~\ref{fig:DD_opt_q2_1} and Fig.~\ref{fig:DD_opt_q2_2} are quite close, which somewhat supports the empirical view that improving BP performance helps reduce the complexity of inactivation decoding.
\end{example}

\begin{figure}[!t]
	\centering
	\includegraphics[width=3.5in]{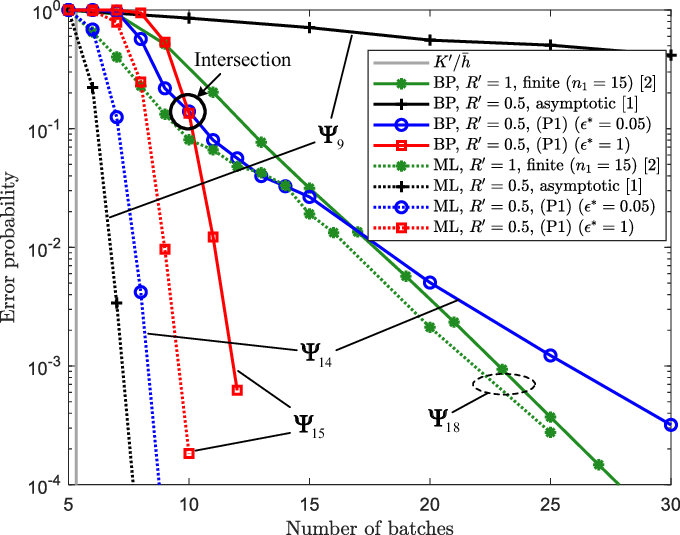}
	\caption{The BP decoding performance and the ML (i.e., inactivation) decoding performance of several standard BATS and LDPC-BATS codes with $q=2$, $K' = 128$, and $M=32$.}
	\label{fig:DD_opt_q2_1}
\end{figure}

\begin{figure}[!t]
	\centering
	\includegraphics[width=3.5in]{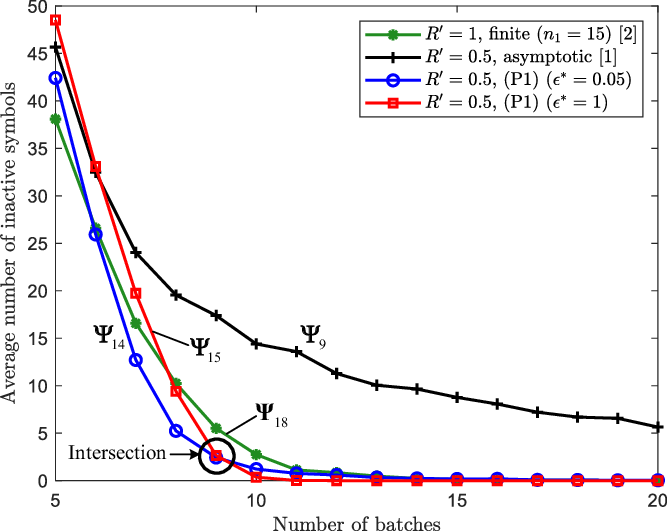}
	\caption{The inactivation decoding complexity of several standard BATS and LDPC-BATS codes with $q=2$, $K' = 128$, and $M=32$. (Inactive symbols are chosen uniformly at random.)}
	\label{fig:DD_opt_q2_2}
\end{figure}

\section{Conclusion}\label{sec:Conclusion}
In this paper, we derive two performance upper bounds for precoded BATS codes under BP decoding and ML decoding, respectively, and utilize the bounds to optimize the degree distribution of the codes. When a non-trivial precode is used, i.e., the precode rate is less than $1$, our work presents the first analytical results on both BP decoding and ML decoding of precoded BATS codes in the finite-length regime.

For BP decoding, the upper bound is applicable to LDPC-BATS codes, where the precode is an $(\mathtt{l},\mathtt{r})$-regular LDPC code chosen from the regular LDPC ensemble uniformly at random. We provide a necessary and sufficient condition such that the BP decoder for LDPC-BATS codes fails. This condition is based on a stopping-set-like structure in the Tanner graph, which is proposed in this paper. The accuracy of this upper bound is independent of $q$, i.e., the order of the finite field from which the code is constructed. The numerical results indicate that this upper bound is generally tight for relatively low error probability, e.g., $10^{-2}$ in general and $10^{-4}$ in some special cases.

For ML decoding, the upper bound is suitable for precoded BATS codes with any generic code as long as the precode's weight enumerator is known. This upper bound is a function of the number $n$ of transmitted batches, and clearly reveals how the error probability of ML decoding decreases exponentially with $n$. Although this upper bound is tight only for precoded BATS codes with a relatively small number of input symbols and low-order base fields, it represents the first analytical result on the error probability of precoded BATS codes (even with the precode rate $1$) under ML decoding. Tighter bounds for ML decoding of BATS codes with high-order base fields (e.g., $\mathbb{F}_{256}$) may deserve further research.

Degree-distribution optimization is a practical application of the derived upper bounds. We formulate an optimization problem based on the two upper bounds, and show how to numerically solve it by iterative optimization. The optimized LDPC-BATS codes outperform the conventional BATS codes that are optimized by the existing methods under BP decoding, inactivation decoding, or both. Even with a relatively low-rate precode, the optimized LDPC-BATS codes still achieve low complexity in inactivation decoding (i.e., have a small number of inactive symbols).

\appendices
\section{Derivation of the Bound on the Error Probability of BATS Codes Under BP Decoding}\label{Sec_Proof_Theorem_BP}
This appendix proves Theorem~\ref{theorem_LDPC_BATS_BP}. For clarity, we divide the proof into three subsections. As a starting point, we first discuss how a BP decoding failure of standard BATS codes occurs and derive an upper bound for standard BATS codes in Appendix~\ref{subsection:proof_BP_std_BATS} (Corollary~\ref{corollary_std_BATS_BP} can be proved independently by this subsection). Then, we derive two upper bounds on the probability that a subgraph of the LDPC precode is an extended stopping set (Definition \ref{def:ess}) in Appendix~\ref{subsection:proof_extended_stopping_set}. Finally, in Appendix~\ref{subsection:proof:theorem_1}, we prove Theorem~\ref{theorem_LDPC_BATS_BP} based on the bounding techniques for the aforementioned two parts.

The following notations for Tanner graphs will be used: For a node $o$ in the Tanner graph, given an edge set ${\cal E}' \subset {\cal E}$, where $\cal E$ is the set consisting of all edges in the Tanner graph, let ${\rm Neib}_{{\cal E}'}(o)$ be the set of neighbors of $o$ which can be reached through the edges in ${\cal E}'$. If the edge set is not specified for ${\rm Neib}(\cdot)$, let ${\rm Neib}(o)$ be the set of all neighbors of $o$ in the Tanner graph. The degree of a node $o$ is defined as ${\rm dg}(o) = |{\rm Neib}(o)|$.

\subsection{Upper Bound for Standard BATS Codes}\label{subsection:proof_BP_std_BATS}
Let ${\cal T} = ({\cal V},{\cal C},{\cal E})$ be a Tanner graph in the $(K,n,M,{\bm \Psi},{\bf h})_q$ BATS ensemble, where ${\cal V} = \{V_1,\ldots,V_K\}$ is the VN set, ${\cal C} = \{C_1,\ldots,C_n\}$ is the B-CN set, and ${\cal E} = \{(V_{i_1},C_{j_1}),(V_{i_2},C_{j_2}),\ldots\}$ is the edge set. Recall that $\cal T$ is a fixed graph and the batch equation ${\bf B}_i {\bf G}_i {\bf H}_i = {\bf Y}_i$ for $C_i$ is also deterministic. We introduce the rank of a B-CN $C_i$, which is equivalent to the decoding capability of $C_i$.

\begin{definition}
	For a B-CN $C_i$ whose generator matrix and transfer matrix are ${\bf G}_i$ and ${\bf H}_i$, respectively, fix a subset ${\cal S}$ of ${\rm Neib}(C_i)$. The rank of $C_i$ with ${\cal S}$ is defined as 
	\begin{equation}\label{eq_rank_subGH}
		{\rm rk}(C_i^{(\cal S)}) \triangleq {\rm rk}({\bf G}_i^{(\cal S)} {\bf H}_i),
	\end{equation}
	where ${\bf G}_i^{(\cal S)}$ is the submatrix of ${\bf G}_i$ formed by the $|{\cal S}|$ rows of ${\bf G}_i$ corresponding to the VNs in ${\cal S}$. If ${\cal S} = {\rm Neib}(C_i)$, define the shorthand notation 
	\begin{equation}
		{\rm rk}(C_i) \triangleq {\rm rk}(C_i^{{\rm Neib}(C_i)}).
	\end{equation}
\end{definition}

In BP decoding, the VNs in ${\cal S} \subset {\rm Neib}(C_i)$ can be decoded by $C_i$ \textit{only if} ${\rm rk}(C_i^{(\cal S)}) = |{\cal S}|$. $\hbar_k'$ defined in (\ref{eq_hbar_prime}) is in fact the probability that ${\rm rk}(C_i^{(\cal S)}) = k$ given $|{\cal S}| = k$. Using the concepts introduced before, we now define a class of \textit{stopping graphs} ${\cal T}$, which includes at least one non-empty subgraphs that cannot be removed by the peeling decoder.
\begin{definition}[BATS Stopping Graph]\label{Def_stopping_graph_BATS}
	A Tanner graph ${\cal T} = ({\cal V},{\cal C},{\cal E})$ of the BATS code is a stopping graph if there exist a non-empty set ${\cal A} \subset {\cal V}$ and a set (can be empty) ${\cal B} \subset {\cal C}$ such that the following constraints hold:
	\begin{itemize}
		\item[${\rm C1}$:] For any $V \in {\cal A}$, ${\rm Neib}_{\cal E}(V) \cap ({\cal C} \backslash {\cal B}) = \emptyset$.
		\item[${\rm C2}$:] For any $C \in {\cal B}$, $|{\rm Neib}(C) \cap {\cal A}| > {\rm rk}(C^{({\rm Neib}(C) \cap {\cal A})}) \ge 0$.
		\item[${\rm C3}$:] For any $V \in ({{\cal V} \backslash {\cal A}})$, ${\rm Neib}_{\cal E}(V) \cap ({\cal C} \backslash {\cal B}) \ne \emptyset$.
	\end{itemize}
\end{definition}

For a parameter pair $({\cal A},{\cal B})$ in ${\cal T}$, we say that $({\cal A},{\cal B})$ is \textit{valid} for ${\cal T}$ if $({\cal A},{\cal B})$ satisfies the three constraints ${\rm C1}$, ${\rm C2}$, and ${\rm C3}$.

In fact, the subgraph constructed by the nodes in $({\cal A},{\cal B})$ can be regarded as a generalization of \textit{stopping sets} of LDPC codes \cite{FL_analysis_LDPC}. The following lemma demonstrates that valid parameter pairs possess the property of composability, similar to the property exhibited by stopping sets.
\begin{lemma}\label{Lemma_Combine_BATS_SS}
	For a Tanner graph ${\cal T} = ({\cal V},{\cal C},{\cal E})$, if two parameter pairs $({\cal A}_1, {\cal B}_1)$ and $({\cal A}_2, {\cal B}_2)$ are valid for $\cal T$, then $\left({{\cal A}_1 \cup {\cal A}_2} \cup {\cal A}_e,{{\cal B}_1 \cup {\cal B}_2}\right)$ is valid for $\cal T$, where
	\begin{multline}\nonumber
	{\cal A}_e \triangleq \biggg\{ V \in \bigcup_{C \in ({{\cal B}_1 \cup {\cal B}_2})} \left({\rm Neib}(C) \setminus ({\cal A}_1 \cup {\cal A}_2)\right):\\ {\rm Neib}_{\cal E}(V) \cap ({\cal C} \backslash ({\cal B}_1 \cup {\cal B}_2)) = \emptyset \biggg\}.
	\end{multline}
\end{lemma}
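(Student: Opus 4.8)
The plan is to verify the three constraints C1, C2, C3 of Definition~\ref{Def_stopping_graph_BATS} directly for the pair $({\cal A}_1 \cup {\cal A}_2 \cup {\cal A}_e, {\cal B}_1 \cup {\cal B}_2)$, using the fact that each of $({\cal A}_1,{\cal B}_1)$ and $({\cal A}_2,{\cal B}_2)$ already satisfies them. Throughout, write ${\cal A} = {\cal A}_1 \cup {\cal A}_2 \cup {\cal A}_e$ and ${\cal B} = {\cal B}_1 \cup {\cal B}_2$. First I would check C1: take any $V \in {\cal A}$. If $V \in {\cal A}_1$, then ${\rm Neib}_{\cal E}(V) \cap ({\cal C} \setminus {\cal B}_1) = \emptyset$ by C1 for the first pair, and since ${\cal C} \setminus {\cal B} \subseteq {\cal C} \setminus {\cal B}_1$ we get ${\rm Neib}_{\cal E}(V) \cap ({\cal C} \setminus {\cal B}) = \emptyset$; symmetrically if $V \in {\cal A}_2$; and if $V \in {\cal A}_e$ the condition ${\rm Neib}_{\cal E}(V) \cap ({\cal C} \setminus {\cal B}) = \emptyset$ is exactly how ${\cal A}_e$ is defined. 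So C1 holds.

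For C3 I would argue by contraposition on the partition. Take $V \in {\cal V} \setminus {\cal A}$. Then in particular $V \notin {\cal A}_1 \cup {\cal A}_2$, so C3 for the first pair gives ${\rm Neib}_{\cal E}(V) \cap ({\cal C} \setminus {\cal B}_1) \neq \emptyset$ — but this is the wrong complement. The correct move is: since $V \notin {\cal A}_e$ while $V \notin {\cal A}_1 \cup {\cal A}_2$, either $V$ does not lie in $\bigcup_{C \in {\cal B}} {\rm Neib}(C)$ at all, or it does but fails the emptiness condition, i.e. ${\rm Neib}_{\cal E}(V) \cap ({\cal C} \setminus {\cal B}) \neq \emptyset$. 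In the first sub-case, $V$ has no neighbor in ${\cal B}$, so every edge from $V$ goes to ${\cal C} \setminus {\cal B}$; combined with C3 for pair~1 (which guarantees $V$ has at least one edge, namely to ${\cal C}\setminus{\cal B}_1$, and that edge either lands in ${\cal B}_2\setminus{\cal B}_1$ — impossible here since $V$ has no ${\cal B}$-neighbor — or in ${\cal C}\setminus{\cal B}$), we conclude ${\rm Neib}_{\cal E}(V)\cap({\cal C}\setminus{\cal B})\neq\emptyset$. In the second sub-case it is immediate. This case analysis is the fiddly part and I'd want to lay it out carefully, perhaps reorganizing it as: for $V\notin{\cal A}$, C3 for pair~1 gives an edge to some $C\in{\cal C}\setminus{\cal B}_1$; if $C\notin{\cal B}$ we are done, otherwise $C\in{\cal B}_2$, so $V$ has a neighbor in ${\cal B}$, hence $V$ being excluded from ${\cal A}_e$ forces ${\rm Neib}_{\cal E}(V)\cap({\cal C}\setminus{\cal B})\neq\emptyset$ directly. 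That is cleaner.

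The main obstacle is C2, because the rank function ${\rm rk}(C^{({\rm Neib}(C)\cap{\cal A})})$ involves the \emph{enlarged} set ${\cal A}$, which is a superset of both ${\cal A}_1$ and ${\cal A}_2$, so the rank can only have gone up relative to what C2 for the individual pairs controls. Fix $C \in {\cal B}$; WLOG $C \in {\cal B}_1$. Let ${\cal S} = {\rm Neib}(C) \cap {\cal A}$ and ${\cal S}_1 = {\rm Neib}(C) \cap {\cal A}_1$, so ${\cal S}_1 \subseteq {\cal S}$ and by C2 for pair~1, $|{\cal S}_1| > {\rm rk}(C^{({\cal S}_1)})$. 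I need $|{\cal S}| > {\rm rk}(C^{({\cal S})})$. The key observation is that every VN in ${\cal S} \setminus {\cal S}_1$ lies in ${\cal A}_2 \cup {\cal A}_e$, and by C1 (just established) such a VN has \emph{all} its edges going into ${\cal B}$ — in particular the edge to $C$ is fine, but more importantly, I should track how adding these VNs one at a time changes the rank. Adding one VN (one row $\mathbf{g}$ to ${\bf G}_i^{({\cal S}_1)}$) increases ${\rm rk}({\bf G}_i^{({\cal S})}{\bf H}_i)$ by at most one, so the gap $|{\cal S}| - {\rm rk}(C^{({\cal S})})$ never decreases as we grow the set from ${\cal S}_1$ to ${\cal S}$; since it starts at least $1$ by C2 for pair~1, it stays at least $1$. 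Hence C2 holds for $C$. This rank-monotonicity argument is short once phrased correctly, but it is the conceptual heart of the lemma, and I expect writing it so the "gap is non-decreasing under adding rows" step is airtight will be where most of the care goes. Once C1, C2, C3 are all verified, the pair $({\cal A},{\cal B})$ is valid by definition, completing the proof.
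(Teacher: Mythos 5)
Your proposal is correct and follows essentially the same route as the paper: verify ${\rm C1}$ directly from the containment ${\cal C}\setminus({\cal B}_1\cup{\cal B}_2)\subseteq{\cal C}\setminus{\cal B}_i$ and the definition of ${\cal A}_e$, prove ${\rm C2}$ via the observation that enlarging ${\rm Neib}(C)\cap{\cal A}$ by one VN adds one row to ${\bf G}^{(\cdot)}{\bf H}$ and hence raises the rank by at most one, so the deficiency $|{\cal S}|-{\rm rk}(C^{({\cal S})})$ is non-decreasing, and prove ${\rm C3}$ by the contrapositive using ${\rm C3}$ for one of the original pairs. Your ${\rm C3}$ case analysis (the edge guaranteed by ${\rm C3}$ for pair~1 lands either in ${\cal C}\setminus{\cal B}$ or in ${\cal B}_2$, the latter forcing membership in ${\cal A}_e$ unless the emptiness condition fails) is in fact spelled out more carefully than in the paper, which compresses this step into a single appeal to the definition of ${\cal A}_e$.
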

\begin{proof}
	See Appendix~\ref{appendix:proofs_for_BP}.
\end{proof}

\begin{remark}
	In general, there are more than one parameter pairs which are valid for a graph $\cal T$. However, for a fixed B-CN set ${\cal B}$, the valid $({\cal A},{\cal B})$ is \textit{unique} (if it exists) according to constraint ${\rm C3}$. Example~\ref{Example_BATS_ss} shows that different valid parameter pairs exist for a Tanner graph.
\end{remark}

\begin{figure*}[!t]
	\centering
	\includegraphics[width=6.5in]{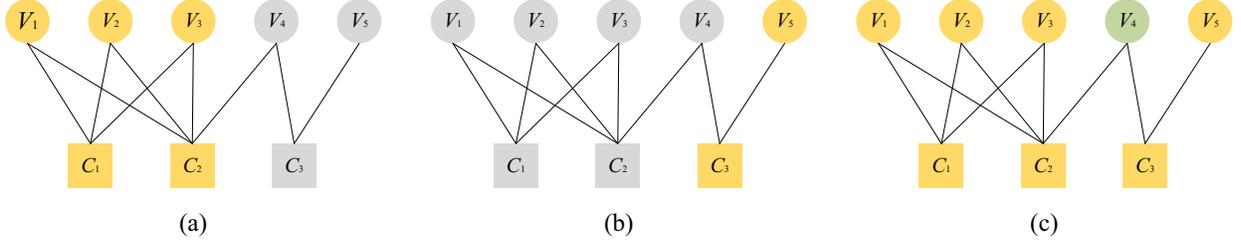}
	\caption{An example of BATS stopping graphs. The nodes in ${\cal A}$ or ${\cal B}$ are marked in yellow, and the node in ${\cal A}_e$ is marked in green.}
	\label{fig_BATS_ss}
\end{figure*}

\begin{example}\label{Example_BATS_ss}
	Fix a base field $\mathbb{F}_2$ and batch size $M = 4$. Consider a graph $\cal T$ shown in Fig.~\ref{fig_BATS_ss}. Assume that this graph has generator matrices ${\bf G}_i$ and transfer matrices ${\bf H}_i$ as follows:
	\begin{equation}\nonumber
		\begin{aligned}
		&{\bf G}_1 = \begin{bmatrix}
			1 & 1 & 1 & 0\\
			1 & 1 & 1 & 1\\
			1 & 1 & 1 & 0
		\end{bmatrix},
		{\bf H}_1 = \begin{bmatrix}
			1 &  1 &  1 &  1\\
			1 &  0 &  1 &  0\\
			0 &  0 &  0 &  1\\
			1 &  1 &  1 &  0
		\end{bmatrix}\\
		&\Rightarrow {\bf G}_1 {\bf H}_1 = \begin{bmatrix}
		    0  & 1 &  0  & 0\\
			1  & 0 &  1  & 0\\
			0  & 1 &  0  & 0
		\end{bmatrix}.
		\end{aligned}
	\end{equation}
	\begin{equation}\nonumber
		\begin{aligned}
	&{\bf G}_2 = \begin{bmatrix}
		0 &  1 &  1  & 1\\
		1 &  1 &  1  & 1\\
		1 &  1 &  0  & 0\\
		1 &  0 &  1  & 0
	\end{bmatrix},
	{\bf H}_2 = \begin{bmatrix}
		0 &  1 &  0 &  0\\
		0 &  0 &  0 &  0\\
		0 &  1 &  1 &  0\\
		1 &  0 &  1 &  1
	\end{bmatrix}\\
	&\Rightarrow{\bf G}_2 {\bf H}_2 = \begin{bmatrix}
   		1 &  1 &  0 &  1\\
		1 &  0 &  0 &  1\\
		0 &  1 &  0 &  0\\
		0 &  0 &  1 &  0
	\end{bmatrix}.
\end{aligned}
\end{equation}
\begin{equation}\nonumber
	\begin{aligned}
	&{\bf G}_3 = \begin{bmatrix}
   		1 &  0 &  1 &  0\\
		0 &  0 &  0 &  0
	\end{bmatrix},
	{\bf H}_3 = \begin{bmatrix}
	   	1 &  1 &  1 &  1\\
		1 &  0 &  0 &  0\\
		0 &  0 &  1 &  1\\
		1 &  0 &  0 &  1
	\end{bmatrix}\\
	&\Rightarrow{\bf G}_3 {\bf H}_3 = \begin{bmatrix}
		1 &  1 &  0  & 0\\
		0 &  0 &  0  & 0
	\end{bmatrix}.
\end{aligned}
\end{equation}

In Fig.~\ref{fig_BATS_ss}(a), we have ${\cal A} = \{V_1,V_2,V_3\}$ and ${\cal B} = \{C_1,C_2\}$. In this case, ${\rm rk}(C_1^{{\rm Neib}(C_1) \cap {\cal A}}) = {\rm rk}({\bf G}_1 {\bf H}_1) = 2$ and ${\rm rk}(C_2^{{\rm Neib}(C_2) \cap {\cal A}}) = {\rm rk}(({\bf G}_2 {\bf H}_2)[1:3,:]) = 2$.
In Fig.~\ref{fig_BATS_ss}(b), we have ${\cal A} = \{V_5\}$ and ${\cal B} = \{C_3\}$. In this case, ${\rm rk}(C_3^{{\rm Neib}(C_3) \cap {\cal A}}) = {\rm rk}(\left({\bf G}_3 {\bf H}_3\right)[2,:]) = 0$.
Fig.~\ref{fig_BATS_ss}(c) is an application of Lemma \ref{Lemma_Combine_BATS_SS}, where ${\cal A}_e = \{V_4\}$ is marked in green.
\end{example}

Next, we will utilize the graph-related concepts introduced earlier to analyze the performance of BP decoding. The following lemma provides a necessary and sufficient condition such that BP decoding fails for ${\cal T}$.

\begin{lemma}\label{Lemma_BATS_Stops_at_Stopping_Graph}
	For a Tanner graph ${\cal T} = ({\cal V},{\cal C},{\cal E})$ of a BATS code, ${\cal T}$ cannot be reduced to an empty graph after BP (peeling) decoding if and only if ${\cal T}$ is a stopping graph.
\end{lemma}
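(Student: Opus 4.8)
The plan is to argue both directions directly from the definition of the peeling decoder, the only nontrivial ingredient being a monotonicity fact about decodability. Fix a run of the decoder on $\cal T$ and write ${\cal I}_C^{(t)}$ for the residual neighbourhood of a B-CN $C$ at step $t$ (the neighbours of $C$ not yet removed), so that the residual matrix ${\bf G}_C^{(t)}$ is ${\bf G}_C$ with the rows of removed VNs deleted. The elementary bookkeeping I will use repeatedly is that for any ${\cal S}\subseteq{\cal I}_C^{(t)}$ the submatrix of ${\bf G}_C^{(t)}{\bf H}_C$ indexed by $\cal S$ is exactly ${\bf G}_C^{({\cal S})}{\bf H}_C$, whose rank equals ${\rm rk}(C^{({\cal S})})$ and is independent of $t$; in particular ${\rm rk}({\bf G}_C^{(t)}{\bf H}_C)\le|{\cal I}_C^{(t)}|$ always, with equality being precisely decodability of $C$ at step $t$.

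\emph{Sufficiency.} Suppose $({\cal A},{\cal B})$ is valid for $\cal T$. I will show by induction on the decoding step that no VN of $\cal A$ is ever removed; since ${\cal A}\neq\emptyset$ this forces the residual graph to stay nonempty. Assume the claim holds before step $t$ and that a decodable VN $V$ is peeled at step $t$; suppose for contradiction $V\in{\cal A}$. Then some B-CN $C$ with $V\in{\cal I}_C^{(t)}$ is decodable, i.e. ${\rm rk}({\bf G}_C^{(t)}{\bf H}_C)=|{\cal I}_C^{(t)}|$. By ${\rm C1}$ every B-CN adjacent to $V$ lies in ${\cal B}$, so $C\in{\cal B}$; by the induction hypothesis ${\rm Neib}(C)\cap{\cal A}\subseteq{\cal I}_C^{(t)}$, and a submatrix of a full-row-rank matrix is itself full-row-rank, so ${\rm rk}(C^{({\rm Neib}(C)\cap{\cal A})})=|{\rm Neib}(C)\cap{\cal A}|$, contradicting ${\rm C2}$. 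Hence $V\notin{\cal A}$ and the induction goes through.

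\emph{Necessity.} Run the decoder until no decodable VN remains; let ${\cal A}$ be the set of surviving VNs (nonempty by hypothesis) and ${\cal B}$ the set of surviving B-CNs. The auxiliary fact is: \emph{if a B-CN $C$ is decodable at some step $t$, then $C$ and every VN in ${\cal I}_C^{(t)}$ are eventually removed.} Indeed, deleting one row of a full-row-rank matrix keeps it full-row-rank, so once ${\rm rk}({\bf G}_C^{(\cdot)}{\bf H}_C)=|{\cal I}_C^{(\cdot)}|$ holds it is preserved as neighbours of $C$ are peeled, making $C$ and each of its residual neighbours decodable at every subsequent step until removed; since the decoder halts only when no decodable VN exists, ${\cal I}_C$ eventually empties and $C$ is removed. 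From this I verify the three constraints for $({\cal A},{\cal B})$: a VN $V\in{\cal A}$ is never removed, so any B-CN adjacent to $V$ always retains $V$ as a neighbour and is never removed, giving ${\rm C1}$; a removed VN $V$ was peeled through some B-CN $C$ decodable at the time with $V\in{\cal I}_C$, and by the auxiliary fact $C$ is removed, so $C\in{\rm Neib}_{\cal E}(V)\cap({\cal C}\setminus{\cal B})$, giving ${\rm C3}$; finally, for $C\in{\cal B}$ the terminal residual neighbourhood is ${\cal I}_C^{(\mathrm{final})}={\rm Neib}(C)\cap{\cal A}$, which is nonempty (else $C$ would have been removed) and $C$ is non-decodable at termination, so ${\rm rk}(C^{({\rm Neib}(C)\cap{\cal A})})={\rm rk}({\bf G}_C^{(\mathrm{final})}{\bf H}_C)<|{\rm Neib}(C)\cap{\cal A}|$, giving ${\rm C2}$. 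Hence $\cal T$ is a stopping graph.

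The step I expect to be the main obstacle is the auxiliary monotonicity fact in the necessity direction: one must argue carefully that decodability of a B-CN, and of all of its currently residual VN neighbours, is invariant under peeling steps that remove some of those neighbours, which hinges on the submatrix identification between ${\bf G}_C^{(t)}{\bf H}_C$ and the matrices ${\bf G}_C^{({\cal S})}{\bf H}_C$ together with the rank-drops-by-at-most-one principle; everything else is a direct reading of the decoder's update rules against ${\rm C1}$–${\rm C3}$. (As a by-product, combining the two implications shows that whether the decoder empties $\cal T$ — and indeed the terminal residual graph — does not depend on the order in which decodable VNs are peeled, so the phrasing of the lemma is unambiguous; Lemma~\ref{Lemma_Combine_BATS_SS} is not needed here but is consistent with this picture.)
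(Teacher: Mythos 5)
Your proof is correct, and while the sufficiency direction is essentially the paper's argument spelled out in more detail (the paper compresses it to one sentence invoking ${\rm C1}$ and ${\rm C2}$; your induction with the submatrix-of-a-full-row-rank-matrix step is exactly what that sentence hides), your necessity direction takes a genuinely different route. The paper proves necessity by \emph{reconstruction}: it starts from the terminal residual graph, where validity of $({\cal A},{\cal B})$ is immediate, and adds the peeled VNs back one at a time in reverse order, showing by induction that validity is preserved at every step until the original $\cal T$ is recovered. You instead work forward: you establish the monotonicity fact that a B-CN, once decodable, stays decodable under further peeling (rank of a full-row-rank matrix survives row deletion) and hence is eventually removed together with all of its residual neighbours, and then read off ${\rm C1}$, ${\rm C2}$, ${\rm C3}$ directly from the terminal configuration. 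The two arguments rest on the same linear-algebra kernel --- the identification of ${\bf G}_C^{(t)}{\bf H}_C$ restricted to $\cal S$ with ${\bf G}_C^{({\cal S})}{\bf H}_C$ and the stability of full row rank under row deletion --- but yours is more self-contained and yields, as you note, order-independence of the terminal residual graph as a free corollary, whereas the paper's reverse-induction scaffolding is what it reuses almost verbatim for the LDPC-BATS extension (Lemma~\ref{Lemma_LDPCBATS_Stops_at_Stopping_Graph}) and what mirrors the reconstruction recursions in its counting arguments; your forward argument would need the extended-stopping-set bookkeeping for ${\cal X}$ grafted on separately to cover that case.
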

\begin{proof}
	See Appendix \ref{appendix:proofs_for_BP}.
\end{proof}

We form some subsets of the BATS ensemble, such that the union of them contains all stopping graphs in the ensemble. Letting ${\cal A}^{(a)}(i_1,i_2,\ldots,i_a) = \{V_{i_1},V_{i_2},\ldots,V_{i_a}\}$ and ${\cal B}^{(b)}(j_1,j_2,\ldots,j_b) = \{C_{j_1},C_{j_2},\ldots,C_{j_b}\}$, then $({\cal A}^{(a)}(i_1,i_2,\ldots,i_a),{\cal B}^{(b)}(j_1,j_2,\ldots,j_b))$ represents a parameter pair with \textit{fixed} $a$ VNs and \textit{fixed} $b$ B-CNs in $\cal T$. When the specified indices $i_1,i_2,\ldots,i_a$ for VNs and $j_1,j_2,\ldots,j_b$ for B-CNs do not play a role in the context, we write $({\cal A}^{(a)},{\cal B}^{(b)})$ for short. The set of all stopping graphs, for which $({\cal A}^{(a)},{\cal B}^{(b)})$ is valid for \textit{all graphs} in this set, is formed as
	\begin{equation}
		G_{\rm stop}({\cal A}^{(a)},{\cal B}^{(b)}) \triangleq \left\{{\cal T}: ({\cal A}^{(a)},{\cal B}^{(b)}) \text{ is valid for }{\cal T}\right\}.
	\end{equation}
There are totally $\sum_{a=1}^{K} \sum_{b=0}^{n} \binom{K}{a}\binom{n}{b}$ such sets and their union contains all stopping graphs in the BATS ensemble. Note that different sets $G_{\rm stop}({\cal A}^{(a)}(i_1,i_2,\ldots,i_a),{\cal B}^{(b)}(j_1,j_2,\ldots,j_b))$ and $G_{\rm stop}({\cal A}^{(a)}(i_1',i_2',\ldots,i_a'),{\cal B}^{(b)}(j_1',j_2',\ldots,j_b'))$ may not be disjoint. For example, the graph shown in Fig.~\ref{fig_BATS_ss} belongs to three different sets $G_{\rm stop}({\cal A}^{(a)},{\cal B}^{(b)})$. Let ${\cal T}^*$ denote the random variable representing a graph chosen from the $(K,n,M,{\bm \Psi},{\bf h})_q$ BATS ensemble uniformly at random. The ensemble-average error probability $\mathbb{E}\left[P_{\textsf{E}}^{\rm BP}({\cal T})\right]$, i.e., the probability that BP decoding for ${\cal T}^*$ fails, can be bounded by
\begin{equation}\label{eq_P_err_Union}
		\begin{aligned}
			&\mathbb{E}\left[P_{\textsf{E}}^{\rm BP}({\cal T})\right]  = \Pr\left\{{\cal T}^* \in \bigcup_{{\cal A} \ne \emptyset, {\cal A}\subset{\cal V},{\cal B}\subset{\cal C}} {G_{\rm stop}({\cal A},{\cal B})}\right\}\\			
			& \le \min\left\{1, \sum_{a=1}^{K} \sum_{b=0}^{n} \Pr\left\{{\cal T}^* \in \bigcup_{\substack{{\cal A}\subset{\cal V},{\cal B}\subset{\cal C},\\|{\cal A}|=a,|{\cal B}|=b}} {G_{\rm stop}({\cal A},{\cal B})}\right\}\right\}\\
			& \le \min\biggg\{1, \sum_{a=1}^{K} \sum_{b=0}^{n} \binom{K}{a}\binom{n}{b} \\
			&~~~~~~~~~~~~~~~~~~~~~~~~~\cdot \Pr\left\{{\cal T}^* \in {G_{\rm stop}({\cal A}^{(a)},{\cal B}^{(b)})}\right\}\biggg\}.
		\end{aligned}
\end{equation}

The following lemma gives the edge distribution of a B-CN, and will be used in the computation of the probability $\Pr\left\{{\cal T}^* \in {G_{\rm stop}({\cal A}^{(a)},{\cal B}^{(b)})}\right\}$ in the last line of (\ref{eq_P_err_Union}).
\begin{lemma}\label{Lemma_BCN_neighbor_partition}
	Let $\{{\cal V}_1,{\cal V}_2,\ldots,{\cal V}_p\}$ be a partition of $\cal V$. Let $C$ be a randomly selected B-CN in ${\cal T}^*$, where ${\cal T}^*$ is a random graph chosen from the $(K,n,M,{\bm \Psi},{\bf h})_q$ BATS ensemble uniformly at random. For $p$ non-negative integers $k_1,\ldots,k_p$ such that $\Psi_{\sum_{j=1}^{p} k_j} > 0$, we have 
	\begin{multline}\nonumber
		\Pr\left\{\left.\begin{aligned}&|{\rm Neib}(C) \cap {\cal V}_1| = k_1,\\ &\ldots,\\ &|{\rm Neib}(C) \cap {\cal V}_p| = k_p \end{aligned}\right| {\rm dg}(C) = \sum_{j=1}^{p} k_j\right\} \\= \frac{\binom{|{\cal V}_1|}{k_1}\binom{|{\cal V}_2|}{k_2}\cdots \binom{|{\cal V}_p|}{k_p}}{\binom{K}{k_1+k_2+\cdots+k_p}}.
	\end{multline}
\end{lemma}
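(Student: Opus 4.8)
The plan is to compute the joint distribution of how $C$'s neighbors land in the partition blocks by conditioning on the degree and using the sampling mechanism described in Definition~\ref{def:BATS_ensemble}. Fix the degree $\mathrm{dg}(C) = \sum_{j=1}^p k_j = d$. According to the BATS ensemble construction, once the degree $d$ is sampled, the $d$ neighbors of $C$ are chosen as a uniformly random $d$-subset of the $K$ VNs; the matrices $\mathbf{G}$ and $\mathbf{H}$ are then attached but do not affect which VNs are adjacent. So conditioned on $\mathrm{dg}(C) = d$, the neighbor set $\mathrm{Neib}(C)$ is uniform over all $\binom{K}{d}$ size-$d$ subsets of $\mathcal V$.

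Given this, the event in question is a classical multivariate hypergeometric event: we are asking for the probability that a uniform random $d$-subset of a $K$-element set intersects the block $\mathcal V_\ell$ in exactly $k_\ell$ elements, for each $\ell \in [1:p]$, where $\{\mathcal V_1,\ldots,\mathcal V_p\}$ partitions $\mathcal V$. The number of $d$-subsets meeting each block in the prescribed sizes is $\prod_{\ell=1}^p \binom{|\mathcal V_\ell|}{k_\ell}$ (choose $k_\ell$ VNs from $\mathcal V_\ell$ independently across blocks; since the blocks are disjoint and $\sum_\ell k_\ell = d$, any such choice yields a valid $d$-subset and every $d$-subset with the prescribed intersection profile arises exactly once). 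Dividing by the total count $\binom{K}{d}$ gives
\begin{equation}\nonumber
	\Pr\left\{|\mathrm{Neib}(C)\cap\mathcal V_\ell| = k_\ell,\ \ell\in[1:p]\ \middle|\ \mathrm{dg}(C) = d\right\} = \frac{\prod_{\ell=1}^p \binom{|\mathcal V_\ell|}{k_\ell}}{\binom{K}{d}},
\end{equation}
which is exactly the claimed formula once $d$ is written back as $k_1 + \cdots + k_p$.

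One point that needs a brief word is \emph{why} ``$C$ is a randomly selected B-CN of a randomly sampled graph $\mathcal T^*$'' inherits the clean uniform-subset conditional law. This follows because the ensemble samples the $n$ B-CNs in an exchangeable way — each added B-CN independently draws its degree from $\bm\Psi$ and then a uniform random subset of $\mathcal V$ of that size — so picking one of the $n$ B-CNs at random (or, equivalently, picking a fixed index $C_i$, since they are i.i.d.) produces a B-CN whose neighborhood, conditioned on its degree, is a uniform subset; the hypothesis $\Psi_d > 0$ just ensures the conditioning event has positive probability. I do not anticipate a genuine obstacle here: the statement is essentially a restatement of the multivariate hypergeometric distribution, and the only care required is matching the ensemble's sampling description to the ``uniform $d$-subset'' picture. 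The mild subtlety is ensuring the reader sees that $\mathbf{G}$ and $\mathbf{H}$, though part of the sampling, are irrelevant to adjacency, so they can be marginalized out without changing the conditional neighbor law.
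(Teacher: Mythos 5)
Your proof is correct and takes essentially the same approach as the paper, which simply observes that each B-CN chooses its $\mathrm{dg}(C)$ distinct neighbors uniformly at random and leaves the multivariate hypergeometric computation implicit. Your write-up just spells out the counting argument and the marginalization of $\mathbf{G}$ and $\mathbf{H}$ in more detail.
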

\begin{proof}
	This lemma follows from the fact that each B-CN $C$ independently chooses ${\rm dg}(C)$ distinct neighbors uniformly at random.
\end{proof}

Now, we study the formula for $\Pr\{{\cal T}^* \in {G_{\rm stop}({\cal A}^{(a)},{\cal B}^{(b)})}\}$. Let $E_i(a,b)$ denote the event that $({\cal A}^{(a)},{\cal B}^{(b)})$ in ${\cal T}^*$ satisfies constraint ${\rm C}i$, ${i} \in \{1,2,3\}$. First of all, we can see that $E_2(a,b)$ is independent of both $E_1(a,b)$ and $E_3(a,b)$, because ${\rm C2}$ restrict the B-CNs in ${\cal B}^{(b)}$, while ${\rm C1}$ and ${\rm C3}$ restrict the other B-CNs. We can develop $\Pr\{{\cal T}^* \in {G_{\rm stop}({\cal A}^{(a)},{\cal B}^{(b)})}\}$ as
\begin{multline}\label{eq_probability_decomposition}
	\Pr\{{\cal T}^* \in {G_{\rm stop}({\cal A}^{(a)},{\cal B}^{(b)})}\} = \Pr\{E_1(a,b)\}\\ \cdot \Pr\{E_2(a,b)\}  \Pr\{E_3(a,b)|E_1(a,b)\}.
\end{multline}

Recall (\ref{eq_varphi}) and (\ref{eq_Phi}) that for $d \in [1:K]$
\begin{equation}\nonumber
	\varphi_{d,k} = \Psi_d \frac{\binom{k}{d}}{\binom{K}{d}},\\
\end{equation}
and for $d \in [0:K]$
\begin{equation}\nonumber
		{\Phi}_{d}^{(k)} = \begin{cases}
	\frac{\varphi_{d,k}}{\sum_{d' = 1}^{k} \varphi_{d',k}}&\text{if }d \ge 1 \text{ and } k \ge D_{\rm min},\\
	1&\text{if }d=0 \text{ and } k < D_{\rm min},\\
	0&\text{otherwise,}
	\end{cases} 
\end{equation}
where $D_{\rm min}$ is the smallest degree $d$ with $\Psi_d > 0$. According to Lemma \ref{Lemma_BCN_neighbor_partition}, $\varphi_{d,k}$ is the probability that a B-CN $C$ is of degree $d$ and all neighbors of $C$ are included in a set $\tilde{\cal V}_k$ of $k$ arbitrary but fixed VNs. ${\Phi}_{d}^{(k)}$, $d \in [1:K]$, is the conditional probability that $C$ is of degree $d$ given ${\rm Neib}(C) \subset \tilde{\cal V}_k$, i.e., $\Pr\{{\rm dg}(C) = d \mid {\rm Neib}(C) \subset \tilde{\cal V}_k\}$. Note that when $k < D_{\rm min}$, the condition event ${\rm Neib}(C) \subset \tilde{\cal V}$ cannot happen, and then the ${\Phi}_{d}^{(k)}$ becomes an undefined term. Let ${\Phi}_{d}^{(k)} = 0$, $d \in [1:K]$, for $k < D_{\rm min}$. Further, a degree-zero term ${\Phi}_{0}^{(k)}$ is introduced to ensure ${\bm \Phi}^{(k)}$ defined by (\ref{eq_phi_vector}) is a degree-distribution vector\footnote{Adding a degree-zero term simplifies the expression in the subsequent proof and does not have any impact on the analysis. When the conditional probability $\Pr\{{\rm dg}(C) = d \mid {\rm Neib}(C) \subset \tilde{\cal V}_k\}$ is well defined, ${\Phi}_{0}^{(k)} = 0$.}, i.e., the sum of all components equals one.

Event $E_1(a,b)$ is equivalent to the event that all B-CNs $\notin {\cal B}^{(b)}$ do not have neighbors in ${\cal A}^{(a)}$. Due to the independence of all B-CNs, one can write that \[\Pr\{E_1(a,b)\} = \left(\sum_{d = 1}^{K} \varphi_{d,K-a}\right)^{n-b}.\]

We now study the probability $\Pr\{E_2(a,b)\}$. For a B-CN $C_i \in {\cal B}^{(b)}$, let ${\cal S}_i = {\rm Neib}(C_i) \cap {\cal A}^{(a)}$. Constraint ${\rm C2}$ requires ${\rm rk}(C_i^{({\cal S}_i)}) = {\rm rk}({\bf G}_i^{({\cal S}_i)}{\bf H}_i) < |{\cal S}_i|$. We first consider a conditional probability $\Pr\left\{|{\cal S}_i| = k, {\rm rk}(C_i^{({\cal S}_i)}) < k \bigm| {\rm dg}(C_i) = d\right\} \triangleq \xi_{d,k}$, where $d \in [1:K]$ and $k \in [1:{\min\{a,d\}}]$. We can write
\begin{equation}\nonumber
	\begin{aligned}
		\xi_{d,k}& = \Pr\left\{{\rm rk}(C_i^{({\cal S}_i)}) < k \bigm| |{\cal S}_i| = k, {\rm dg}(C_i) = d\right\}\\
		&~~~\cdot  \Pr\left\{|{\cal S}_i| = k \mid {\rm dg}(C_i) = d\right\}\\
		& = \Pr\left\{{\rm rk}(C_i^{({\cal S}_i)}) < k \bigm| |{\cal S}_i| = k\right\} \\
		&~~~\cdot \Pr\left\{|{\cal S}_i| = k \mid {\rm dg}(C_i) = d\right\}\\
		& = (1-\hbar_k') {\rm hyge}(k;K,a,d).
	\end{aligned}
\end{equation}
The last equality merits some explanations. Base on the definition of $\hbar_k'$ (see (\ref{eq_hbar_prime})), we can write 
\begin{equation}\nonumber
	\Pr\left\{{\rm rk}(C_i^{({\cal S}_i)}) < k \bigm| |{\cal S}_i| = k\right\} = 1-\hbar_k'.
\end{equation}
The event $|{\cal S}_i| = k$ given ${\rm dg}(C_i) = d$ indicates that $C_i$ has $k$ and $d-k$ neighbors in ${\cal A}^{(a)}$ and ${\cal V}\backslash{\cal A}^{(a)}$, respectively. Since $\{{\cal A}^{(a)},{\cal V}\backslash{\cal A}^{(a)}\}$ is a partition of $\cal V$, with Lemma \ref{Lemma_BCN_neighbor_partition} we have
\begin{equation}\nonumber
	\Pr\left\{|{\cal S}_i| = k \mid {\rm dg}(C_i) = d\right\} = \frac{\binom{a}{k}\binom{K-a}{d-k}}{\binom{K}{d}} = {\rm hyge}(k;K,a,d).
\end{equation}
Further, by summing over $k \in [1:\min\{a,d\}]$ and $d \in [1:K]$, we have
\begin{equation}\nonumber
	\begin{aligned}
		&~~~\Pr\left\{{\rm rk}(C_i^{({\cal S}_i)}) < |{\cal S}_i|\right\} \\
		&= \sum_{d, k} \Pr\{{\rm dg}(C_i) = d\} \\
		&~~~\cdot \Pr\left\{|{\cal S}_i| = k, {\rm rk}(C_i^{({\cal S}_i)}) < k \bigm| {\rm dg}(C_i) = d\right\}\\
		& = \sum_{d=1}^{K} \Psi_d \left(\sum_{k=1}^{\min\{a,d\}} \xi_{d,k} \right).
	\end{aligned}
\end{equation}
Considering the independence of batches, we can write
\begin{equation}\nonumber
	\begin{aligned}
	\Pr\{E_2(a,b)\} &= \left(\Pr\left\{{\rm rk}(C_i^{({\cal S}_i)}) < |{\cal S}_i|\right\}\right)^b \\
	&= \left(\sum_{d=1}^{K} \Psi_d \left(\sum_{k=1}^{\min\{a,d\}} \xi_{d,k} \right)\right)^b.
	\end{aligned}
\end{equation}

Next, we proceed with the derivation of $\Pr\left\{E_3(a,b)|E_1(a,b)\right\}$. Under the condition $E_1(a,b)$, we need to use the ``degree distribution'' ${\bm \Phi}^{(K-a)} = ({\Phi}^{(K-a)}_1,{\Phi}^{(K-a)}_2,\ldots,{\Phi}^{(K-a)}_{K-a})$ for the $n-b$ B-CNs $\notin {\cal B}^{(b)}$, which is conditioned on $E_1(a,b)$. Since ${\Phi}^{(K-a)}_{d} = 0$ for $d > K-a$, we omit these terms from ${\bm \Phi}^{(K-a)}$. Note that ${\bm \Phi}^{(K-a)}$ is a probability vector when $K-a \ge D_{\rm min}$ and an all-zero vector otherwise. $\Pr\{E_3(a,b)|E_1(a,b)\}$ can be obtained using the following lemma.

\begin{lemma}\label{Lemma_Union_Card}
	Given a universal set $U = \{1,2,\ldots,m\}$ and $s$ independent random subsets $S_i \subset U$, $i \in [1:s]$. For each $S_i$, we have $\Pr\{|S_i| = j\} = \beta_j$, where ${\bm \beta} = (\beta_0,\beta_1,\ldots,\beta_m)$ is a probability distribution. Let $S = \bigcup_{i=1}^{s} S_i$. The cardinality of $S$ follows the distribution
	\begin{equation}\label{eq_union_card}
		\Pr\{|S| = j\} = \binom{m}{j} \sum_{k=0}^{j} (-1)^{j-k} \binom{j}{k} \left(\sum_{d = 0}^{k} \beta_d \frac{\binom{k}{d}}{\binom{m}{d}}\right)^s.
	\end{equation}
\end{lemma}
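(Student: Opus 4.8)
The plan is to compute $\Pr\{|S| = j\}$ by first computing the probability that $S$ is contained in a fixed subset $T \subseteq U$, and then applying inclusion–exclusion to pass from ``contained in'' to ``equal to''. The key observation is that, because the $S_i$ are independent, $\Pr\{S \subseteq T\} = \prod_{i=1}^s \Pr\{S_i \subseteq T\} = \left(\Pr\{S_1 \subseteq T\}\right)^s$, and $\Pr\{S_1 \subseteq T\}$ depends on $T$ only through $|T|$. So first I would fix $T$ with $|T| = k$ and compute $\Pr\{S_1 \subseteq T\}$ by conditioning on $|S_1| = d$: given that $S_1$ has $d$ elements chosen (uniformly, by the symmetry implicit in the setup) from $U$, the probability that all $d$ of them land in $T$ is $\binom{k}{d}/\binom{m}{d}$. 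Summing over $d$ gives $\Pr\{S_1 \subseteq T\} = \sum_{d=0}^{k} \beta_d \binom{k}{d}/\binom{m}{d}$, hence $\Pr\{S \subseteq T\} = \left(\sum_{d=0}^{k} \beta_d \binom{k}{d}/\binom{m}{d}\right)^s$ whenever $|T| = k$.

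Next I would recover $\Pr\{|S| = j\}$ from the ``subset'' probabilities. Fix a particular $j$-element set $T_0$; by symmetry $\Pr\{S = T_0\}$ is the same for every $j$-set, so $\Pr\{|S| = j\} = \binom{m}{j}\Pr\{S = T_0\}$. To get $\Pr\{S = T_0\}$, write $\Pr\{S \subseteq T_0\} = \sum_{S' \subseteq T_0} \Pr\{S = S'\}$ and apply Möbius inversion over the Boolean lattice of subsets of $T_0$:
\begin{equation}\nonumber
	\Pr\{S = T_0\} = \sum_{S' \subseteq T_0} (-1)^{|T_0| - |S'|} \Pr\{S \subseteq S'\} = \sum_{k=0}^{j} (-1)^{j-k}\binom{j}{k}\left(\sum_{d=0}^{k}\beta_d \frac{\binom{k}{d}}{\binom{m}{d}}\right)^s,
\end{equation}
where in the last step I grouped the subsets $S' \subseteq T_0$ by their size $k$ and used that there are $\binom{j}{k}$ of them, each contributing the value computed in the first paragraph. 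Multiplying by $\binom{m}{j}$ yields exactly \eqref{eq_union_card}.

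The only genuine subtlety — and the step I would be most careful about — is the uniformity/symmetry assumption needed to justify that $\Pr\{S_i \subseteq T\}$ depends only on $|T|$ and that $\binom{k}{d}/\binom{m}{d}$ is the correct conditional probability. The lemma as stated gives only the size distribution $\bm\beta$ of each $S_i$, so strictly one needs the (implicit, and in the intended application true) hypothesis that, conditioned on its size, $S_i$ is a uniformly random subset of that size; I would state this explicitly at the start. Everything after that is routine: the product formula from independence, the conditioning computation, and the finite inclusion–exclusion, none of which require more than elementary manipulation of binomial coefficients.
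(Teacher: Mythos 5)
Your proposal is correct and follows essentially the same route as the paper's second proof of this lemma: compute $\Pr\{S\subseteq T\}=\bigl(\sum_d \beta_d \binom{|T|}{d}/\binom{m}{d}\bigr)^s$ using independence and the conditional-uniformity of each $S_i$, then recover $\Pr\{S=T_0\}$ by inclusion--exclusion (your M\"obius inversion over the Boolean lattice of subsets of $T_0$ is the same alternating sum the paper obtains by excluding containment in the $j$ maximal proper subsets), and multiply by $\binom{m}{j}$. Your observation that the lemma implicitly requires each $S_i$ to be uniform given its size is apt --- the paper relies on this same unstated hypothesis, which does hold in the intended application.
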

\begin{proof}
Two proofs are provided in Appendix \ref{Appendix_Proof_Lemma_Union_Card}. The first proof is based on a Markov process with a diagonalizable transition probability matrix which is essentially the same as the proof of \cite[Lemma 10]{FL_analysis_BATS}, and the second proof is new. The second proof is more direct and concise, but we still provide the first proof because using some intermediate results in the first proof instead of directly using (\ref{eq_union_card}) may be more convenient for numerical calculations. More details are discussed in Remark~\ref{remark_union_card}.
\end{proof}

Using the function
\begin{multline}
	U(j;m,s,{\bm \Phi}^{(m)}) \\ \triangleq \binom{m}{j} \sum_{k=0}^{j} (-1)^{j-k} \binom{j}{k} \left(\sum_{d = 0}^{k} \Phi_d^{(m)} \frac{\binom{k}{d}}{\binom{m}{d}}\right)^s,
\end{multline}
we can write $\Pr\{E_3(a,b)|E_1(a,b)\} = U(K-a;K-a,n-b,{\bm \Phi}^{(K-a)})$.

By substituting $\Pr\{E_1(a,b)\}$, $\Pr\{E_2(a,b)\}$, and $\Pr\{E_3(a,b)|E_1(a,b)\}$ into (\ref{eq_probability_decomposition}), we obtain $\Pr\{{\cal T}^* \in {G_{\rm stop}({\cal A}^{(a)},{\cal B}^{(b)})}\}$. Subsequently, by substituting $\Pr\{{\cal T}^* \in {G_{\rm stop}({\cal A}^{(a)},{\cal B}^{(b)})}\}$ into (\ref{eq_P_err_Union}), we obtain an upper bound for standard BATS codes, which is essentially Corollary~\ref{corollary_std_BATS_BP}.



\subsection{Upper Bounds for Regular LDPC Ensemble}\label{subsection:proof_extended_stopping_set}
In this subsection, we provide two upper bounds on the probability that a subgraph of a regular LDPC code contains a specific maximum stopping set. The bounding technique in this subsection is inspired by the method introduced in \cite[Ch. 3.22]{Modern_Coding_Theory}.

A \textit{stopping set} of an LDPC code is defined as the set ${\cal S}_v$ of VNs such that all neighbors of ${\cal S}_v$ are connected to ${\cal S}_v$ at least twice \cite{FL_analysis_LDPC}. For ease of explaining some details in the sequel, we may include the neighbors of ${\cal S}_v$ into the definition of a stopping set. Assuming that ${\cal S}_v = \{V_1,\ldots,V_a\}$, let ${\cal S}_c = \bigcup_{V_i \in {\cal S}_v} {\rm Neib}(V_i)$ be the set including all neighbors of ${\cal S}_v$. We may use $({\cal S}_v,{\cal S}_c)$ or $(|{\cal S}_v|,|{\cal S}_c|)$ to denote a stopping set interchangeably. Then, we define the extended stopping set as a generalization of the traditional stopping set.
\begin{definition}[Extended Stopping Set]\label{def:ess}
	An extended stopping set is defined as a pair $({\cal A},{\cal A}')$ of disjoint VN sets such that ${\cal A}$, ${\cal A} \ne \emptyset$, is the maximal stopping set which can be formed by the VNs in ${\cal A} \cup {\cal A}'$.
\end{definition}

Note that ${\cal A}'$ can be empty in an $({\cal A},{\cal A}')$ extended stopping set. In this case, it is identical to a stopping set. When we say an $({\cal A},{\cal A}')$ extended stopping set, the selection of the VNs are fixed, i.e., ${\cal A}$ and ${\cal A}'$ are two fixed sets.

The main results of this subsection are the following two lemmas.
\begin{lemma}\label{lemma:ess_recursive}
	Consider the $(K,\mathtt{l},\mathtt{r},\nu_{\rm min})_q$ LDPC ensemble. Let ${\cal A}$ and ${\cal A}'$ be two disjoint sets of VNs with $|{\cal A}| = a > 0$ and $|{\cal A}'| = a' \ge 0$. The probability $p(a,a')$ that in a randomly chosen graph from the $(K,\mathtt{l},\mathtt{r},\nu_{\rm min})_q$ LDPC ensemble, the pair $({\cal A},{\cal A}')$ is the $({\cal A},{\cal A}')$ extended stopping set can be bounded by
	\begin{equation}\nonumber
		p(a,a') \le \frac{\sum_{t = 0}^{K\mathtt{l}/\mathtt{r}}\sum_{s = 0}^{K\mathtt{l}/\mathtt{r}} \binom{K\mathtt{l}/\mathtt{r}}{t+s} \tilde{A}(a+a',t,s;a)}{\binom{K\mathtt{l}}{(a+a')\mathtt{l}} ((a+a')\mathtt{l})!},
	\end{equation}
	where $\tilde{A}(v,t,s;a)$ is determined by the recursion for $v \in [a+1:K]$ and $t,s \in [0:K\mathtt{l}/\mathtt{r}]$:
	\begin{equation}\nonumber
		\begin{aligned}
			&\tilde{A}(v,t,s;a) =\sum_{\Delta s = 1}^{\mathtt{l}} \sum_{\sigma = 0}^{\mathtt{l}-\Delta s} \sum_{\Delta t = 0}^{\lfloor \frac{\mathtt{l}-\Delta s - \sigma}{2} \rfloor} \sum_{\tau = 0}^{\mathtt{l}-\Delta s -\sigma - 2\Delta t} \\
			&\tilde{A}(v-1,t-\Delta t-\sigma,s+\sigma-\Delta s;a) \\
			& \cdot {(v-a)} (\mathtt{l})! \binom{t+s}{\Delta t + \Delta s} \binom{\Delta t + \Delta s}{\Delta t} \\
			&\cdot {\rm coef}\left\{ \left((1+x)^{\mathtt{r}-1}-1\right)^\sigma \left((1+x)^\mathtt{r}-1-\mathtt{r}x\right)^{\Delta t},x^{\mathtt{l}-\Delta s-\tau} \right\}\\
			& \cdot \binom{(t-\Delta t-\sigma)\mathtt{r}-(v-1)\mathtt{l}+s+\sigma-\Delta s}{\tau} \binom{s+\sigma-\Delta s}{\sigma}\\
			& \cdot \mathtt{r}^{\Delta s} \frac{\Delta s}{s}
		\end{aligned}
	\end{equation}
	with the initial step for $v \in [1:K]$ and $t,s \in [0:K\mathtt{l}/\mathtt{r}]$
	\begin{equation}\nonumber
		\begin{aligned}
		\tilde{A}(v,t,s;a) = &{\rm coef}\{ ((1+x)^\mathtt{r}-1-\mathtt{r}x)^{t},x^{v\mathtt{l}} \}(v\mathtt{l})! \\
		&\cdot \mathds{1}_{\{s=0\}} \mathds{1}_{\{v=a\}} \mathds{1}_{\{v \ge \nu_{\rm min}\}}.
		\end{aligned}
	\end{equation}
\end{lemma}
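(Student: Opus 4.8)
\emph{Approach and reduction.} I would work throughout in the socket (configuration) model of the $(K,\mathtt{l},\mathtt{r},\nu_{\rm min})_q$ LDPC ensemble: the $K$ VNs carry $K\mathtt{l}$ sockets in all, the $K\mathtt{l}/\mathtt{r}$ CNs carry $K\mathtt{l}$ sockets, a graph is a uniformly random perfect matching of the two socket families, and the $\mathbb{F}_q\setminus\{0\}$ edge labels play no role for stopping sets. Write $v:=a+a'$. The first step is to turn $p(a,a')$ into a matching count: the subgraph induced on ${\cal A}\cup{\cal A}'$ uses only $v\mathtt{l}$ of the VN sockets, the event that $({\cal A},{\cal A}')$ is the $({\cal A},{\cal A}')$ extended stopping set depends only on the restriction of the matching to those sockets, and every one of the $((K-v)\mathtt{l})!$ completions of such a restriction is equiprobable. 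Hence $p(a,a')=N\big/\big(\binom{K\mathtt{l}}{v\mathtt{l}}(v\mathtt{l})!\big)$, where $N$ counts the partial matchings of the $v\mathtt{l}$ sockets of ${\cal A}\cup{\cal A}'$ for which ${\cal A}$ is the maximal stopping set formable inside ${\cal A}\cup{\cal A}'$; it then remains to prove $N\le\sum_{t,s}\binom{K\mathtt{l}/\mathtt{r}}{t+s}\tilde A(v,t,s;a)$.

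\emph{Peeling characterisation.} Since the maximal stopping set of an induced subgraph is the unique fixed point of the peeling process, ``${\cal A}$ is the maximal stopping set inside ${\cal A}\cup{\cal A}'$'' is equivalent to: ${\cal A}$ is itself a stopping set, and ${\cal A}'$ admits an ordering $u_1,\dots,u_{a'}$ such that each $u_j$ has a degree-$1$ CN neighbour in the subgraph induced by ${\cal A}\cup\{u_j,\dots,u_{a'}\}$. Read in reverse, this lets me \emph{build} the subgraph by starting from ${\cal A}$ and adjoining $u_{a'},u_{a'-1},\dots,u_1$ in turn. The key observation is that a CN already incident to ${\cal A}\cup\{u_{j+1},\dots,u_{a'}\}$ that also meets $u_j$ has degree at least $1+1=2$ in ${\cal A}\cup\{u_j,\dots,u_{a'}\}$, so it cannot be $u_j$'s degree-$1$ witness; that witness must be a CN fresh at that step, joined to $u_j$ exactly once. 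Thus each newly adjoined VN must create at least one new degree-$1$ CN — precisely the constraint $\Delta s\ge1$ in the recursion — and because later adjunctions are irrelevant once the forward peeler has already removed the later VNs, the reverse build faithfully realises an actual peeling order.

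\emph{Counting by the recursion.} I would then count the partial matchings along this reverse build, carrying as state the pair $(t,s)$ of the number of CNs currently incident at least twice, resp.\ exactly once, to the VNs adjoined so far; this suffices because a pendant CN has exactly one of its $\mathtt{r}$ sockets used, so the number of free sockets on the saturated CNs is a function of $(t,s,v)$ alone. The base case $v=a$ is ``${\cal A}$ is a stopping set with a prescribed neighbourhood of $t$ CNs'', counted by ${\rm coef}\{((1+x)^{\mathtt{r}}-1-\mathtt{r}x)^{t},x^{a\mathtt{l}}\}\cdot(a\mathtt{l})!$ and forced to $0$ for $a<\nu_{\rm min}$ by the dummy-bit expurgation of~\cite{Weight_distribution_LDPC}, which is exactly the stated initial condition. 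In the inductive step (adjoining one VN at stage $v$) I would split that VN's $\mathtt{l}$ sockets into $\sigma$ sockets saturating previously-pendant CNs, sockets saturating $\Delta t$ brand-new CNs (each receiving $\ge 2$), $\Delta s\ge1$ sockets going singly to $\Delta s$ brand-new CNs, and $\tau$ sockets going to already-saturated CNs; counting which CNs take which role and which of their sockets are used produces exactly the factors $\binom{t+s}{\Delta t+\Delta s}$, $\binom{\Delta t+\Delta s}{\Delta t}$, $\binom{s+\sigma-\Delta s}{\sigma}$, $\binom{(t-\Delta t-\sigma)\mathtt{r}-(v-1)\mathtt{l}+s+\sigma-\Delta s}{\tau}$, ${\rm coef}\{((1+x)^{\mathtt{r}-1}-1)^{\sigma}((1+x)^{\mathtt{r}}-1-\mathtt{r}x)^{\Delta t},x^{\mathtt{l}-\Delta s-\tau}\}$ and $\mathtt{r}^{\Delta s}$, over the summation ranges in the statement. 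The leading factor $(v-a)$ records the choice of which of the $v-a$ vertices of ${\cal A}'$ present after stage $v$ was adjoined last; summing this over all stages amounts to summing over reverse peeling orders, so each admissible partial matching is counted at least once, and together with the normalising factor $\tfrac{\Delta s}{s}$ (which over-corrects the ambiguity of which stage created which pendant CN) and the expurgation inequality this gives $N\le\sum_{t,s}\binom{K\mathtt{l}/\mathtt{r}}{t+s}\tilde A(v,t,s;a)$ after the outer choice $\binom{K\mathtt{l}/\mathtt{r}}{t+s}$ of the affected CNs.

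\emph{Main difficulty.} The hard part is the bookkeeping of the inductive step: verifying that the six displayed factors describe genuinely independent, non-overlapping choices; that the ranges of $\Delta s,\sigma,\Delta t,\tau$ are exactly those imposed by ``$\mathtt{l}$ sockets total, each new internal CN gets $\ge2$, each revisited pendant gets $\ge1$ more''; and — most delicately — that the residual over-counting (peeling-order multiplicity absorbed by the $(v-a)$ factors, and the stage-of-creation ambiguity absorbed by $\tfrac{\Delta s}{s}$) always \emph{inflates} $\sum_{t,s}\binom{K\mathtt{l}/\mathtt{r}}{t+s}\tilde A(v,t,s;a)$ relative to $N$, so that the final statement is an upper bound rather than an equality. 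The $v<\nu_{\rm min}$ edge cases and the convention $\prod_{i=0}^{\mathtt{l}-1}(\cdot)=1$ when $\mathtt{l}=0$ are then routine.
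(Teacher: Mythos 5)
Your proposal is correct and follows essentially the same route as the paper: both work in the socket/configuration model (with the $q$-ary edge labels cancelling), reduce $p(a,a')$ to a constellation count normalized by $\binom{K\mathtt{l}}{v\mathtt{l}}(v\mathtt{l})!$, and obtain $\tilde{A}(v,t,s;a)$ by the reverse-peeling recursion of \cite[Ch.~3.22]{Modern_Coding_Theory} with state $(t,s)$, the factor $v$ replaced by $v-a$ because the maximal stopping set's vertices are fixed, and the expurgated initialization $\mathds{1}_{\{v\ge\nu_{\rm min}\}}$ accounting for the inequality. The only cosmetic difference is that the paper derives $\tilde{A}$ as a refinement of the unrestricted count $A(v,t,s)$ and cites the textbook for the bookkeeping of the six combinatorial factors, whereas you reconstruct the same recursion directly.
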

\begin{proof}
	The proof of this lemma is largely based on \cite[Ch. 3.22]{Modern_Coding_Theory}. Noting that this lemma focuses on the fixed VNs, we first define a subgraph structure for the fixed VNs.
	
	Fix $v$ VNs and order the corresponding $v\mathtt{l}$ VN sockets as well as all $K\mathtt{l}$ L-CN sockets in some fixed but arbitrary way. A \textit{constellation} is a particular choice of how to attach the $v\mathtt{l}$ VN sockets to the $K\mathtt{l}$ L-CN sockets. Here, attaching an edge includes two steps: 1) Place the edge between two sockets, 2) label an non-zero element on the edge. Let $T(v)$ denote the number of all such constellations, for which $v$ VNs are fixed but the neighboring L-CNs are arbitrary. There are $\binom{K\mathtt{l}}{v\mathtt{l}}$ ways to choose L-CN sockets, $(v\mathtt{l})!$ ways to permute the edges, and $(q-1)^{v\mathtt{l}}$ ways to label non-zero elements on the edges, resulting in $T(v) = \binom{K\mathtt{l}}{v\mathtt{l}} (v\mathtt{l})! (q-1)^{v\mathtt{l}}$. A Tanner graph in the $(K,\mathtt{l},\mathtt{r},\nu_{\rm min})_q$ LDPC ensemble can be regarded as a constellation with $K$ VNs, and $T(K) = (K\mathtt{l})! (q-1)^{K\mathtt{l}}$ is consistent with the size of the ensemble.
	
	To compute the probability $p(a,a')$, we need to count how many constellations are an $({\cal A},{\cal A}')$ extended stopping set, denoted by $B(a,a')$. Then, we have 
	\begin{equation}\label{eq:p_a_a_prime}
		p(a,a') = \frac{B(a,a')}{T(a+a')}.
	\end{equation}
	
	Based on the fact that the selection of non-zero elements on the edges of a Tanner graph does not essentially affect the BP decoding of this graph, we only need to consider the $(K,\mathtt{l},\mathtt{r},\nu_{\rm min})_2$ LDPC ensemble in the remainder of the proof. More precisely, when considering the $q$-ary ensemble, both $B(a,a')$ and $T(a+a')$ contain the same factor $(q-1)^{(a+a')\mathtt{l}}$, representing the number of ways to label non-zero elements on the edges.
	
	To bound $B(a,a')$, we use a method based on the recursion in \cite[Ch. 3.22]{Modern_Coding_Theory}. The remainder of the proof is given in Appendix~\ref{appendix:ess}.
\end{proof}

Lemma~\ref{lemma:ess_recursive} does not provide an explicit expression for the upper bound on $p(a,a')$. Then, the following lemma relaxes this upper bound and provides an explicit expression.

\begin{lemma}\label{lemma:ess}
	Consider the $(K,\mathtt{l},\mathtt{r},\nu_{\rm min})_q$ LDPC ensemble. The probability $p(a,a')$ (defined in Lemma~\ref{lemma:ess_recursive}) can be bounded by
	\begin{equation}\nonumber
		p(a,a') \le L(a,a';K,\mathtt{l},\mathtt{r}).
	\end{equation}
	Here,
	\begin{multline}\nonumber
		L(a,a';K,\mathtt{l},\mathtt{r}) \\=  \sum_{c = 0}^{K\mathtt{l}/\mathtt{r}} \min \left\{ \binom{K\mathtt{l}/\mathtt{r}}{c} \hat{p}(a,0,c), \binom{K\mathtt{l}/\mathtt{r}}{a'+c} \hat{p}(a,a',c) \right\},
	\end{multline}
	where
	\begin{equation}\nonumber
		\begin{aligned}
			&\hat{p}(a,a',c) =\\
			&\begin{cases}
				\begin{footnotesize}	
					\begin{aligned}
						&\frac{{\rm coef}\{ ((1+x)^\mathtt{r}-1-\mathtt{r}x)^{c},x^{a\mathtt{l}} \}}{\binom{K\mathtt{l}}{a\mathtt{l}}} \\
						&\cdot \prod_{t=1}^{a'} \frac{t(c+t) \mathtt{r} \binom{(K\mathtt{l}/\mathtt{r}-a'+t-1)\mathtt{r}-(a+t-1)\mathtt{l}}{\mathtt{l}-1} (\mathtt{l})!}{\prod_{i=0}^{\mathtt{l}-1} (K\mathtt{l}-(a+t-1)\mathtt{l}-i)} 
					\end{aligned}
				\end{footnotesize}
				&\begin{aligned}
					&\text{if }a \ge \nu_{\rm min}\\ &~~\text{and } a' \ge 1,
				\end{aligned}\\[1cm]
				\begin{footnotesize}
					\begin{aligned}
						\frac{{\rm coef}\{ ((1+x)^\mathtt{r}-1-\mathtt{r}x)^{c},x^{a\mathtt{l}} \}}{\binom{K\mathtt{l}}{a\mathtt{l}}} 
					\end{aligned}
				\end{footnotesize}
				&\begin{aligned}
					&\text{if }a \ge \nu_{\rm min}\\ &~~\text{and } a' = 0,
				\end{aligned}\\
				0 & \text{if }a < \nu_{\rm min}.	
			\end{cases}
		\end{aligned}
	\end{equation}
\end{lemma}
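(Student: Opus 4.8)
The plan is to prove Lemma~\ref{lemma:ess} by decomposing $p(a,a')$ according to the number $c$ of L-CNs adjacent to ${\cal A}$: write $p(a,a')=\sum_{c}p_c$, where $p_c$ is the probability that $({\cal A},{\cal A}')$ is the $({\cal A},{\cal A}')$ extended stopping set and ${\cal A}$ has exactly $c$ neighbors, and bound each $p_c$ in two different ways, $p_c\le\binom{K\mathtt{l}/\mathtt{r}}{c}\hat p(a,0,c)$ and $p_c\le\binom{K\mathtt{l}/\mathtt{r}}{a'+c}\hat p(a,a',c)$; then $p_c\le\min\{\cdot,\cdot\}$ and summing over $c$ yields $L(a,a';K,\mathtt{l},\mathtt{r})$. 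Throughout I work in the binary ensemble (the factor $(q-1)^{(a+a')\mathtt{l}}$ cancels between $B(a,a')$ and $T(a+a')$, exactly as in the proof of Lemma~\ref{lemma:ess_recursive}) and reuse the socket/constellation model from that proof, where a constellation on $v$ fixed VNs is an ordered assignment of the $v\mathtt{l}$ VN-sockets to distinct L-CN-sockets, the binary constellation count is $\binom{K\mathtt{l}}{v\mathtt{l}}(v\mathtt{l})!$, and $p(a,a')=B(a,a')/T(a+a')$.

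For the first estimate, observe that an extended stopping set has ${\cal A}$ itself a stopping set, so $p_c\le\Pr\{{\cal A}\text{ is a stopping set with exactly }c\text{ neighbors}\}$. For a fixed choice of which $c$ L-CNs are the neighbors, the event says that ${\cal A}$'s $a\mathtt{l}$ sockets fall onto the $c\mathtt{r}$ sockets of those L-CNs with at least two per L-CN; since $(1+x)^{\mathtt{r}}-1-\mathtt{r}x=\sum_{j\ge 2}\binom{\mathtt{r}}{j}x^{j}$ enumerates the socket choices at one L-CN, the number of such ordered assignments is $(a\mathtt{l})!\,{\rm coef}\{((1+x)^{\mathtt{r}}-1-\mathtt{r}x)^{c},x^{a\mathtt{l}}\}$. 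Dividing by the $\binom{K\mathtt{l}}{a\mathtt{l}}(a\mathtt{l})!$ unconstrained assignments of ${\cal A}$'s sockets and multiplying by the $\binom{K\mathtt{l}/\mathtt{r}}{c}$ choices of neighbor set gives $p_c\le\binom{K\mathtt{l}/\mathtt{r}}{c}\hat p(a,0,c)$ (both sides vanish when $a<\nu_{\rm min}$, since the expurgated ensemble then contains no such ${\cal A}$).

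For the second estimate I keep ${\cal A}'$ in play but relax how it attaches. Because ${\cal A}$ is the maximal stopping set of the subgraph induced by ${\cal A}\cup{\cal A}'$, peeling that subgraph removes all of ${\cal A}'$ and leaves ${\cal A}$; reading the peeling order backwards yields an ordering $V_1,\dots,V_{a'}$ of ${\cal A}'$ such that, when $V_t$ is appended to ${\cal A}\cup\{V_1,\dots,V_{t-1}\}$, at least one of $V_t$'s $\mathtt{l}$ edges lands on an L-CN not previously touched, creating a degree-one L-CN. Hence every extended-stopping-set constellation is, for a suitable ordering, one in which ${\cal A}$ forms a stopping set with $c$ neighbors and each appended $V_t$ creates a fresh degree-one L-CN, and I over-count these: fix a stopping-set constellation on ${\cal A}$ with $c$ neighbors, then for $t=1,\dots,a'$ bound the placements of $V_t$'s sockets by a distinguished socket producing the required fresh L-CN together with an arbitrary placement of the remaining $\mathtt{l}-1$ sockets among the still-available L-CN-sockets, normalizing at step $t$ by the falling factorial $\prod_{i=0}^{\mathtt{l}-1}(K\mathtt{l}-(a+t-1)\mathtt{l}-i)$ of all placements of $V_t$'s sockets. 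With the index ranges dictated by how many L-CNs are already occupied and how many are reserved for the not-yet-placed VNs of ${\cal A}'$, this sequential count reproduces the $t$-th factor of the product in $\hat p(a,a',c)$, while $\binom{K\mathtt{l}/\mathtt{r}}{a'+c}$ accounts for the $a'+c$ L-CNs that end up involved. Equivalently, one starts from the recursion for $\tilde A(v,t,s;a)$ in Lemma~\ref{lemma:ess_recursive}, retains only the minimal requirement $\Delta s\ge 1$ at each appended VN, and discards the summations over the number $\sigma$ of previously-degree-one L-CNs that get repaired, the number $\tau$ of extra edges to already-touched L-CNs, and the number $\Delta t$ of fresh L-CNs that immediately receive two edges; this can only enlarge $\tilde A$ and collapses the recursion into the product, giving $p_c\le\binom{K\mathtt{l}/\mathtt{r}}{a'+c}\hat p(a,a',c)$. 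Taking $\min$ term by term keeps the better of the two (the first estimate is the tighter one when $a'$ is large, since the second absorbs a union bound over the $a'!=\prod_{t=1}^{a'}t$ orderings of ${\cal A}'$).

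The main obstacle is making the second estimate rigorous: one must check that the described sequential placement genuinely \emph{over}-counts extended-stopping-set constellations --- in particular that the backward-peeling argument together with the union bound over the $a'!$ orderings of ${\cal A}'$ is set up correctly (so the product factor $\prod_{t=1}^{a'}t$ is precisely that union bound rather than an artifact), and that deleting the $\sigma$-, $\tau$- and $\Delta t$-summations from the recursion of Lemma~\ref{lemma:ess_recursive} can only increase $\tilde A(v,t,s;a)$. The remaining pieces --- the single-L-CN generating function, the falling-factorial normalizations, reducing to the binary ensemble, and assembling the sum over $c$ --- are routine.
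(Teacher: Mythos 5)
Your proposal is correct and follows essentially the same route as the paper's proof: decompose by the number $c$ of L-CNs adjacent to ${\cal A}$, bound each term once by dropping ${\cal A}'$ (the stopping-set count $\binom{K\mathtt{l}/\mathtt{r}}{c}\hat p(a,0,c)$) and once by the sequential reconstruction that attaches the VNs of ${\cal A}'$ one at a time with a reserved peeling L-CN each (yielding the product in $\hat p(a,a',c)$, with the factor $\prod_t t = a'!$ absorbing the union bound over orderings), then take the minimum and sum over $c$. The paper formalizes the second bound exactly as you sketch it, via the recursion $\tilde B_t(a,a',c)\le \tilde B_{t-1}(a,a',c)\,t(c+t)\mathtt{r}\binom{(K\mathtt{l}/\mathtt{r}-a'+t-1)\mathtt{r}-(a+t-1)\mathtt{l}}{\mathtt{l}-1}(\mathtt{l})!$ normalized by $T(a+t)$, so the obstacle you flag is handled there in the same way you propose.
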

\begin{proof}
	Similar to the proof of Lemma~\ref{lemma:ess_recursive}, we recursively bound $B(a,a')$. However, the recursion considered here is more relaxed, allowing us to obtain the closed-form upper bound. The proof is given in Appendix \ref{appendix:ess}.
\end{proof}

\subsection{Proof of Theorem \ref{theorem_LDPC_BATS_BP}}\label{subsection:proof:theorem_1}
We now utilize the bounding techniques in the above two subsections to prove Theorem \ref{theorem_LDPC_BATS_BP}. Let ${\tilde{\cal T}} = ({\cal V},{\cal C},{\cal E},\tilde{\cal C},\tilde{\cal E})$ be a Tanner graph in the $(K,n,M,{\bm \Psi},{\bf h},\mathtt{l},\mathtt{r},\nu_{\rm min})_q$ LDPC-BATS ensemble defined in Definition \ref{def:LDPC_BATS_ensemble}, where ${\cal V}$ is the VN set, ${\cal C}$ is the B-CN set, ${\cal E}$ is the edge set consisting of the edges connecting ${\cal V}$ and ${\cal C}$, $\tilde{\cal C}$ is the L-CN set, and $\tilde{\cal E}$ is the edge set consisting of the edges connecting ${\cal V}$ and $\tilde{\cal C}$. Noting the equivalence between L-CNs and B-CNs, we can combine ${\cal C}$ and $\tilde{\cal C}$ as well as ${\cal E}$ and $\tilde{\cal E}$ to be a single set. However, distinguishing them as different sets is beneficial for the clarity of the subsequent proof. 

Following the proof in Appendix~\ref{subsection:proof_BP_std_BATS}, we define a special class of LDPC-BATS Tanner graphs called the \textit{LDPC-BATS stopping graphs}. Then, we prove that BP decoding of ${\tilde{\cal T}}$ fails if and only if ${\tilde{\cal T}}$ is an LDPC-BATS stopping graph.

\begin{definition}[LDPC-BATS Stopping Graph]\label{def:LDPC_BATS_stopping_graph}
	A Tanner graph $\tilde{\cal T} = ({\cal V},{\cal C},{\cal E},\tilde{\cal C},\tilde{\cal E})$ in the LDPC-BATS ensemble is a stopping graph if there exist a non-empty set ${\cal A} \subset {\cal V}$ and a set (can be empty) ${\cal B} \subset {\cal C}$ such that the following constraints holds:
	\begin{itemize}
		\item[${\rm C1}$:] For any $V \in {\cal A}$, ${\rm Neib}_{\cal E}(V) \cap ({\cal C} \backslash {\cal B}) = \emptyset$.
		\item[${\rm C2}$:] For any $C \in {\cal B}$, $|{\rm Neib}(C) \cap {\cal A}| > {\rm rk}(C^{({\rm Neib}(C) \cap {\cal A})}) \ge 0$.
		\item[${\rm C4}$:] Let ${\cal X} \triangleq \{V \in ({\cal V \backslash \cal A}): {\rm Neib}_{\cal E}(V) \cap ({\cal C} \backslash {\cal B}) = \emptyset\}$. The cardinality of $\cal X$ satisfies $|{\cal X}| \le s_{\rm max}(a)$, where $s_{\rm max}(a) = \left\lfloor \frac{(K-a)\mathtt{l}}{\mathtt{r}} \right\rfloor$. If $\mathtt{l} > 0$, $({\cal A},{\cal X})$ is an extended stopping set whose maximal stopping set is ${\cal A}$.
	\end{itemize}
\end{definition}

\begin{remark}
	Compared with Definition \ref{Def_stopping_graph_BATS}, only the last constraint ${\rm C3}$ is replaced by ${\rm C4}$. When $\mathtt{l} = 0$, an LDPC-BATS code is just a standard BATS code. In this case, we have $s_{\rm max}(a) = 0$, and so Definition \ref{def:LDPC_BATS_stopping_graph} becomes identical to Definition \ref{Def_stopping_graph_BATS}.
\end{remark}

For a parameter pair $({\cal A},{\cal B})$ in $\tilde{\cal T}$, we say that $({\cal A},{\cal B})$ is \textit{valid} for $\tilde{\cal T}$ if $({\cal A},{\cal B})$ satisfies the three constraints ${\rm C1}$, ${\rm C2}$, and ${\rm C4}$.

\begin{lemma}\label{Lemma_LDPCBATS_Stops_at_Stopping_Graph}
	An LDPC-BATS Tanner graph $\tilde{\cal T} = ({\cal V},{\cal C},{\cal E},\tilde{\cal C},\tilde{\cal E})$ in the LDPC-BATS ensemble cannot be reduced to an empty graph under BP (peeling) decoding if and only if $\tilde{\cal T}$ is a stopping graph.
\end{lemma}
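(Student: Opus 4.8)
The plan is to mirror the proof of Lemma~\ref{Lemma_BATS_Stops_at_Stopping_Graph}, adding the observation that under peeling decoding the $K\mathtt{l}/\mathtt{r}$ L-CNs behave exactly like degree-one batches: an L-CN is decodable precisely when its residual degree equals one, and over the whole run it removes at most one VN (once its residual degree reaches one it peels that last neighbour and disappears). I will also use a monotonicity fact for B-CNs: once a B-CN $C$ becomes decodable it stays decodable until it is removed, because ${\bf G}_i^{(S)}{\bf H}_i$ is a row-submatrix of ${\bf G}_i^{(S_0)}{\bf H}_i$ whenever the residual neighbour set $S\subseteq S_0$, and a matrix of full row rank keeps full row rank under row deletion.

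For the ``if'' direction, assume $\tilde{\cal T}$ is a stopping graph with $({\cal A},{\cal B})$ valid and ${\cal X}$ as in C4; I will show that no VN of ${\cal A}$ is ever peeled, which forces BP decoding to halt with a residual set containing ${\cal A}\ne\emptyset$. Suppose otherwise and let $V\in{\cal A}$ be the first VN of ${\cal A}$ to be peeled, at step $t$, by a check node $C$. If $C$ is a B-CN, then $C\in{\cal B}$ by C1; since no VN of ${\cal A}$ was removed before step $t$, the residual neighbour set $S$ of $C$ at step $t$ contains ${\rm Neib}(C)\cap{\cal A}$, and decodability gives ${\rm rk}({\bf G}^{(S)}{\bf H})=|S|$, so the submatrix indexed by ${\rm Neib}(C)\cap{\cal A}$ also has full row rank, i.e.\ ${\rm rk}(C^{({\rm Neib}(C)\cap{\cal A})})=|{\rm Neib}(C)\cap{\cal A}|\ge 1$, contradicting C2. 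If $C$ is an L-CN, then (again because ${\cal A}$ is untouched before step $t$) $C$ has exactly one neighbour in ${\cal A}$, namely $V$; but by C4 the set ${\cal A}$ is a stopping set of the LDPC graph, so every L-CN incident to ${\cal A}$ meets it at least twice, a contradiction. When $\mathtt{l}=0$ only the B-CN case and constraints C1, C2 are used, recovering the standard-BATS argument.

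For the ``only if'' direction, suppose BP decoding halts with nonempty residual VN set ${\cal A}$ and take ${\cal B}$ to be the set of residual B-CNs. Constraint C1 is immediate (any B-CN with a residual neighbour is residual), and C2 holds since a residual B-CN has a residual neighbour yet is not decodable. For C4, the monotonicity fact shows a residual B-CN was never decodable during the run, so every peeled VN was peeled by a non-residual check node; hence each $V\in{\cal X}$, all of whose B-CN neighbours lie in ${\cal B}$, was peeled by one of its L-CN neighbours. Writing $a=|{\cal A}|$, I then argue: (i) ${\cal A}$ is a stopping set of the LDPC graph, since an L-CN meeting ${\cal A}$ exactly once would have residual degree one and peel into ${\cal A}$; (ii) ${\cal A}$ is the maximal stopping set contained in ${\cal A}\cup{\cal X}$, proved by taking the first VN $V'\in{\cal A}''\setminus{\cal A}\subseteq{\cal X}$ peeled from a hypothetical strictly larger stopping set ${\cal A}''$, noting that the L-CN $C$ peeling $V'$ meets ${\cal A}''$ at least twice, hence has a second neighbour $V''\in{\cal A}''$ still un-peeled at that moment (either $V''\in{\cal A}$, never peeled, or $V''\in{\cal A}''\setminus{\cal A}$, peeled only after $V'$), contradicting that $C$ then had residual degree one; and (iii) $|{\cal X}|\le s_{\rm max}(a)$, because each L-CN that peels a VN of ${\cal X}$ has at that moment all $\mathtt{r}$ of its neighbours in ${\cal V}\setminus{\cal A}$ (the already-peeled ones lie outside ${\cal A}$, which is never peeled, and the target lies in ${\cal X}\subseteq{\cal V}\setminus{\cal A}$), these L-CNs are pairwise distinct, and the sub-multigraph on ${\cal V}\setminus{\cal A}$ carries only $(K-a)\mathtt{l}$ edge-endpoints, so it contains at most $\lfloor(K-a)\mathtt{l}/\mathtt{r}\rfloor$ fully-internal L-CNs. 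When $\mathtt{l}=0$, ${\cal X}$ is forced empty and $s_{\rm max}(a)=0$, so C4 degenerates to C3, consistent with the remark.

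The main obstacle is the C4 verification in the converse, specifically the maximality claim (ii) and the edge-counting bound (iii): one must be careful that the ``first peeled'' argument rules out \emph{every} stopping set strictly larger than ${\cal A}$ inside ${\cal A}\cup{\cal X}$, and that an L-CN peeling a VN of ${\cal X}$ genuinely has \emph{all} its neighbours (not just the surviving one) outside ${\cal A}$. The remaining steps are routine once the degree-one interpretation of L-CNs and the monotonicity of decodable B-CNs are established.
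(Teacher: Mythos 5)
Your proof is correct and establishes exactly the facts the paper's proof needs, but you organize the necessity direction differently. The paper proves ``BP fails $\Rightarrow$ stopping graph'' by running the peeling process \emph{backwards}: it reconstructs $\tilde{\cal T}$ from the residual graph by re-inserting VNs one at a time and shows by induction that the pair $({\cal A},{\cal B})$ stays valid at every reconstruction step, with the degree-one ``peeling'' L-CNs identified at the moment each VN of ${\cal X}$ is re-inserted. You instead argue \emph{forwards} in decoding time, using two extremal/monotonicity observations: (a) once a B-CN is decodable it remains decodable under row deletion, so residual B-CNs were never decodable and every $V\in{\cal X}$ must have been peeled by a degree-one L-CN; and (b) ``first VN peeled'' arguments both for the sufficiency direction and for the maximality of ${\cal A}$ inside ${\cal A}\cup{\cal X}$. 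The endpoint is the same --- the injective map from ${\cal X}$ to peeling L-CNs whose $\mathtt{r}$ sockets all land in ${\cal V}\setminus{\cal A}$, giving $|{\cal X}|\mathtt{r}\le(K-a)\mathtt{l}$, is precisely the paper's socket count $a\mathtt{l}\le(K\mathtt{l}/\mathtt{r}-s)\mathtt{r}$ --- but your version makes explicit two points the paper leaves terse: the rank-monotonicity of B-CN decodability (which the paper uses implicitly via the reconstruction framing), and the verification that \emph{no} stopping set strictly larger than ${\cal A}$ lives inside ${\cal A}\cup{\cal X}$ (the paper compresses this to ``it follows that $({\cal A},{\cal X})$ forms an extended stopping set''). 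The paper's reconstruction induction buys a uniform treatment of all three constraints in one pass and mirrors the machinery reused in Lemma~\ref{Lemma_BATS_Stops_at_Stopping_Graph}; your forward version is more self-contained and makes the order-of-peeling dependencies explicit. Both are sound.
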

\begin{proof}
	See Appendix \ref{appendix:proofs_for_BP}.
\end{proof}

In light of the above lemma, the method for bounding the BP performance of standard BATS ensembles can be leveraged to bound the BP performance of LDPC-BATS ensembles. For each $({\cal A}^{(a)},{\cal B}^{(b)})$ (recall that ${\cal A}^{(a)}$ and ${\cal B}^{(b)}$ represent the sets of fixed $a$ VNs and fixed $b$ B-CNs, respectively), form a set of LDPC-BATS stopping graphs as
\begin{equation}\nonumber
	\begin{aligned}
		\tilde{G}_{\rm stop}({\cal A}^{(a)},{\cal B}^{(b)}) \triangleq \left\{\tilde{\cal T}: ({\cal A}^{(a)},{\cal B}^{(b)}) \text{ is valid for }\tilde{\cal T}\right\}.
	\end{aligned}
\end{equation}
Let $\tilde{\cal T}^*$ be a random graph chosen from the $(K,n,M,{\bm \Psi},{\bf h},\mathtt{l},\mathtt{r})_q$ LDPC-BATS ensemble uniformly at random. Then, the ensemble-average error probability $\mathbb{E}\left[P_{\textsf{E}}^{\rm BP}(\tilde{\cal T})\right]$ for the LDPC-BATS ensemble under BP decoding can be bounded by
\begin{equation}\label{eq_union_LDPCBATS}
	\begin{aligned}
		&\mathbb{E}\left[P_{\textsf{E}}^{\rm BP}(\tilde{\cal T})\right] = \Pr\left\{\tilde{\cal T}^* \in \bigcup\limits_{{\cal A} \ne \emptyset,{\cal A} \subset {\cal V},{\cal B}} \tilde{G}_{\rm stop}({\cal A},{\cal B})\right\} \le \\
		&\min\left\{ 1, \sum_{a=1}^{K} \sum_{b=0}^{n} \binom{K}{a} \binom{n}{b} \Pr\left\{\tilde{\cal T}^* \in \tilde{G}_{\rm stop}({\cal A}^{(a)},{\cal B}^{(b)})\right\}\right\}.
	\end{aligned}
\end{equation}

We now study $\Pr\left\{\tilde{\cal T}^* \in \tilde{G}_{\rm stop}({\cal A}^{(a)},{\cal B}^{(b)})\right\}$ in the last line of the above formula. For constraints ${\rm C1}$, ${\rm C2}$, and ${\rm C4}$, let $E_i(a,b)$ be the event that $({\cal A}^{(a)},{\cal B}^{(b)})$ in $\tilde{\cal T}^*$ satisfies constraint ${\rm C}i$, $i=1,2,4$. Similar to the derivation of  (\ref{eq_probability_decomposition}), we can decompose $\Pr\left\{\tilde{\cal T}^* \in \tilde{G}_{\rm stop}({\cal A}^{(a)},{\cal B}^{(b)})\right\}$ as 
\begin{multline}\label{eq_T_in_G}
	\Pr\left\{\tilde{\cal T}^* \in \tilde{G}_{\rm stop}({\cal A}^{(a)},{\cal B}^{(b)})\right\} = \Pr\{E_1(a,b)\} \\
	\cdot \Pr\{E_2(a,b)\} \Pr\{E_4(a,b)|E_1(a,b)\}.
\end{multline}
Note that the first two terms in (\ref{eq_T_in_G}) are only affected by the BATS part of the graph $\tilde{\cal T}^*$ and they are exactly the same as the corresponding formulae derived in Appendix~\ref{subsection:proof_BP_std_BATS}. It remains to compute the probability $\Pr\{E_4(a,b)|E_1(a,b)\}$. Let ${\cal X}^* = \{V \in ({{\cal V} \backslash {\cal A}^{(a)}}): {\rm Neib}_{\cal E}(V) \cap ({\cal C} \backslash {\cal B}^{(b)}) = \emptyset\}$ in $\tilde{\cal T}^*$. According to Lemma \ref{Lemma_Union_Card}, we have the distribution of $|{\cal X}^*|$ given $E_1(a,b)$ as
\begin{multline}\label{eq_theorem2_1}
	\Pr\{|{\cal X}^*| = j \mid E_1(a,b)\} \\= U\left(K-a-j;K-a,n-b,{\bm \Phi}^{(K-a)}\right).
\end{multline}
There is a subtle point: Unlike ${\cal A}^{(a)}$ and ${\cal B}^{(b)}$, ${\cal X}^*$ is a random variable rather than a fixed set because $\tilde{\cal T}^*$ is a random graph. ${\cal X}^*$ can be any subset of $({\cal V}\backslash{\cal A}^{(a)})$. To avoid any confusion, let ${\cal X}^{(j)}$ be an instance of ${\cal X}^*$, which is of size $j$, i.e., ${\cal X}^{(j)}$ is a set of $j$ specified VNs in $({\cal V}\backslash{\cal A}^{(a)})$. Denote by $\check{E}({\cal A}^{(a)},{\cal X}^{(j)})$ the event that $({\cal A}^{(a)},{\cal X}^{(j)})$ is the extended stopping set with the maximal stopping set ${\cal A}^{(a)}$. The probability of this event can be bounded by Lemma~\ref{lemma:ess_recursive} and Lemma~\ref{lemma:ess}. To develop a closed-form expression, using Lemma~\ref{lemma:ess}, we have
\begin{equation}\label{eq_theorem2_2}
	\begin{aligned}
		\Pr\left\{\check{E}({\cal A}^{(a)},{\cal X}^{(j)}) \right\} \le L(a,j;K,\mathtt{l},\mathtt{r}).
	\end{aligned}
\end{equation}
Using the Total Probability Theorem, we can develop the probability $\Pr\{E_4(a,b)|E_1(a,b)\}$ as (\ref{eq:Pr_E4_under_E1}).
\begin{figure*}[!t]
	\normalsize
\begin{equation}\label{eq:Pr_E4_under_E1}
	\begin{aligned}
		&\Pr\left\{ E_4(a,b) \mid E_1(a,b)\right\} = \sum_{j=0}^{s_{\rm max}(a)} \Pr\left\{ |{\cal X}^*| = j \mid E_1(a,b)\right\} \Pr \left\{ E_4(a,b) \mid |{\cal X}^*| = j, E_1(a,b)\right\}\\
		& = \sum_{j=0}^{s_{\rm max}(a)} \Pr\left\{ |{\cal X}^*| = j \mid E_1(a,b)\right\} \left(\sum_{{\cal X}^{(j)} \subset ({\cal V}\backslash{\cal A}^{(a)})} \Pr \left\{ E_4(a,b), {\cal X}^* = {\cal X}^{(j)} \mid |{\cal X}^*| = j, E_1(a,b)\right\}\right)\\
		& = \sum_{j=0}^{s_{\rm max}(a)} \Pr\left\{ |{\cal X}^*| = j \mid E_1(a,b)\right\} \biggg(\sum_{{\cal X}^{(j)} \subset ({\cal V}\backslash{\cal A}^{(a)})} \Pr \left\{ E_4(a,b)  \mid {\cal X}^* = {\cal X}^{(j)}, |{\cal X}^*| = j, E_1(a,b)\right\} \\
		&~~~~\cdot \Pr \left\{ {\cal X}^* = {\cal X}^{(j)} \mid |{\cal X}^*| = j, E_1(a,b) \right\}\biggg)\\
		& \overset{(a)}{=} \sum_{j=0}^{s_{\rm max}(a)} \Pr\left\{ |{\cal X}^*| = j \mid E_1(a,b)\right\} \Pr\left\{\check{E}({\cal A}^{(a)},{\cal X}^{(j)}) \right\} \left(\sum_{{\cal X}^{(j)} \subset ({\cal V}\backslash{\cal A}^{(a)})} \Pr \left\{ {\cal X}^* = {\cal X}^{(j)} \mid |{\cal X}^*| = j, E_1(a,b)\right\}\right)\\
		& \overset{(b)}{=} \sum_{j=0}^{s_{\rm max}(a)} \Pr\left\{ |{\cal X}^*| = j \mid E_1(a,b)\right\} \Pr\left\{\check{E}({\cal A}^{(a)},{\cal X}^{(j)}) \right\} \\
		& \overset{(c)}{\le} \sum_{j=0}^{s_{\rm max}(a)} U\left(K-a-j;K-a,n-b,{\bm \Phi}^{(K-a)}\right) \cdot L(a,j;K,\mathtt{l},\mathtt{r}),
	\end{aligned}
\end{equation}
\hrulefill
\end{figure*}
In (\ref{eq:Pr_E4_under_E1}), equality $(a)$, $(b)$, and inequality $(c)$ are explained as follows. Given ${\cal X}^* = {\cal X}^{(j)}$, the event $E_4(a,b)$ is independent of $|{\cal X}^*|$ and $E_1(a,b)$, thus yielding
\begin{equation}\nonumber
	\begin{aligned}
	&\Pr \left\{ E_4(a,b)  \mid {\cal X}^* = {\cal X}^{(j)}, |{\cal X}^*| = j, E_1(a,b)\right\} \\
	&= \Pr \left\{ E_4(a,b)  \mid {\cal X}^* = {\cal X}^{(j)}\right\} \\
	&= \Pr\left\{\check{E}({\cal A}^{(a)},{\cal X}^{(j)}) \right\}.
	\end{aligned}
\end{equation}
Observing that the value of $\Pr\left\{\check{E}({\cal A}^{(a)},{\cal X}^{(j)}) \right\}$ is independent of the particular selections of $j$ VNs in ${\cal X}^{(j)}$, we can move this term outside the summation and then equality $(a)$ follows. Equality $(b)$ is obtained by noting that
\begin{equation}\nonumber
	\sum_{{\cal X}^{(j)} \subset ({\cal V}\backslash{\cal A})} \Pr \left\{ {\cal X}^* = {\cal X}^{(j)} \mid |{\cal X}^*| = j, E_1(a,b)\right\} = 1.
\end{equation} 
Inequality (c) is a direct application of (\ref{eq_theorem2_1}) and (\ref{eq_theorem2_2}). 

When considering standard BATS codes without precodes, i.e., set $\mathtt{l} = 0$, $\Pr\left\{ E_4(a,b) \mid E_1(a,b)\right\}$ reduces to $U\left(K-a;K-a,n-b,{\bm \Phi}^{(K-a)}\right)$ because $s_{\rm max}(a) = 0$ and $L(a,j=0;K,\mathtt{l}=0,\mathtt{r}) = 1$, which is identical to $\Pr\left\{ E_3(a,b) \mid E_1(a,b)\right\}$ derived in Appendix~\ref{subsection:proof_BP_std_BATS}. The proof is completed by substituting $\Pr\left\{ E_1(a,b)\right\}$, $\Pr\left\{ E_2(a,b)\right\}$, and $\Pr\left\{ E_4(a,b) \mid E_1(a,b)\right\}$ into (\ref{eq_T_in_G}) and then substituting (\ref{eq_T_in_G}) into (\ref{eq_union_LDPCBATS}). 

\section{Derivation of the Bound on the Error Probability of BATS Codes Under ML Decoding}\label{Sec_proof_Theorem_3}
This appendix proves Theorem~\ref{theorem_LDPC_BATS_INA}. Consider a precode $\mathscr{C}$ with dimension $K'$ and code length $K$. Let ${\bf G}_p$ denote the $K' \times K$ generator matrix of $\mathscr{C}$. Let ${\bf u} = (u_1,u_2,\ldots,u_{K'})$ be the vector of input symbols and then ${\bf v} = {\bf u} {\bf G}_p$ is the vector of intermediate symbols. In the following, a batch stands for a standard batch (i.e., B-CN), excluding the ``extra batches'' (i.e., L-CN) implied by the parity-check equations of the precode. For each batch $i$ (i.e., B-CN $C_i$), let $\tilde{\bf G}_i'$ be the $K \times M$ matrix obtained by adding $K-{\rm dg}_i$ all-zero rows to ${\bf G}_i$, such that the rows of $\tilde{\bf G}_i'$ with indices in ${\cal I}_i$ form the original generator matrix ${\bf G}_i$, where ${\cal I}_i$ is the index set of $C_i$'s neighbors. Then, the batch equation for the $i$-th batch can be rewritten as 
\begin{equation}\nonumber
	{\bf u} {\bf G}_p \tilde{\bf G}_i' {\bf H}_i = {\bf Y}_i.
\end{equation}
Recall that the transfer matrix ${\bf H}_i$ is an $M \times N_i$ matrix, where $N_i$ is the number of received symbols for the $i$-th batch. Assume the rank of each transfer matrix independently follows the empirical rank distribution ${\bf h} = (h_0,h_1,\ldots,h_M)$. Form a $K \times (\sum_{i=1}^{n} N_i)$ matrix $\tilde{\bf G} \triangleq [\tilde{\bf G}_1' {\bf H}_1, \tilde{\bf G}_2' {\bf H}_2, \ldots, \tilde{\bf G}_n' {\bf H}_n]$ and a $1 \times (\sum_{i=1}^{n} N_i)$ row vector ${\bf Y} = [{\bf Y}_1,{\bf Y}_2,\ldots,{\bf Y}_n]$. The linear system of a precoded BATS code be written as
\begin{equation}\label{eq_whole_linear_system}
	{\bf u} {\bf G}_p \tilde{\bf G} = {\bf Y}.
\end{equation}
A unique solution of the linear system (\ref{eq_whole_linear_system}) can be found if and only if ${{\rm rk} ({\bf G}_p\tilde{\bf G})} = K'$. It is clear that the choices of $\bf u$ does not affect the solvability of such a linear system. Without loss of generality, assume that we transmit the all-zero vector ${\bf u} = {\bf 0}$ thus leading to the all-zero codeword ${\bf v} = {\bf 0}$. In this case, ${\bf Y}$ must be the all-zero vector. An ML decoding failure occurs if and only if there exists ${\bf u} \in \mathbb{F}_q^{K'} \setminus \{{\bf 0}\}$ such that ${\bf u} {\bf G}_p \tilde{\bf G} = {\bf 0}$. This event is equivalent to the event that there exists ${\bf v} \in \mathscr{C} \setminus \{{\bf 0}\}$ such that ${\bf v} \tilde{\bf G} = {\bf 0}$. 

Let $\alpha$ be a primitive element of the base field and the elements in this field are represented by $\{0,\alpha^0,\alpha^1,\ldots,\alpha^{q-2}\}$. Let $w_H({\bf v})$ be the Hamming weight of a vector $\bf v$, which equals the number of non-zero elements in $\bf v$. Split the codebook into $K+1$ sets $\mathscr{C}_l = \{{\bf v} \in \mathscr{C} : w_H({\bf v}) = l\}$, $l \in [0:K]$. Let $A_l \triangleq |\mathscr{C}_l|$, which is known as the \textit{weight enumerator} of the code $\mathscr{C}$\footnote{When no precode is applied to BATS codes, we have $A_l = \binom{K}{l} (q-1)^l$. When we consider a precode chosen from the $q$-ary regular LDPC ensemble, $A_l$ is actually the \textit{expected weight enumerator}, which can be found in \cite{Weight_distribution_NBLDPC}.}. Recall $\mathbb{E}\left[P_{\textsf{E}}^{\rm ML}(\tilde{\cal T})\right]$ is the ensemble-average error probability for the $(K,n,M,{\bm \Psi},{\bf h},\mathtt{l},\mathtt{r},\nu_{\rm min})_q$ LDPC-BATS ensemble under ML decoding. We can develop the union bound on $\mathbb{E}\left[P_{\textsf{E}}^{\rm ML}(\tilde{\cal T})\right]$ as
\begin{equation}\label{eq_P_ina_1}
	\begin{aligned}
		\mathbb{E}\left[P_{\textsf{E}}^{\rm ML}(\tilde{\cal T})\right] & = \Pr\left\{\bigcup_{{\bf v}\in \mathscr{C} \setminus \{{\bf 0}\}} \left({\bf v} \tilde{\bf G} = {\bf 0}\right) \right\}\\
		& \overset{(a)}{\le} \frac{1}{q-1} \sum_{{\bf v}\in \mathscr{C} \setminus \{{\bf 0}\}} \Pr \left\{ {\bf v} \tilde{\bf G} = {\bf 0} \right\}\\
		& = \frac{1}{q-1} \sum_{l = 1}^{K} \sum_{{\bf v}\in \mathscr{C}_l} \Pr \left\{ {\bf v} \tilde{\bf G} = {\bf 0} \right\}\\
		& \overset{(b)}{=} \frac{1}{q-1} \sum_{l = 1}^{K} A_l \Pr \left\{ {\bf v} \tilde{\bf G} = {\bf 0} | w_H({\bf v}) = l \right\},
	\end{aligned}
\end{equation}
where the factor $\frac{1}{q-1}$ in inequality $(a)$ is due to that for any non-zero codeword $\bf v$
\begin{multline}\nonumber
	\Pr \left\{ \bigcup_{i \in [0:q-2]} \left( \alpha^i {\bf v} \tilde{\bf G} = {\bf 0}\right) \right\} \\=  \Pr \left\{ \alpha^{0} {\bf v} \tilde{\bf G} = {\bf 0} \right\} = \cdots = \Pr \left\{ \alpha^{q-2} {\bf v} \tilde{\bf G} = {\bf 0} \right\},
\end{multline}
and equality $(b)$ arises from the fact that $\Pr \left\{ {\bf v} \tilde{\bf G} = {\bf 0} \right\}$ only depends on $w_H({\bf v})$ since each batch $i$ randomly chooses ${\rm dg}_i$ symbols (i.e., independent of the positions of the $l$ non-zero symbols of $\bf v$) for linear combination and the elements of ${\bf G}_i$ that are coefficients for linear combination are totally random (i.e., independent of the values of the $l$ non-zero symbols of $\bf v$).

Since all batches are independent and equivalent, we only need to consider the event $({\bf v} \tilde{\bf G}_i' {\bf H}_i = {\bf 0} \mid w_H({\bf v}) = l)$ for one batch. In the following discussion, we omit the subscripts of $\tilde{\bf G}_i'$, ${\bf H}_i$, ${\rm dg}_i$, and ${\cal I}_i$ when there is no ambiguity, i.e., we will use $\tilde{\bf G}'$, ${\bf H}$, ${\rm dg}$, and $\cal I$ to represent the random variables corresponding to a randomly chosen batch. Let $I$ denote the random variable representing the number of non-zero symbols of $\bf v$ with indices in ${\cal I}$, which are called the \textit{effective non-zero symbols} used for linear combination, i.e., calculation of ${\bf v} \tilde{\bf G}'$. Due to the independence of batches, the unsolved term of (\ref{eq_P_ina_1}) can be computed as
\begin{multline}\label{eq_batch_0}
	\Pr \left\{ {\bf v} \tilde{\bf G} = {\bf 0} | w_H({\bf v}) = l \right\} \\= \left(\Pr \left\{ {\bf v} \tilde{\bf G}' {\bf H} = {\bf 0} | w_H({\bf v}) = l \right\}\right)^n.
\end{multline}

For further developing (\ref{eq_batch_0}), we provide the following lemma for the rank distribution of a random matrix.

\begin{lemma}\label{lemma_rank}
	For a $d \times k$ random matrix $\bf A$ over $\mathbb{F}_q$, ${\rm rk}({\bf A})$ follows the distribution
	\begin{equation}\label{eq_zeta_func_rdk}
		\zeta_{r}^{d,k} \triangleq \Pr\{{\rm rk}({\bf A}) = r\} = \frac{\zeta_r^d \zeta_r^k}{\zeta_r^r q^{(d-r)(k-r)}},
	\end{equation}
	where $\zeta_r^m$ is defined in (\ref{eq_zeta_function}). Also, $\zeta_{r}^{d,k}$ is equivalent to the following formula for BATS codes:
	\begin{equation}\label{eq_zeta_func_rdk_2}
		\zeta_{r}^{d,k} = \Pr\{{\rm rk}({\bf G}_i {\bf H}_i) = r | {\rm dg}_i = d, {\rm rk}({\bf H}_i) = k\},
	\end{equation}
	where ${\bf G}_i$ is a $d \times M$ totally random matrix over $\mathbb{F}_q$, and ${\bf H}_i$ is an $M$-row matrix over $\mathbb{F}_q$ that has rank $k$.
\end{lemma}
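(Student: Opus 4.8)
The plan is to first establish the closed form for the rank distribution of a matrix drawn uniformly from $\mathbb{F}_q^{d\times k}$ by a direct counting argument, and then to reduce the BATS-code expression in (\ref{eq_zeta_func_rdk_2}) to that case.

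For the first claim, I would count the number $N_r$ of $d\times k$ matrices over $\mathbb{F}_q$ of rank exactly $r$. Such a matrix is determined by its row space (an $r$-dimensional subspace of $\mathbb{F}_q^k$, of which there are $\qbin{k}{r}_q$) together with a surjective linear map of $\mathbb{F}_q^d$ onto that space (equivalently, a $d\times r$ matrix of rank $r$, of which there are $\prod_{i=0}^{r-1}(q^d-q^i)$), so $N_r=\qbin{k}{r}_q\prod_{i=0}^{r-1}(q^d-q^i)$ and $\Pr\{{\rm rk}({\bf A})=r\}=N_r/q^{dk}$. It then remains to check the identity
\[
\frac{\qbin{k}{r}_q\prod_{i=0}^{r-1}(q^d-q^i)}{q^{dk}}=\frac{\zeta_r^d\zeta_r^k}{\zeta_r^r\,q^{(d-r)(k-r)}},
\]
which is routine bookkeeping with the $q$-analogues: since $\zeta_r^m=\prod_{i=0}^{r-1}(1-q^{i-m})$, factoring $q^d$, $q^k$, $q^r$ out of the respective products gives $\prod_{i=0}^{r-1}(q^d-q^i)=q^{dr}\zeta_r^d$ and $\qbin{k}{r}_q=q^{r(k-r)}\zeta_r^k/\zeta_r^r$, and the exponents of $q$ combine as $dr+r(k-r)-dk=-(d-r)(k-r)$.

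For the second claim, fix any $M$-row matrix ${\bf H}_i$ of rank $k$ and take a rank factorization ${\bf H}_i={\bf B}{\bf C}$, with ${\bf B}$ an $M\times k$ matrix of column rank $k$ and ${\bf C}$ a $k\times N_i$ matrix of row rank $k$. Right-multiplication by the full-row-rank matrix ${\bf C}$ preserves rank, so ${\rm rk}({\bf G}_i{\bf H}_i)={\rm rk}\big(({\bf G}_i{\bf B}){\bf C}\big)={\rm rk}({\bf G}_i{\bf B})$, where ${\bf G}_i{\bf B}$ is $d\times k$. The key point is that ${\bf G}_i{\bf B}$ is uniformly distributed on $\mathbb{F}_q^{d\times k}$: the rows of the totally random matrix ${\bf G}_i$ are independent and uniform on $\mathbb{F}_q^{1\times M}$, and the map ${\bf g}\mapsto{\bf g}{\bf B}$ onto $\mathbb{F}_q^{1\times k}$ is surjective with all fibres of equal cardinality $q^{M-k}$ (because ${\bf B}$ has rank $k$), hence carries a uniform row to a uniform row, independently across rows. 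Applying the first claim to ${\bf G}_i{\bf B}$ then yields (\ref{eq_zeta_func_rdk_2}).

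The only mildly delicate point I anticipate is the uniformity statement in the second part: one must note that it holds for \emph{every} fixed ${\bf H}_i$ of rank $k$ (equivalently, for every fixed full-column-rank ${\bf B}$), so that conditioning on the event ${\rm rk}({\bf H}_i)=k$ is indeed harmless; everything else is elementary subspace counting and manipulation of the $\zeta$ functions. As an aside, the first claim could instead be obtained through a row-by-row Markov-chain computation, using that a fresh uniform row avoids a fixed $\rho$-dimensional subspace of $\mathbb{F}_q^k$ with probability $1-q^{\rho-k}$; but the direct count is shorter and already delivers the factored form.
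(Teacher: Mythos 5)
Your proof is correct, and its overall two-part structure mirrors the paper's: first establish the closed form for the rank distribution of a uniform $d\times k$ matrix, then reduce $\Pr\{{\rm rk}({\bf G}_i{\bf H}_i)=r\}$ to that case. The second part is essentially the paper's argument in different clothing: the paper picks an invertible ${\bf P}$ so that ${\bf P}{\bf H}_i$ has its first $k$ rows full rank and the rest zero, observes that ${\bf G}_i{\bf P}^{-1}$ is still totally random, and drops the zero rows; your rank factorization ${\bf H}_i={\bf B}{\bf C}$ with the equal-fibres argument for ${\bf g}\mapsto{\bf g}{\bf B}$ accomplishes exactly the same reduction, and you correctly flag that the uniformity holds for every fixed ${\bf H}_i$ of rank $k$, so the conditioning is harmless. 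The first part is where you genuinely diverge: the paper quotes the alternating-sum expression $N_q(d,k,r)=\qbin{k}{r}_q\sum_{l=0}^{r}(-1)^{r-l}\qbin{r}{l}_q q^{dl+\binom{r-l}{2}}$ from the literature and then collapses it to $\qbin{k}{r}_q\prod_{l=0}^{r-1}(q^d-q^l)$ via a polynomial identity, whereas you obtain that product directly by the row-space/surjection count (an $r$-dimensional row space times a full-column-rank $d\times r$ factor). Your route is more self-contained and avoids both the citation and the inclusion--exclusion identity; the paper's buys a connection to standard references at the cost of an extra algebraic step. The final $\zeta$-bookkeeping ($\prod_{i=0}^{r-1}(q^d-q^i)=q^{dr}\zeta_r^d$, $\qbin{k}{r}_q=q^{r(k-r)}\zeta_r^k/\zeta_r^r$, exponents summing to $-(d-r)(k-r)$) checks out and matches the paper's.
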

\begin{proof}
	The proof of Lemma \ref{lemma_rank} includes two part: 1) Proof of (\ref{eq_zeta_func_rdk}). 2) Proof of the equivalence between (\ref{eq_zeta_func_rdk}) and (\ref{eq_zeta_func_rdk_2}).
		
	\textit{Proof of (\ref{eq_zeta_func_rdk}):} This part is based on \cite[Theorem 25.2]{van2001course} and \cite[Corollary 2.2]{blake2006properties}. Let $V_n(q)$ denote the vector space of dimension $n$ over $\mathbb{F}_q$. The number of distinct $k$-dimensional subspaces of $V_n(q)$, denoted by $\qbin{n}{k}_q$, is
		\begin{equation}
			\qbin{n}{k}_q = \prod_{i=0}^{k-1} \frac{q^{n-i}-1}{q^{k-i}-1}.
		\end{equation}
		The quantities $\qbin{n}{k}_q$ are referred to as Gaussian numbers or Gaussian coefficients. According to \cite{van2001course,blake2006properties}, the number of $d \times k$ matrices over $\mathbb{F}_q$ that have rank $r$ is 
		\begin{equation}\nonumber
			N_q(d,k,r) = \qbin{k}{r}_q \sum_{l=0}^{r} (-1)^{r-l} \qbin{r}{l}_q q^{dl+\binom{r-l}{2}}.
		\end{equation}
		Dividing $N_q(d,k,r)$ by the total number $q^{dk}$ of $d \times k$ matrices over $\mathbb{F}_q$, the rank distribution for a $d \times k$ random matrix over $\mathbb{F}_q$ follows. Using the polynomial identity
		\begin{equation}\nonumber
			\prod_{l=0}^{r-1} (x-q^l) = \sum_{l=0}^{r} (-1)^l \qbin{r}{l}_q q^{\binom{l}{2}} x^{r-l} = \sum_{l=0}^{r} (-1)^{r-l} \qbin{r}{l}_q q^{\binom{r-l}{2}} x^{l},
		\end{equation}
		one can rewrite the expression of the rank distribution with $\zeta_r^m$
		\begin{equation}\nonumber
			\begin{aligned}
				\Pr\{{\rm rk}({\bf A}) = r\} & = \frac{N_q(d,k,r)}{q^{dk}}\\
				& = \frac{\qbin{k}{r}_q \prod_{l=0}^{r-1} (q^d-q^l)}{q^{dk}}\\
				& = \frac{(q^k-1)(q^{k-1}-1)\cdots(q^{k-r+1}-1)}{{\left( {{q^r} - 1} \right)\left( {{q^{r-1}} - 1} \right) \cdots \left( {q - 1} \right)}}\\
				& ~~~\frac{\times (q^d-1)(q^d-q)\cdots(q^d-q^{r-1})}{\times {q^{dk}}}\\
				& = \frac{{\left( {1 - {q^{ - k}}} \right) \left( {1 - {q^{ - k+1}}} \right) \cdots \left( {1 - {q^{ - k + r - 1}}} \right)}}{{\left( {1 - {q^{ - r}}} \right)\left( {1 - {q^{ - r+1}}} \right) \cdots \left( {1 - {q^{ - 1}}} \right)}}\\
				&~~~\frac{\times {\left( {1 - {q^{ - d}}} \right)\left( {1 - {q^{ - d+1}}} \right) \cdots \left( {1 - {q^{ - d + r - 1}}} \right)}}{\times q^{(d-r)(k-r)}}\\
				& = \frac{\zeta_r^d \zeta_r^k}{\zeta_r^r q^{(d-r)(k-r)}}.
			\end{aligned}
		\end{equation}
		
		\textit{Proof of the equivalence between (\ref{eq_zeta_func_rdk}) and (\ref{eq_zeta_func_rdk_2}):} There exists a non-singular square matrix ${\bf P}$ such that the first $k$ rows of ${\bf P} {\bf H}_i$ are full rank and the subsequent $M-k$ rows are all-zero rows. ${\bf G}_i{\bf P}^{-1}$ is equivalent to performing elementary column operations on ${\bf G}_i$, and so ${\bf G}_i {\bf P}^{-1}$ is still a totally random matrix. It follows that 
		\begin{equation}\nonumber
			\begin{aligned}
				&\Pr\{{\rm rk}({\bf G}_i {\bf H}_i) =r | {\rm dg}_i = d, {\rm rk}({\bf H}_i) = k\} \\
				&= \Pr\{{\rm rk}({\bf G}_i {\bf P}^{-1} {\bf P} {\bf H}_i) =r | {\rm dg}_i = d, {\rm rk}({\bf H}_i) = k\}\\
				& \overset{(a)}{=} \Pr\{{\rm rk}\big(({\bf G}_i {\bf P}^{-1})[:,1:k] \cdot ({\bf P} {\bf H}_i)[1:k,:]\big) =r | {\rm dg}_i = d,\\
				&~~~~~~~~~ {\rm rk}({\bf H}_i) = k\}\\
				& \overset{(b)}{=} \Pr\{{\rm rk}\big(({\bf G}_i {\bf P}^{-1})[:,1:k]\big) =r | {\rm dg}_i = d\}\\
				& = \zeta_{r}^{d,k},
			\end{aligned}
		\end{equation}
		where $(a)$ is because the lower $M-k$ rows of $({\bf P} {\bf H}_i)$ are all-zero rows, and $(b)$ is obtained by the fact that right-multiplying $({\bf G}_i {\bf P}^{-1})[:,1:k]$ by a matrix that has full row rank does not change the rank of $({\bf G}_i {\bf P}^{-1})[:,1:k]$.
\end{proof}

For brevity, let $\tilde{\pi}_l \triangleq \Pr \left\{ {\bf v} \tilde{\bf G}' {\bf H} = {\bf 0} | w_H({\bf v}) = l \right\}$. We now proceed to develop $\tilde{\pi}_l$ as
\begin{equation}\label{eq_batch_equals_0}
	\begin{aligned}
		\tilde{\pi}_l &= \sum_{d,i} \Pr \left\{ {\bf v} \tilde{\bf G}' {\bf H} = {\bf 0}| {\rm dg} = d, I = i, w_H({\bf v}) = l \right\} \\
		&~~~\cdot \Pr \left\{ {\rm dg} = d, I = i | w_H({\bf v}) = l \right\} \\
		 & \overset{(a)}{=} \sum_{d,i} \Pr \left\{ {\bf v} \tilde{\bf G}' {\bf H} = {\bf 0}| {\rm dg} = d, I = i\right\} {\rm hyge}(i;K,l,d) \Psi_d\\
		 & = \sum_{d = 1}^{K} \Pr \left\{ {\bf v} \tilde{\bf G}' {\bf H} = {\bf 0}| {\rm dg} = d, I = 0\right\} {\rm hyge}(0;K,l,d) \Psi_d \\
		 &~~~ + \sum_{d = 1}^{K} \sum_{i=1}^{\min\{d,l\}} \Pr \left\{ {\bf v} \tilde{\bf G}' {\bf H} = {\bf 0}| {\rm dg} = d, I =i\right\}\\
		 &~~~\cdot {\rm hyge}(i;K,l,d)\Psi_d\\
		 & \overset{(b)}{=} \sum_{d = 1}^{K} \Psi_d \frac{\binom{K-l}{d}}{\binom{K}{d}} + \sum_{d = 1}^{K} \Psi_d \Pr \left\{ {\bf g} {\bf H} = {\bf 0}| I > 0\right\}\\
		 &~~~\cdot \left(\sum_{i=1}^{\min\{d,l\}}  {\rm hyge}(i;K,l,d)\right)\\
		 & \overset{(c)}{=} \sum_{d = 1}^{K} \Psi_d \left(\frac{\binom{K-l}{d}}{\binom{K}{d}} + \left(1-\frac{\binom{K-l}{d}}{\binom{K}{d}}\right) \left(\sum_{k = 0}^{M} h_k q^{-k}\right)\right)
	\end{aligned}
\end{equation}
where $(a)$ is due to 
\begin{multline}\nonumber
	\Pr \left\{ {\bf v} \tilde{\bf G}'{\bf H} = {\bf 0}| {\rm dg} = d, I = i, w_H({\bf v}) = l\right\} \\= \Pr \left\{ {\bf v} \tilde{\bf G}'{\bf H} = {\bf 0}| {\rm dg} = d, I = i\right\},
\end{multline}
\begin{equation}\nonumber
	\Pr \left\{ I = i | {\rm dg} = d, w_H({\bf v}) = l \right\} = {\rm hyge}(i;K,l,d),
\end{equation}
\begin{equation}\nonumber
	\Pr \left\{ {\rm dg} = d | w_H({\bf v}) = l \right\} = \Pr \left\{ {\rm dg} = d \right\} = \Psi_d,
\end{equation}
and $(b)$ and $(c)$ merit more discussions as follows. Noting that $\Pr \left\{ {\bf v} \tilde{\bf G}' {\bf H} = {\bf 0}| {\rm dg} = d, I = 0\right\} = 1$, the first summation of $(b)$ follows. In the second summation of $(b)$, we define ${\bf g} = {\bf v} \tilde{\bf G}'$. Under the condition $I > 0$, we can form a $1 \times I$ subvector ${\bf v}_{\rm sub}$ of $\bf v$ with the $I$ effective non-zero symbols, and an $I \times M$ submatrix $\tilde{\bf G}'_{\rm sub}$ of $\tilde{\bf G}'$ with the $I$ rows corresponding to the $I$ effective non-zero symbols. Thus, we can write ${\bf g} = {\bf v}_{\rm sub} \tilde{\bf G}'_{\rm sub}$. Since ${\bf v}_{\rm sub}$ is a fixed non-zero vector and $\tilde{\bf G}'_{\rm sub}$ is a totally random matrix, their product ${\bf g}$ is a totally random row vector. Observing $\Pr \left\{ {\bf g} {\bf H} = {\bf 0}| {\rm dg} = d, I =i\right\}$ does not depend on $d$ and $i$ when $I > 0$, we can write for $i > 0$
\begin{equation}
	\Pr \left\{ {\bf g} {\bf H} = {\bf 0}| {\rm dg} = d, I =i\right\} = \Pr \left\{ {\bf g} {\bf H} = {\bf 0}| I > 0\right\}.
\end{equation}
Next, we can apply Lemma \ref{lemma_rank} to obtain $(c)$:
\begin{equation}\nonumber
	\begin{aligned}
		\Pr \left\{ {\bf g} {\bf H} = {\bf 0}| I > 0\right\} 
		& = \sum_{k = 0}^{M} \Pr \left\{ {\bf g} {\bf H} = {\bf 0}, {\rm rk}({\bf H}) = k | I > 0\right\}\\
		& = \sum_{k = 0}^{M} \Pr \left\{ {\rm rk}({\bf H}) = k | I > 0\right\}\\
		&~~~\cdot \Pr \left\{ {\bf g} {\bf H} = {\bf 0} | {\rm rk}({\bf H}) = k, I > 0\right\} \\
		& = \sum_{k = 0}^{M} h_k \Pr \Big\{ {\rm rk}({\bf g} {\bf H}) = 0 | {\rm rk}({\bf H}) = k,\\
		&~~~ {\bf g}\text{ is a totally random row vector}\Big\} \\
		& = \sum_{k = 0}^{M} h_k \zeta_{0}^{1,k}\\
		& = \sum_{k = 0}^{M} h_k q^{-k}.
	\end{aligned}
\end{equation}

By substituting $\Pr \left\{ {\bf v} \tilde{\bf G} = {\bf 0} | w_H({\bf v}) = l \right\} = \tilde{\pi}_l^n$ into (\ref{eq_P_ina_1}), we can obtain an upper bound on $\mathbb{E}\left[P_{\textsf{E}}^{\rm ML}(\tilde{\cal T})\right]$. Notably, when all batches do not involve the non-zero elements of the codeword ${\bf v}$, the error probability of ML decoding is independent of the chosen codeword. Based on this fact, we can further tighten this upper bound, thereby proving Theorem~\ref{theorem_LDPC_BATS_INA}. (We observe that this step primarily has an impact on the numerical results of non-binary BATS codes with a relatively high-rate precode.)

Let $\mathscr{C}^{(1)},\ldots,\mathscr{C}^{(\Xi)}$ be all codebooks in the precode ensemble, where $\Xi$ is the number of graphs in the precode ensemble, and each codebook has the same probability to be chosen. In this paper, $\Xi = 1$ if the precode is fixed (applying no precode can be regarded as applying a rate-$1$ fixed precode), and $\Xi = (K\mathtt{l})!(q-1)^{K\mathtt{l}}$ if the precode is randomly chosen from a regular LDPC ensemble. Define $\mathscr{C}_l^{(k)} = \left\{ {\bf v} \in \mathscr{C}^{(k)}: w_H({\bf v}) = l \right\}$ and $\mathscr{C}_l^{(k)}(j_1,j_2,\ldots,j_l) = \left\{ {\bf v} \in \mathscr{C}_l^{(k)}: v_{j_i} \ne 0, \forall i \in [1:l] \right\}$. It is clear that 
\begin{equation}\nonumber
	\mathscr{C}_l^{(k)} = \bigcup_{1 \le j_1 < j_2 < \cdots < j_l \le K} \mathscr{C}_l^{(k)}(j_1,j_2,\ldots,j_l).
\end{equation}
For a codeword ${\bf v} \in \mathscr{C}_l^{(k)}(j_1,j_2,\ldots,j_l)$, we use $I_n(j_1,j_2,\ldots,j_l)$ to represent the random variable that the total number of non-zero symbols of ${\bf v}$ that are chosen be encoded by $n$ batches. For example, $I_n(j_1,j_2,\ldots,j_l) = 0$ indicates that no non-zero symbols are encoded by $n$ batches, and so ${\bf v} \tilde{\bf G} = {\bf 0}$ holds for all ${\bf v} \in \mathscr{C}_l^{(k)}(j_1,j_2,\ldots,j_l)$, $\forall k$. Note that $\Pr\{I_n(j_1,j_2,\ldots,j_l) = i\}$ is independent of the specific choice of $j_1,j_2,\ldots,j_l$.

Now, following the approach used to establish the bound in (\ref{eq_P_ina_1}), we derive a new, tighter bound, as shown in (\ref{eq_P_ina2}).
\begin{figure*}[!t]
	\normalsize
\begin{equation}\label{eq_P_ina2}
	\begin{aligned}
		&~~~\mathbb{E}\left[P_{\textsf{E}}^{\rm ML}(\tilde{\cal T})\right] \\
		& = \frac{1}{\Xi} \sum_{k=1}^{\Xi} \Pr\left\{\bigcup_{{\bf v}\in \mathscr{C}^{(k)} \setminus \{{\bf 0}\}} \left({\bf v} \tilde{\bf G} = {\bf 0}\right) \right\}\\
		& \le \frac{1}{\Xi} \sum_{k=1}^{\Xi} \sum_{l=1}^{K} \sum_{1 \le j_1 < j_2 < \cdots < j_l \le K} \Pr\left\{ \bigcup_{{\bf v}\in \mathscr{C}_l^{(k)}(j_1,\ldots,j_l)} \left({\bf v} \tilde{\bf G} = {\bf 0}\right)\right\}\\
		& = \frac{1}{\Xi} \sum_{k,l,j_1,\ldots,j_l} \left( \Pr\left\{ \bigcup_{\substack{{\bf v}\in\\ \mathscr{C}_l^{(k)}(j_1,\ldots,j_l)}} \left({\bf v} \tilde{\bf G} = {\bf 0}\right), I_n(j_1,\ldots,j_l) = 0\right\} + \Pr\left\{ \bigcup_{\substack{{\bf v}\in\\ \mathscr{C}_l^{(k)}(j_1,\ldots,j_l)}} \left({\bf v} \tilde{\bf G} = {\bf 0}\right), I_n(j_1,\ldots,j_l)  > 0\right\} \right) \\
		& = \frac{1}{\Xi} \sum_{k,l,j_1,\ldots,j_l} \left(\mathds{1}_{\left\{\mathscr{C}_l^{(k)}(j_1,\ldots,j_l) \ne \emptyset\right\}} \Pr\left\{ I_n(j_1,\ldots,j_l) = 0\right\} + \Pr\left\{ \bigcup_{{\bf v}\in \mathscr{C}_l^{(k)}(j_1,\ldots,j_l)} \left({\bf v} \tilde{\bf G} = {\bf 0}\right), I_n(j_1,\ldots,j_l)  > 0\right\}\right)\\
		& \overset{(a)}{\le} \frac{1}{\Xi} \sum_{k,l,j_1,\ldots,j_l} \biggg(\mathds{1}_{\left\{\mathscr{C}_l^{(k)}(j_1,\ldots,j_l) \ne \emptyset\right\}}\Pr\left\{ I_n(j_1,\ldots,j_l) = 0\right\} \\
		&~~~+ \frac{\left|\mathscr{C}_l^{(k)}(j_1,\ldots,j_l)\right|}{q-1} \Pr\left\{ \left({\bf v} \tilde{\bf G} = {\bf 0}\right), I_n(j_1,\ldots,j_l)  > 0 \mid w_H({\bf v}) = l\right\}\biggg)\\
		& \overset{(b)}{\le} \sum_{l=1}^{K}  \underbrace{\min\left\{\frac{A_l}{q-1},\binom{K}{l}\right\} \Pr\left\{ I_n(j_1,\ldots,j_l) = 0\right\}}_{(a)} + \underbrace{\frac{A_l}{q-1} \left(\Pr\left\{ {\bf v} \tilde{\bf G} = {\bf 0}\mid w_H({\bf v}) = l\right\} - \Pr\left\{ I_n(j_1,\ldots,j_l) = 0 \right\}\right)}_{(b)}.
	\end{aligned}
\end{equation}
	\hrulefill
\end{figure*}
In (\ref{eq_P_ina2}), we use $\sum_{k,l,j_1,\ldots,j_l}$ to represent $\sum_{k=1}^{\Xi} \sum_{l=1}^{K} \sum_{1 \le j_1 < j_2 < \cdots < j_l \le K}$ to save notation. The factor $\frac{1}{q-1}$ in inequality $(a)$ follows from the same reason as that in (\ref{eq_P_ina_1}), and inequality $(b)$ deserves further discussion. First, we have for a fixed $l$
\begin{equation}\nonumber
	\frac{1}{\Xi} \sum_{k=1}^{\Xi} \sum_{1 \le j_1 < j_2 < \cdots < j_l \le K} \left|\mathscr{C}_l^{(k)}(j_1,\ldots,j_l)\right| = A_l,
\end{equation}
which is the excepted weight enumerator. By noting that $\Pr\left\{ \left({\bf v} \tilde{\bf G} = {\bf 0}\right), I_n(j_1,\ldots,j_l)  > 0 \mid w_H({\bf v}) = l\right\}$ is independent of specific $j_1,\ldots,j_l$, we can write
\begin{equation}\nonumber
	\begin{aligned}
		&\frac{1}{\Xi} \sum_{k=1}^{\Xi} \sum_{l=1}^{K} \sum_{1 \le j_1 < j_2 < \cdots < j_l \le K} \frac{\left|\mathscr{C}_l^{(k)}(j_1,\ldots,j_l)\right|}{q-1}\\
		&\cdot \Pr\left\{ \left({\bf v} \tilde{\bf G} = {\bf 0}\right), I_n(j_1,\ldots,j_l)  > 0 \mid w_H({\bf v}) = l\right\}\\
		& = \sum_{l=1}^{K} \frac{A_l}{q-1} \Pr\left\{ \left({\bf v} \tilde{\bf G} = {\bf 0}\right), I_n(j_1,\ldots,j_l)  > 0 \mid w_H({\bf v}) = l\right\}\\
		& = \sum_{l=1}^{K} \frac{A_l}{q-1} \Bigg(\Pr\left\{ \left({\bf v} \tilde{\bf G} = {\bf 0}\right) \mid w_H({\bf v}) = l\right\} \\
		&~~~- \Pr\left\{ \left({\bf v} \tilde{\bf G} = {\bf 0}\right), I_n(j_1,\ldots,j_l) = 0 \mid w_H({\bf v}) = l\right\}\Bigg)\\
		& = \sum_{l=1}^{K} \frac{A_l}{q-1} \Bigg(\Pr\left\{ \left({\bf v} \tilde{\bf G} = {\bf 0}\right) \mid w_H({\bf v}) = l\right\} \\
		&~~~- \Pr\left\{ I_n(j_1,\ldots,j_l) = 0\right\}\Bigg),
	\end{aligned}
\end{equation}
which leads to the term $(b)$ in inequality $(b)$. The term $(a)$ in inequality $(b)$ follows from that for a fixed $l$
\begin{equation}\nonumber
	\begin{aligned}
		&\frac{1}{\Xi} \sum_{k=1}^{\Xi} \sum_{1 \le j_1 < j_2 < \cdots < j_l \le K} \mathds{1}_{\left\{\mathscr{C}_l^{(k)}(j_1,\ldots,j_l) \ne \emptyset\right\}} \\
		& = \frac{1}{\Xi} \sum_{k=1}^{\Xi} \sum_{1 \le j_1 < j_2 < \cdots < j_l \le K} \min \left\{ \frac{\left|\mathscr{C}_l^{(k)}(j_1,\ldots,j_l)\right|}{q-1}, 1\right\} \\
		& \le \min\left\{\frac{A_l}{q-1},\binom{K}{l}\right\}.
	\end{aligned}
\end{equation}
Note that $\frac{\left|\mathscr{C}_l^{(k)}(j_1,\ldots,j_l)\right|}{q-1}$ must be an integer for a $q$-ary code. Substituting 
\begin{equation}\nonumber
	\Pr\left\{ I_n(j_1,\ldots,j_l) = 0\right\} = \left(\sum_{d=1}^{K} \varphi_{d,K-l}\right)^n,
\end{equation}
and
\begin{equation}\nonumber
	\Pr\left\{ {\bf v} \tilde{\bf G} = {\bf 0}\mid w_H({\bf v}) = l\right\} = \tilde{\pi}_l^n
\end{equation}
into (\ref{eq_P_ina2}), the proof of Theorem \ref{theorem_LDPC_BATS_INA} is completed.

\section{Proofs of Several Properties of Stopping Graphs}\label{appendix:proofs_for_BP}

\begin{proof}[Proof of Lemma \ref{Lemma_Combine_BATS_SS}]
For constraint ${\rm C1}$: For $V \in {\cal A}_i$, $i = 1,2$, $|{\rm Neib}_{\cal E}(V) \cap ({\cal C} \backslash ({\cal B}_1 \cup {\cal B}_2))| \le  |{\rm Neib}_{\cal E}(V) \cap ({\cal C} \backslash {\cal B}_i)| = 0$; and for $V \in {\cal A}_e$, constraint ${\rm C1}$ is satisfied by the definition of the set ${\cal A}_e$. Thus, constraint ${\rm C1}$ holds for all VNs in ${{\cal A}_1 \cup {\cal A}_2} \cup {\cal A}_e$.

For constraint ${\rm C2}$: There is a fact that for any B-CN $C$, inequality
\begin{multline}\nonumber
	|{\rm Neib}(C) \cap ({\cal A}_1 \cup {\cal A}_2)| - |{\rm Neib}(C) \cap {\cal A}_i| \\ \ge
	{\rm rk}(C^{({\rm Neib}(C) \cap ({\cal A}_1 \cup {\cal A}_2))}) - {\rm rk}(C^{({\rm Neib}(C) \cap {\cal A}_i)})
\end{multline}
holds for $i=1,2$. Therefore, we have for $C \in ({\cal B}_1 \cup {\cal B}_2)$ and $i = 1,2$
\begin{equation}\nonumber
	\begin{aligned}
		&|{\rm Neib}(C) \cap ({\cal A}_1 \cup {\cal A}_2)| \\
		&\ge {\rm rk}(C^{({\rm Neib}(C) \cap ({\cal A}_1 \cup {\cal A}_2))}) + |{\rm Neib}(C) \cap {\cal A}_i| \\
		&~~~- {\rm rk}(C^{({\rm Neib}(C) \cap {\cal A}_i)})\\
		& > {\rm rk}(C^{({\rm Neib}(C) \cap ({\cal A}_1 \cup {\cal A}_2))}).
	\end{aligned}
\end{equation}
We then can use the same argument to prove that
\begin{equation}\nonumber
	\begin{aligned}
		&|{\rm Neib}(C) \cap ({\cal A}_1 \cup {\cal A}_2 \cup {\cal A}_e)| \\
		&\ge {\rm rk}(C^{({\rm Neib}(C) \cap ({\cal A}_1 \cup {\cal A}_2 \cup {\cal A}_e))}) + |{\rm Neib}(C) \cap ({\cal A}_1 \cup {\cal A}_2)|\\
		&~~~- {\rm rk}(C^{({\rm Neib}(C) \cap ({\cal A}_1 \cup {\cal A}_2))}) \\
		&> {\rm rk}(C^{({\rm Neib}(C) \cap ({\cal A}_1 \cup {\cal A}_2 \cup {\cal A}_e))}),
	\end{aligned}
\end{equation}
which validates that constraint ${\rm C2}$ is satisfied.

For constraint ${\rm C3}$: If a VN $V$ satisfies ${\rm Neib}_{\cal E}(V) \cap ({\cal C} \backslash ({\cal B}_1 \cup {\cal B}_2)) = \emptyset$, we know that $V \in ({{\cal A}_1 \cup {\cal A}_2} \cup {\cal A}_e)$ according to the definition of ${\cal A}_e$. Thus, $V \in ({\cal V} \setminus ({{\cal A}_1 \cup {\cal A}_2} \cup {\cal A}_e))$ is a sufficient condition such that ${\rm Neib}_{\cal E}(V) \cap ({\cal C} \backslash ({\cal B}_1 \cup {\cal B}_2)) \ne \emptyset$.

According to the above arguments, $\left({{\cal A}_1 \cup {\cal A}_2} \cup {\cal A}_e,{{\cal B}_1 \cup {\cal B}_2}\right)$ is valid for $\cal T$. The proof is completed.
\end{proof}

\begin{proof}[Proof of Lemma \ref{Lemma_BATS_Stops_at_Stopping_Graph}]
We first prove the necessity. Assuming that ${\cal T}' = ({\cal V}',{\cal C}',{\cal E}')$ is a non-empty residual graph after BP decoding, in which no B-CN is decodable. Consider the reconstruction of ${\cal T}$ from ${\cal T}'$ by adding $|{\cal V}|-|{\cal V}'|$ VNs. Let ${\cal A} = {\cal V}^{(0)} = {\cal V}'$, ${\cal B} = {\cal C}^{(0)} = {\cal C}'$, ${\cal E}^{(0)} = {\cal E}'$, and let ${\cal T}^{(0)} = ({\cal V}^{(0)},{\cal C}^{(0)},{\cal E}^{(0)})$ be the initial Tanner graph. ${\cal T}^{(\ell)}$ can be reconstructed from ${\cal T}^{(\ell-1)}$ by adding one VN, denoted by $V^+$, the edges emitted by $V^+$ and the B-CNs ${\rm Neib}(V^+) \backslash {\cal C}^{(\ell-1)}$, after which ${\cal A}^{(\ell)}$, ${\cal B}^{(\ell)}$, ${\cal V}^{(\ell)}$, ${\cal C}^{(\ell)}$, and ${\cal E}^{(\ell)}$ will be updated. We have ${\cal T}' = {\cal T}^{(0)}$ and ${\cal T} = {\cal T}^{(|{\cal V}|-|{\cal V}'|)}$.

The necessity can be proved by induction. Assume that $({\cal A},{\cal B})$ is valid for ${\cal T}^{(\ell-1)}$. Since the reconstruction is the reverse process of BP decoding of a fixed $\cal T$, the shape of ${\cal T}^{(\ell)}$, i.e., ${\cal V}^{(\ell)}$, ${\cal C}^{(\ell)}$, and ${\cal E}^{(\ell)}$, is deterministic. Then, we will demonstrate that constraints $({\cal A},{\cal B})$ is still valid for ${\cal T}^{(\ell)}$ after adding one VN. There is a fact that the newly added edges are only connected to $V^+$, which indicates that during the reconstruction process neither extra edges will be connected to $\cal A$ nor will existing edges emanating from $\cal A$ be removed.
Thus, constraint ${\rm C1}$ holds for ${\cal T}^{(\ell)}$. Also due to this fact, we can see that ${\rm Neib}_{{\cal E}^{(\ell)}}(C) \cap {\cal A}$ is fixed and independent of $\ell$ for any $C \in {\cal B}$, thus ensuring that constraint ${\rm C2}$ holds for ${\cal T}^{(\ell)}$. Finally for constraint ${\rm C3}$, we only need to consider $V^+$ because the edges connected to ${\cal V}^{(\ell-1)}$ do not change. Constraint ${\rm C3}$ holds for the fact that $V^+$ is decodable for ${\cal T}^{(\ell)}$. It is clear that $V^+$ cannot be decoded by any B-CN in ${\cal B}$ because of constraint ${\rm C2}$. Therefore, we can argue that ${\rm Neib}_{{\cal E}^{(\ell)}}(V^+) \cap ({\cal C}^{(\ell)} \backslash {\cal B}) \ne \emptyset$. The last thing is to verify whether $({\cal A},{\cal B})$ is valid for ${\cal T}^{(0)}$. Constraints ${\rm C1}$ and ${\rm C3}$ respectively follow from the facts that ${\cal C}^{(0)} \backslash {\cal B} = \emptyset$ and ${\cal V}^{(0)} \backslash {\cal A} = \emptyset$, and constraint ${\rm C2}$ holds by assumption that no decodable B-CN is in the Tanner graph. The necessity is proved.

We now prove the sufficiency. If $({\cal A},{\cal B})$ is valid for $\cal T$, the nodes in ${\cal A}$ and ${\cal B}$ cannot be removed by BP decoding even if all other nodes have been removed, due to constraints ${\rm C1}$ and ${\rm C2}$. The proof is completed.
\end{proof}

\begin{proof}[Proof of Lemma \ref{Lemma_LDPCBATS_Stops_at_Stopping_Graph}]
	For $\mathtt{l} = 0$, this lemma is identical to Lemma \ref{Lemma_BATS_Stops_at_Stopping_Graph}. Therefore, we consider $\mathtt{l} > 0$ in the following. Assume that $\tilde{\cal T}' = ({\cal V}',{\cal C}',{\cal E}',\tilde{\cal C}',\tilde{\cal E}')$ is the residual graph after BP decoding, in which no B-CN or L-CN is decodable. Consider the reconstruction of $\tilde{\cal T}$ from $\tilde{\cal T}'$ by adding $|{\cal V}|-|{\cal V}'|$ VNs. Let ${\cal A} = {\cal V}^{(0)} = {\cal V}'$, ${\cal B} = {\cal C}^{(0)} = {\cal C}'$, ${\cal E}^{(0)} = {\cal E}'$, $\tilde{\cal C}^{(0)} = \tilde{\cal C}'$, and $\tilde{\cal E}^{(0)} = \tilde{\cal E}'$. Form the initial Tanner graph $\tilde{\cal T}^{(0)} = ({\cal V}^{(0)},{\cal C}^{(0)},{\cal E}^{(0)},\tilde{\cal C}^{(0)},\tilde{\cal E}^{(0)})$. $\tilde{\cal T}^{(\ell)}$ is reconstruct from $\tilde{\cal T}^{(\ell-1)}$ by adding one VN, denoted by $V^+$, the edges emitted by $V^+$, and its neighbors ${\rm Neib}(V^+) \backslash ({\cal C}^{(\ell-1)} \cup \tilde{\cal C}^{(\ell-1)})$, after which ${\cal V}^{(\ell)}$, ${\cal C}^{(\ell)}$, ${\cal E}^{(\ell)}$, $\tilde{\cal C}^{(\ell)}$, and $\tilde{\cal E}^{(\ell)}$ will be updated. We have $\tilde{\cal T}' = \tilde{\cal T}^{(0)}$ and $\tilde{\cal T} = \tilde{\cal T}^{(|{\cal V}|-|{\cal V}'|)}$.
	
	The necessity can be proved by induction. Assume that $({\cal A},{\cal B})$ is valid for $\tilde{\cal T}^{(\ell-1)}$. Due to the fact that adding a VN and its associated edges and neighbors is the inverse process of the BP decoding of $\tilde{\cal T}$, the shape of $\tilde{\cal T}^{(\ell)}$ is determined. There is a fact that the edges emitted from ${\cal A}$ do not change because the newly added edges are only connected to $V^+$. Thus, constraint ${\rm C1}$ holds for $\tilde{\cal T}^{(\ell)}$. Similarly, the set ${\rm Neib}(C) \cap {\cal A}$, $\forall C \in {\cal B}$, is fixed from $\ell-1$ to $\ell$, which indicates that constraint ${\rm C2}$ holds for $\tilde{\cal T}^{(\ell)}$. Only the validity of constraint ${\rm C4}$ for $\tilde{\cal T}^{(\ell)}$ merits a more detailed discussion. 
	
	Recalling that $\cal A$ is the remaining VN set after BP decoding, so $\cal A$ is a stopping set of the LDPC code. At the beginning, i.e., for $\tilde{\cal T}^{(0)}$, we have $\cal X = \emptyset$. From step $\ell-1$ to $\ell$, $V^+$ will be added to $\cal X$ if and only if ${\rm Neib}_{{\cal E}^{(\ell)}}(V^+) \cap ({\cal C}^{(\ell)} \backslash {\cal B}) = \emptyset$. Then there are two cases: 1) ${\rm Neib}_{{\cal E}^{(\ell)}}(V^+) \cap {\cal B} \ne \emptyset$; or 2) ${\rm Neib}_{{\cal E}^{(\ell)}}(V^+) \cap {\cal B} = \emptyset$. For case 1), let $c$ be a randomly picked neighbor of $V^+$ in ${\cal B}$. According to constraint ${\rm C2}$, we have $|{\rm Neib}(c)| = |{\rm Neib}(c) \cap {\cal A}| + |{\rm Neib}(c) \cap ({\cal V}^{(\ell)}\backslash{\cal A})| > {\rm rk}(c^{({\rm Neib}(c) \cap {\cal A})}) + {\rm rk}(c^{({\rm Neib}(c) \cap ({\cal V}^{(\ell)}\backslash{\cal A}))}) \ge {\rm rk}(c)$, which implies that $V^+$ cannot be decoded by any B-CN in ${\cal B}$. For case 2), $V^+$ is not connected to any B-CN. Therefore, we know that if $V^+$ is added to $\cal X$, there exists at least one L-CNs of degree one in $\tilde{\cal C}^{(\ell)} \backslash \tilde{\cal C}^{(\ell-1)}$, which are the neighbors of $V^+$. It follows that $({\cal A},{\cal X})$ forms an extended stopping set with the maximal stopping set ${\cal A}$. 
	
	It remains to show that $|{\cal X}| \le s_{\rm max}(a)$. Assume that we are given $|{\cal X}| = s$ for $\tilde{\cal T}^{(|{\cal V}|-|{\cal V}'|)}$ at the end of the reconstruction. Observing that $\cal A$ cannot be connected to $s$ \textit{peeling} L-CNs that are reserved for $\cal X$, we need to guarantee that the inequality $a\mathtt{l} \le (K\mathtt{l}/\mathtt{r}-s)\mathtt{r}$ holds so that the L-CN side remains enough sockets for ${\cal A}$. Since $s$ is an integer, we have $s \le \left\lfloor \frac{(K-a)\mathtt{l}}{\mathtt{r}} \right\rfloor \triangleq s_{\rm max}(a)$. 
	
	Finally we verify whether $({\cal A},{\cal B})$ is valid for $\tilde{\cal T}^{(0)}$. Constraint ${\rm C1}$ holds for ${\cal C}^{(0)} \backslash {\cal B} = \emptyset$. Constraint ${\rm C2}$ holds by assumption that no decodable B-CN remains in $\tilde{\cal T}^{(0)}$. To verify constraint ${\rm C4}$, first, we have ${\cal X} = \emptyset$ for $\tilde{\cal T}^{(0)}$. Then, $({\cal A},\emptyset)$ is also an extended stopping set. The necessity is proved.
	
	We now prove the sufficiency. If $({\cal A},{\cal B})$ is valid for $\tilde{\cal T}$, the VNs in ${\cal A}$ cannot be decoded by BP decoding even if all other VNs and associated edges are removed, due to constraints ${\rm C1}$, ${\rm C2}$, and the fact that ${\cal A}$ is a stopping set. The proof is completed.
\end{proof}

\section{Proof of Lemma \ref{Lemma_Union_Card}}\label{Appendix_Proof_Lemma_Union_Card}
We provide two methods to prove Lemma \ref{Lemma_Union_Card}. 

\noindent{\textit{Method 1:}} 
The following lemma is used to prove Lemma \ref{Lemma_Union_Card} by the first method.

\begin{lemma}\label{Lemma_Diagonalizable}
	For a Markov process with the $(m+1) \times (m+1)$ probability transition matrix ${\bf T}$ with
	\begin{equation}\nonumber
		{\bf T}[i,j] = \sum_{d = j-i}^{\min\{D,j\}} \beta_d {\rm hyge}(i+d-j;m,i,d),
	\end{equation}
	where ${\bm \beta} = (\beta_0,\beta_1,\ldots,\beta_D)$ is a probability distribution. Matrix ${\bf T}$ is diagonalizable, i.e., 
	\begin{equation}\nonumber
		{\bf T} = {\bf U}{\bf D}{\bf U}^{-1},
	\end{equation}
	where ${\bf D}$ is a diagonal matrix with ${\bf D}[i,i] = {\bf T}[i,i]$, ${\bf U}$ is an upper-triangular matrix with ${\bf U}[i,j] = \binom{m-i}{j-i}$ for $i \le j$, and ${\bf U}^{-1}$ is an upper-triangular matrix with ${\bf U}^{-1}[i,j] = (-1)^{j-i} \binom{m-i}{j-i}$ for $i \le j$.
\end{lemma}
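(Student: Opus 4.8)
The plan is to verify the claimed factorization directly rather than to invoke any abstract diagonalization theorem. Throughout I index the rows and columns of the $(m+1)\times(m+1)$ matrices by the possible cardinalities $0,1,\ldots,m$, so that ${\bf T}$, ${\bf U}$, ${\bf D}$, and ${\bf U}^{-1}$ are all upper triangular; indeed ${\bf T}[i,j]=0$ whenever $j<i$ because ${\rm hyge}(i+d-j;m,i,d)$ vanishes for such indices, and its diagonal entries are the eigenvalues. Since ${\bf U}$ is upper triangular with unit diagonal it is invertible, so it suffices to establish the two matrix identities ${\bf U}{\bf U}^{-1}={\bf I}$ and ${\bf T}{\bf U}={\bf U}{\bf D}$; the factorization ${\bf T}={\bf U}{\bf D}{\bf U}^{-1}$ then follows at once.

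First I would confirm that the stated ${\bf U}^{-1}$ really is the inverse of ${\bf U}$. For $i\le j$ the $(i,j)$ entry of ${\bf U}{\bf U}^{-1}$ is $\sum_{k=i}^{j}\binom{m-i}{k-i}(-1)^{j-k}\binom{m-k}{j-k}$. Applying the trinomial-revision identity $\binom{m-i}{k-i}\binom{m-k}{j-k}=\binom{m-i}{j-i}\binom{j-i}{k-i}$ factors out $\binom{m-i}{j-i}$, and the remaining alternating sum $\sum_{k=i}^{j}(-1)^{j-k}\binom{j-i}{k-i}=\delta_{ij}$ by the binomial theorem, giving the Kronecker delta as required. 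This step is routine.

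The substance of the lemma is the eigenrelation ${\bf T}{\bf U}={\bf U}{\bf D}$, i.e.\ that the $k$-th column of ${\bf U}$ is a right eigenvector of ${\bf T}$ with eigenvalue ${\bf T}[k,k]$. Writing out the $(i,k)$ entry and using $\binom{m-j}{k-j}=\binom{m-j}{m-k}$, this reduces to proving $\sum_{j}{\bf T}[i,j]\binom{m-j}{m-k}=\binom{m-i}{m-k}\,{\bf T}[k,k]$. I would substitute ${\bf T}[i,j]=\sum_{d}\beta_d\,{\rm hyge}(i+d-j;m,i,d)$, expand ${\rm hyge}(i+d-j;m,i,d)=\binom{i}{d-a}\binom{m-i}{a}/\binom{m}{d}$ with $a=j-i$, interchange the sums over $j$ and $d$, and evaluate the inner sum over $a$ for each fixed degree $d$: one application of trinomial revision, $\binom{m-i}{a}\binom{m-i-a}{m-k}=\binom{m-i}{m-k}\binom{k-i}{a}$, pulls out the factor $\binom{m-i}{m-k}$, and then Vandermonde's identity collapses $\sum_{a}\binom{i}{d-a}\binom{k-i}{a}$ to $\binom{k}{d}$. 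What remains is $\binom{m-i}{m-k}\sum_{d}\beta_d\binom{k}{d}/\binom{m}{d}$, which equals $\binom{m-i}{m-k}\,{\bf T}[k,k]$ because ${\rm hyge}(d;m,k,d)=\binom{k}{d}/\binom{m}{d}$.

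The cleanest way to motivate this computation — and the way I would present it, since it matches the Markov-process viewpoint of Method~1 — is probabilistic. If $X_t=|S_1\cup\cdots\cup S_t|$ is the size of the running union, then $\binom{m-X_t}{m-k}$ counts the $(m-k)$-subsets $W\subseteq U$ disjoint from $S_1\cup\cdots\cup S_t$, and conditioning on the next independent subset $S_{t+1}$, each such $W$ stays disjoint with probability $\sum_{d}\beta_d\binom{k}{d}/\binom{m}{d}$, a constant depending only on $k$ and equal to ${\bf T}[k,k]$. Hence $\mathbb{E}\!\left[\binom{m-X_{t+1}}{m-k}\,\middle|\,X_t=i\right]=\binom{m-i}{m-k}\,{\bf T}[k,k]$, which is precisely the eigenrelation above. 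I expect the main obstacle to be pure bookkeeping: aligning the hypergeometric factor in ${\bf T}[i,j]$ with the correct sequence of binomial identities (trinomial revision followed by Vandermonde) and keeping the summation ranges honest so that those identities apply; the probabilistic interpretation essentially removes the guesswork about which identity is needed where.
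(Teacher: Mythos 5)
Your proposal is correct, and it takes a genuinely different route from the paper's proof. The paper verifies the factorization by brute force: it computes $({\bf U}{\bf D}{\bf U}^{-1})[i,j]$ directly and matches it to ${\bf T}[i,j]$, which forces it to introduce binomial coefficients with negative upper index, rewrite $\binom{k}{d}$ as $(-1)^{k-d}\binom{-d-1}{k-d}$, and invoke the Chu--Vandermonde identity in that generalized form, with a case split on $d\ge i$ versus $d<i$. You instead verify the equivalent eigenrelation ${\bf T}{\bf U}={\bf U}{\bf D}$, i.e.\ that the columns ${\bf U}[\cdot,k]=\binom{m-\cdot}{m-k}$ are right eigenvectors of ${\bf T}$ with eigenvalues ${\bf T}[k,k]=\sum_d\beta_d\binom{k}{d}/\binom{m}{d}$; combined with the (shared) verification that ${\bf U}{\bf U}^{-1}={\bf I}$, this yields the same factorization. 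Your computation is genuinely simpler --- one trinomial revision plus ordinary Vandermonde, all binomial coefficients nonnegative, no case analysis --- and I checked that the summation ranges line up: the constraint $d\ge j-i$ in the definition of ${\bf T}[i,j]$ is exactly the support of $\binom{i}{d-a}$, and $\binom{m-j}{m-k}$ automatically restricts to $j\le k$. The probabilistic reading, that $\binom{m-X_t}{m-k}$ counts $(m-k)$-subsets disjoint from the running union and that each survives the next step with probability $\binom{k}{d}/\binom{m}{d}$ independently of the current state's identity, is what makes the choice of eigenvectors non-mysterious; it also connects this lemma to the paper's own ``Method 2'' inclusion--exclusion proof of Lemma~\ref{Lemma_Union_Card}, which works with the same family of events. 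The only thing your writeup leaves implicit is the observation that ${\bf U}$ is unit upper triangular and hence invertible so that ${\bf T}{\bf U}={\bf U}{\bf D}$ really does imply ${\bf T}={\bf U}{\bf D}{\bf U}^{-1}$ --- but you state this up front, so there is no gap.
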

\begin{proof}
Let ${\bf U}'[i,j] = (-1)^{j-i} \binom{m-i}{j-i}$ for $i \le j$.
\begin{equation}\nonumber
	\begin{aligned}
		({\bf U}{\bf U}')[i,j] & = \sum_{k = 0}^{m} {\bf U}[i,k] {\bf U}'[k,j]\\
		& = \sum_{k = i}^{j} {\bf U}[i,k] {\bf U}'[k,j]\\
		& = \sum_{k = i}^{j} \binom{m-i}{k-i} (-1)^{j-k} \binom{m-k}{j-k}\\
		& = \binom{m-i}{j-i} \sum_{k = i}^{j} (-1)^{j-k} \binom{j-i}{k-i}\\
		& = \binom{m-i}{j-i} \sum_{k' = 0}^{j-i} (-1)^{j-i-k'} \binom{j-i}{k'}\\
		& = \binom{m-i}{j-i} \sum_{k' = 0}^{j-i} (-1)^{j-i-k'} \binom{j-i}{j-i-k'}.
	\end{aligned}		
\end{equation}
One can verify that $({\bf U}{\bf U}')[i,i] = 1$ and $({\bf U}{\bf U}')[i,j] = 0$ for $j > i$. Therefore, we obtain ${\bf U}^{-1} = {\bf U}'$.

The binomial coefficient with negative integers will be used in the following proof, denoted by $\binom{n}{k}^*$, which is defined as
\begin{equation}\nonumber
	\binom{n}{k}^* = \begin{cases}
		\frac{n(n-1)\cdots (n-k+1)}{k!} &\mbox{if } \left\{\begin{aligned}&(k > 0 \text{ and } n < 0) \text{ or}\\ &(0 < k \le n \text{ and } n \ge 0),\end{aligned}\right. \\
		1 &\mbox{if } k=0,\\
		0 &\mbox{otherwise.}
	\end{cases}
\end{equation}

We now verify that $({\bf U}{\bf D}{\bf U}^{-1})[i,j] = {\bf T}[i,j]$ for all $i$ and $j$. Each element in ${\bf U}{\bf D}{\bf U}^{-1}$ can be computed by
\begin{equation}\nonumber
	\begin{aligned}
		&({\bf U}{\bf D}{\bf U}^{-1})[i,j]\\
		&= \sum_{k=i}^{j} \binom{m-i}{k-i} {\bf T}[k,k] (-1)^{j-k} \binom{m-k}{j-k}\\
		&= \binom{m-i}{j-i} \sum_{k=i}^{j} (-1)^{j-k} \binom{j-i}{k-i} \sum_{d = 0}^{\min\{D,k\}} \beta_d \frac{\binom{k}{d}}{\binom{m}{d}}\\
		&= \sum_{d = 0}^{\min\{D,j\}} \frac{\beta_d}{\binom{m}{d}} \binom{m-i}{j-i} \sum_{k=\max\{d,i\}}^{j} (-1)^{j-k} \binom{j-i}{k-i} \binom{k}{d}\\
		&=\sum_{d = 0}^{\min\{D,j\}} \frac{\beta_d}{\binom{m}{d}} \binom{m-i}{j-i} \\
		&~~~\cdot \sum_{k=\max\{d,i\}}^{j} (-1)^{j-k} \binom{j-i}{k-i} (-1)^{k-d} \binom{-d-1}{k-d}^*\\
		&=\sum_{d = 0}^{\min\{D,j\}} \frac{\beta_d}{\binom{m}{d}} \binom{m-i}{j-i} (-1)^{j-d} \\
		&~~~\cdot \sum_{k=\max\{d,i\}}^{j} \binom{j-i}{k-i} \binom{-d-1}{k-d}^*.
	\end{aligned}
\end{equation}
The last summation term in the above equation can be simplified by considering $d \ge i$ and $d < i$. When $d \ge i$, we have
\begin{equation}\nonumber
	\begin{aligned}
		&\sum_{k=\max\{d,i\}}^{j} \binom{j-i}{k-i} \binom{-d-1}{k-d}^*\\
		&= \sum_{k=d}^{j} \binom{j-i}{j-k} \binom{-d-1}{k-d}^*\\
		&= \sum_{k'=0}^{j-d} \binom{j-i}{j-k'-d} \binom{-d-1}{k'}^*\\
		&= \binom{j-i-d-1}{j-d}^*,
	\end{aligned}		
\end{equation}
where the last equality follows from Chu–Vandermonde identity. Similarly, when $d < i$, we have
\begin{equation}\nonumber
	\begin{aligned}
		&\sum_{k=\max\{d,i\}}^{j} \binom{j-i}{k-i} \binom{-d-1}{k-d}^*\\
		&= \sum_{k=i}^{j} \binom{j-i}{j-k} \binom{-d-1}{k-d}^*\\
		&= \sum_{k=d}^{i-1} \binom{j-i}{j-k} \binom{-d-1}{k-d}^* + \sum_{k=i}^{j} \binom{j-i}{j-k} \binom{-d-1}{k-d}^*\\
		&= \sum_{k=d}^{j} \binom{j-i}{j-k} \binom{-d-1}{k-d}^*\\
		&= \binom{j-i-d-1}{j-d}^*,
	\end{aligned}		
\end{equation}
where the second equality is obtained by $\binom{j-i}{j-k} =0$ for $k < i$. Finally, we can check that
\begin{equation}\nonumber
	\begin{aligned}
		&({\bf U}{\bf D}{\bf U}^{-1})[i,j] \\
		& = \sum_{d = 0}^{\min\{D,j\}} \frac{\beta_d}{\binom{m}{d}} \binom{m-i}{j-i} (-1)^{j-d} \binom{j-i-d-1}{j-d}^*\\
		& = \sum_{d = 0}^{\min\{D,j\}} \frac{\beta_d}{\binom{m}{d}} \binom{m-i}{j-i}  \binom{i}{j-d}\\
		& = \sum_{d = 0}^{\min\{D,j\}} \beta_d \frac{\binom{m-i}{j-i}  \binom{i}{i+d-j}}{\binom{m}{d}}\\
		& = \sum_{d = j-i}^{\min\{D,j\}} \beta_d {\rm hyge}(i+d-j;m,i,d)\\
		& = {\bf T}[i,j].
	\end{aligned}
\end{equation}
The proof of Lemma \ref{Lemma_Diagonalizable} is completed.
\end{proof}

Using Lemma \ref{Lemma_Diagonalizable}, ${\bf T}^s = {\bf U}{\bf D}^s{\bf U}^{-1}$. We can write
\begin{equation}\label{eq_T_power}
	\begin{aligned}
		({\bf T}^s)[i,j] &= \sum_{k = i}^{j} {\bf U}[i,k]({\bf T}[k,k])^s{\bf U}^{-1}[k,j]\\
		&=\binom{m-i}{j-i} \sum_{k=i}^{j} (-1)^{j-k} \binom{j-i}{k-i} \left(\sum_{d = 0}^{k} \beta_d \frac{\binom{k}{d}}{\binom{m}{d}}\right)^s.
	\end{aligned}
\end{equation}
The proof is completed by noting
\begin{equation}\nonumber
	\Pr\{|S| = j\} = {\bf T}^s[0,j].
\end{equation}

\begin{remark}\label{remark_union_card}
	In computer simulations, it may be better to compute ${\bf T}^s$ rather than directly using (\ref{eq_T_power}). All values involved in computing ${\bf T}^s$ are probabilities, thus avoiding issues of insufficient computational precision caused by large binomial coefficients in (\ref{eq_T_power}) (although these issues may be addressed through symbolic computation).
\end{remark}

\noindent{\textit{Method 2:}} 
Define a set $U_j$ of $j$ elements in $U$. Let us consider $U_j = \{1,2,\ldots,j\}$. Recalling that $S = \bigcup_{i=1}^{s} S_i$, the event $S = U_j$ can happen only if each $S_i$, $i \in [1:s]$, is a subset of $U_j$. We can write 
\begin{equation}\label{eq_lemma_union_card_1}
	\Pr\left\{ S_i \subset U_j, \forall i \in [1:s] \right\} = \left(\sum_{d=0}^{m} \beta_d \frac{\binom{j}{d}}{\binom{m}{d}}\right)^s.
\end{equation}
There are $j$ distinct subsets $U_{j-1} \subset U_j$ of $j-1$ elements, denoted by $U_{j-1,1},\ldots,U_{j-1,j}$. Let $E_i$ denote the event $S \subset U_{j-1,i}$. To ensure $S = U_j$, we need to exclude all events $E_i$, $i \in [1:j]$, from (\ref{eq_lemma_union_card_1}). Considering randomly picking $k$ subsets $U_{j-1,i_1},\ldots,U_{j-1,i_k}$, we have $ |U_{j-1,i_1} \cap \cdots \cap U_{j-1,i_k}| = j-k$, which is independent of the selection of $k$ subsets. This can be verified by
\begin{equation}\nonumber
	\begin{aligned}
		&|U_{j-1,i_1} \cap \cdots \cap U_{j-1,i_k}|\\
		 & = |U_j \setminus \left(U_j \setminus \left(U_{j-1,i_1} \cap \cdots \cap U_{j-1,i_k}\right)\right)|\\
		& = |U_j \setminus \left(\left(U_j \setminus U_{j-1,i_1}\right) \cup \cdots \cup \left(U_j \setminus U_{j-1,i_k}\right)  \right)|\\
		& = j-k.
	\end{aligned}
\end{equation}
Using the fact $|U_{j-1,i_1} \cap \cdots \cap U_{j-1,i_k}| = j-k$, we can write
\begin{equation}\nonumber
	\begin{aligned}
		&\Pr\left\{ E_1 \cup \cdots \cup E_j \right\}\\
		 & \overset{(a)}{=} \sum_{k = 1}^{j}(-1)^{k+1} \left( \sum_{1 \le i_1 < \cdots < i_k \le j} \Pr\left\{ E_{i_1} \cap \cdots \cap E_{i_k} \right\}\right)\\
		& \overset{(b)}{=} \sum_{k = 1}^{j}(-1)^{k+1} \binom{j}{k} \left(\sum_{d=0}^{m} \beta_d \frac{\binom{j-k}{d}}{\binom{m}{d}}\right)^s,
	\end{aligned}
\end{equation}
where equality $(a)$ is due to the inclusion–exclusion identity, and $(b)$ to the fact that there are $\binom{j}{k}$ ways to choose $i_1,\ldots,i_k$ and each way has the same probability 
\begin{equation}\nonumber
	\begin{aligned}
		\Pr\left\{ E_{i_1} \cap \cdots \cap E_{i_k} \right\} & = \Pr\left\{ S \subset \left(U_{j-1,i_1} \cap \cdots \cap U_{j-1,i_k}\right) \right\}\\
		& = \Pr\left\{ S_i \subset U_{j-k}, \forall i \in [1:s] \right\} \\
		& = \left(\sum_{d=0}^{m} \beta_d \frac{\binom{j-k}{d}}{\binom{m}{d}}\right)^s.
	\end{aligned}
\end{equation}

Noting that there are $\binom{m}{j}$ ways to pick $U_j$, we have
\begin{equation}\nonumber
	\begin{aligned}
		&\Pr\{|S| = j\}\\
		& = \binom{m}{j} \left(\Pr\left\{ S_i \subset U_j, \forall i \in [1:s] \right\} - \Pr\left\{ E_1 \cup \cdots \cup E_j \right\}\right)\\
		&= \binom{m}{j} \left(\left(\sum_{d=0}^{m} \beta_d \frac{\binom{j}{d}}{\binom{m}{d}}\right)^s + \sum_{k = 1}^{j}(-1)^{k} \binom{j}{k} \left(\sum_{d=0}^{m} \beta_d \frac{\binom{j-k}{d}}{\binom{m}{d}}\right)^s\right)\\
		& = \binom{m}{j} \left(\sum_{k = 0}^{j}(-1)^{k} \binom{j}{k} \left(\sum_{d=0}^{j-k} \beta_d \frac{\binom{j-k}{d}}{\binom{m}{d}}\right)^s\right)\\
		& = \binom{m}{j} \left(\sum_{k' = 0}^{j}(-1)^{j-k'} \binom{j}{k'} \left(\sum_{d=0}^{k'} \beta_d \frac{\binom{k'}{d}}{\binom{m}{d}}\right)^s\right),
	\end{aligned}
\end{equation}
where the last equality is obtained by substituting $k' = j-k$. The proof is completed.

\section{Proofs of Two Upper Bounds for Extended Stopping Sets}\label{appendix:ess}
\begin{proof}[Proof of Lemma~\ref{lemma:ess_recursive}]
We first briefly review the recursion in \cite[Ch. 3.22]{Modern_Coding_Theory}, which was proposed for computing the finite-length performance of the LDPC ensemble. Let $A(v,t,s)$ be the number of the constellations containing \textit{non-empty} stopping sets, where each constellation is formed by connecting $v$ fixed VNs to $t$ fixed L-CNs of degree at least 2 and $s$ fixed L-CNs of degree 1. Here, $v \in [1:K]$ and $t,s \in [0:K\mathtt{l}/\mathtt{r}]$. Note that for $s = 0$, $A(v,t,s=0)$ is the number of $(v,t)$ stopping sets with $v$ fixed VNs and $t$ fixed L-CNs. 

Assume that we are given $A(v,t,s=0)$, $v \in [1:K]$, $t \in [0:K\mathtt{l}/\mathtt{r}]$, and set the remaining $A(v,t,s)$ to zero. Then, these remaining $A(v,t,s)$ that are set to zero can be updated by the recursion, which accumulates the number of constellations via adding extra VNs one by one. This recursion is the inverse process of the peeling decoding, so we call this process \textit{reconstruction}. Recursively determine $A(v,t,s)$, $t \in [0:K\mathtt{l}/\mathtt{r}]$, $s \in [1:K\mathtt{l}/\mathtt{r}-t]$ for $v \in [2:K]$ by means of the following update rule:
\begin{equation}\label{eq_recursion_A}
	\begin{aligned}
		&A(v,t,s) =\sum_{\Delta s = 1}^{\mathtt{l}} \sum_{\sigma = 0}^{\mathtt{l}-\Delta s} \sum_{\Delta t = 0}^{\lfloor \frac{\mathtt{l}-\Delta s - \sigma}{2} \rfloor} \sum_{\tau = 0}^{\mathtt{l}-\Delta s -\sigma - 2\Delta t}\\
		&A(v-1,t-\Delta t-\sigma,s+\sigma-\Delta s) v(\mathtt{l})!\\
		&\cdot {\rm coef}\left\{ \left((1+x)^{\mathtt{r}-1}-1\right)^\sigma \left((1+x)^\mathtt{r}-1-\mathtt{r}x\right)^{\Delta t},x^{\mathtt{l}-\Delta s-\tau} \right\}\\
		&\cdot \binom{(t-\Delta t-\sigma)\mathtt{r}-(v-1)\mathtt{l}+s+\sigma-\Delta s}{\tau} \binom{s+\sigma-\Delta s}{\sigma} \\
		&\cdot \binom{t+s}{\Delta t + \Delta s} \binom{\Delta t + \Delta s}{\Delta t} \mathtt{r}^{\Delta s} \frac{\Delta s}{s}.
	\end{aligned}
\end{equation}
Please refer to \cite[Ch. 3.22]{Modern_Coding_Theory} for the details of the above recursion.

It remains to discuss the initialization of the recursion (\ref{eq_recursion_A}), i.e., obtain $A(v,t,s=0)$. For the $(K,\mathtt{l},\mathtt{r})_2$ LDPC ensemble, the number of the $(v,t)$ stopping sets is ${\rm coef}\{ ((1+x)^\mathtt{r}-1-\mathtt{r}x)^{t},x^{v\mathtt{l}} \}(v\mathtt{l})!$. Thus, for $v \in [1:K]$ and $t,s \in [0:K\mathtt{l}/\mathtt{r}]$, we can initialize 
\begin{equation}\nonumber
	A(v,t,s) = {\rm coef}\{ ((1+x)^\mathtt{r}-1-\mathtt{r}x)^{t},x^{v\mathtt{l}} \}(v\mathtt{l})! \mathds{1}_{\{s=0\}},
\end{equation}
where the indicator function $\mathds{1}_{\{s=0\}}$ makes all $A(v,t,s > 0) = 0$.
For the $(K,\mathtt{l},\mathtt{r},\nu_{\rm min})_2$ LDPC ensemble with $\nu_{\rm min} \ge 2$, \cite[Ch. 3.22]{Modern_Coding_Theory} suggests us to initialize as follows:
\begin{equation}\label{eq:A_v_t_s_initial}
	\begin{aligned}
	A(v,t,s) =& {\rm coef}\{ ((1+x)^\mathtt{r}-1-\mathtt{r}x)^{t},x^{v\mathtt{l}} \}(v\mathtt{l})! \\
	&\cdot \mathds{1}_{\{s=0\}} \mathds{1}_{\{v \ge \nu_{\rm min}\}},
	\end{aligned}
\end{equation}
where $\mathds{1}_{\{v \ge \nu_{\rm min}\}}$ expurgates the stopping sets of size less than $\nu_{\rm min}$. However, Johnson \cite{Johnson2009finite} pointed out that (\ref{eq:A_v_t_s_initial}) may count subgraphs that are not the expurgated stopping sets themselves, but contain them as subgraphs. To ensure that these subgraphs are excluded in the initialization, a slightly modified initialization was proposed in \cite{Johnson2009finite}. 

Note that if we still use (\ref{eq:A_v_t_s_initial}) as initialization when $\nu_{\rm min} \ge 2$, the value of $A(v,t,s)$ we obtain by the recursion becomes an upper bound on the exact value of $A(v,t,s)$. For brevity, we do not define a new symbol to represent this upper bound on $A(v,t,s)$. However, we should keep in mind that when using the upper bounds as the initialization, all values obtained through recursion are also upper bounds.

Now, we are going to bound $B(a,a')$, which is the number of $({\cal A},{\cal A}')$ extended stopping sets with $|{\cal A}| = a$ and $|{\cal A}'| = a'$. Similar to the definition of $A(v,t,s)$, let $\tilde{A}(v,t,s;a)$ denote the number of constellations that are accumulated in $A(v,t,s)$ and the $v$ VNs form the $({\cal A},{\cal A}')$ extended stopping set with $|{\cal A}| = a$ and $|{\cal A}'| = v-a$. Thus, we have 
\begin{equation}\nonumber
	A(v,t,s) = \sum_{a = 1}^{v} \binom{v}{a} \tilde{A}(v,t,s;a),
\end{equation}
and
\begin{equation}\label{eq:B_a_a_prime}
	B(a,a') = \sum_{t,s} \binom{K\mathtt{l}/\mathtt{r}}{t+s} \tilde{A}(v=a+a',t,s;a).
\end{equation}

Note that an $({\cal A},{\cal A}')$ extended stopping set can be recursively constructed by adding the VNs in ${\cal A}'$ to the stopping set ${\cal A}$. This allow us to recursively determine $\tilde{A}(v,t,s;a)$. If we are given $\tilde{A}(v,t,s;a)$ for $v=a$, $s=0$, and $0 \le t \le K\mathtt{l}/\mathtt{r}$, and set the other terms to zero, we can recursively determine $\tilde{A}(v,t,s;a)$ for $v \in [a+1:K]$ as follows:
\begin{equation}\label{eq:recursion_A_tilde}
	\begin{aligned}
		&\tilde{A}(v,t,s;a) = \sum_{\Delta s = 1}^{\mathtt{l}} \sum_{\sigma = 0}^{\mathtt{l}-\Delta s} \sum_{\Delta t = 0}^{\lfloor \frac{\mathtt{l}-\Delta s - \sigma}{2} \rfloor} \sum_{\tau = 0}^{\mathtt{l}-\Delta s -\sigma - 2\Delta t}\\
		& \tilde{A}(v-1,t-\Delta t-\sigma,s+\sigma-\Delta s;a)\\
		&\cdot \underbrace{(v-a)}_{(a)} (\mathtt{l})! \binom{t+s}{\Delta t + \Delta s} \binom{\Delta t + \Delta s}{\Delta t}\\
		&\cdot {\rm coef}\left\{ \left((1+x)^{\mathtt{r}-1}-1\right)^\sigma \left((1+x)^\mathtt{r}-1-\mathtt{r}x\right)^{\Delta t},x^{\mathtt{l}-\Delta s-\tau} \right\}\\
		&\cdot \binom{(t-\Delta t-\sigma)\mathtt{r}-(v-1)\mathtt{l}+s+\sigma-\Delta s}{\tau} \binom{s+\sigma-\Delta s}{\sigma}\\
		&\cdot \mathtt{r}^{\Delta s} \frac{\Delta s}{s}.
	\end{aligned}
\end{equation}
The modified term $(a)$ is because the position of the maximal stopping set of size $a$ is specified and only the $v-a$ newly added VNs can be permuted.

Noting that the meanings of $\tilde{A}(v = a,t,s = 0;a)$ and $A(v=a,t,s=0)$ are identical, we can initialize all $\tilde{A}(v,t,s;a)$ as
\begin{equation}\label{eq:initial_A_tilde}
	\begin{aligned}
	\tilde{A}(v,t,s;a) =& {\rm coef}\{ ((1+x)^\mathtt{r}-1-\mathtt{r}x)^{t},x^{v\mathtt{l}} \}(v\mathtt{l})! \\
	&\cdot \mathds{1}_{\{s=0\}} \mathds{1}_{\{v=a\}} \mathds{1}_{\{v \ge \nu_{\rm min}\}}.
	\end{aligned}
\end{equation}
As mentioned before, the right hand side of (\ref{eq:initial_A_tilde}) is the exact value when $\nu_{\rm min} = 1$ and may become an upper bound when $\nu_{\rm min} \ge 2$. 

After finishing the recursion, we can obtain an upper bound on $B(a,a')$ using (\ref{eq:B_a_a_prime}). Substituting this upper bound into (\ref{eq:p_a_a_prime}), Lemma~\ref{lemma:ess_recursive} is proved.
\end{proof}

\begin{proof}[Proof of Lemma~\ref{lemma:ess}]
Consider the \textit{reconstruction} of the $({\cal A},{\cal A}')$ extended stopping set from the maximal stopping set of size $a$ by adding $a'$ VNs. Once an extra VN is added, there must be a newly added degree-1 L-CN connected to this extra VN. Such an L-CN is called \textit{peeling} L-CN, which guarantees that the extra VN can be removed by the peeling decoder. If the extra VN has more than one degree-1 L-CNs, we randomly choose one of them to be the peeling L-CN. Therefore, after the reconstruction, $a'$ peeling L-CNs will be added.

Let $\tilde{B}(a,a',c)$ denote the number of $({\cal A},{\cal A}')$ extended stopping sets in the ensemble, which are with $a'$ fixed peeling L-CNs and $c$ fixed L-CNs that are the neighbors of ${\cal A}$. According to the definition of $\tilde{B}(a,a',c)$, there may be some L-CNs which are the neighbors of ${\cal A} \cup {\cal A}'$, but are not included in the fixed $a'+c$ L-CNs. The relation between $B(a,a')$ and $\tilde{B}(a,a',c)$ is
\begin{equation}\nonumber
	B(a,a') = \binom{K\mathtt{l}/\mathtt{r}}{a'+c} \tilde{B}(a,a',c).
\end{equation}

We can recursively bound $\tilde{B}(a,a',c)$ with $a'$ steps based on the reconstruction process. Let $\tilde{B}_t(a,a',c)$ denote the number of constellations after adding $t$ VNs, where $t \in [0:a']$. Note that $\tilde{B}(a,a',c) = \tilde{B}_{a'}(a,a',c)$ and $\tilde{B}_0(a,a',c) = A(a,c,0)$. For $t \in [1:a']$, we have
\begin{equation}\label{eq_B_recursion}
	\begin{aligned}
	\tilde{B}_t(a,a',c) \le & \tilde{B}_{t-1}(a,a',c) t(c+t) \mathtt{r} \\
	&\cdot \binom{(K\mathtt{l}/\mathtt{r}-a'+t-1)\mathtt{r}-(a+t-1)\mathtt{l}}{\mathtt{l}-1} (\mathtt{l})!.
	\end{aligned}
\end{equation}
If we recursively bound $\tilde{B}_t(a,a',c)$ using the above inequality, the initial condition is
\begin{equation}\nonumber
	\tilde{B}_0(a,a',c) = {\rm coef}\{ ((1+x)^\mathtt{r}-1-\mathtt{r}x)^{c},x^{a\mathtt{l}} \}(a\mathtt{l})! \mathds{1}_{\{a \ge \nu_{\rm min}\}}.
\end{equation}
Here is the explanation for inequality (\ref{eq_B_recursion}). For brevity, we use $V^+$ and $C^+$ to refer to the newly added VN and the newly added peeling L-CN in the following, respectively. The factor $t$ accounts for the fact that $V^+$ can be chosen from $t$ VNs which are not in the maximal stopping set. Likewise, we can arbitrarily chose one L-CN from $c+t$ L-CNs to be $C^+$ and then chose a socket of $C^+$ to be connected by $V^+$, which gives the factor $(c+t)\mathtt{r}$. The remaining $\mathtt{l}-1$ sockets of $V^+$ can be connected to any $\mathtt{l}-1$ available sockets. At step $t$, there are $a'-(t-1)$ L-CNs reserved as the peeling L-CNs for the current step and the subsequent steps. Therefore, we can choose $\mathtt{l}-1$ sockets from the $(K\mathtt{l}/\mathtt{r}-a'+t-1)\mathtt{r}-(a+t-1)\mathtt{l}$ available sockets, resulting in the term $\binom{(K\mathtt{l}/\mathtt{r}-a'+t-1)\mathtt{r}-(a+t-1)\mathtt{l}}{\mathtt{l}-1}$. The term $(\mathtt{l})!$ is given for the fact that we can permute the newly added $\mathtt{l}$ edges. Finally, the inequality follows from the fact that we may count some constellations more than once.

Recursively applying (\ref{eq_B_recursion}) $a'$ times, we can obtain the upper bound on $\tilde{B}(a,a',c)$. In fact, we can calculate the upper bound for $p(a,a')$ more efficiently, avoiding the issue of excessively large values of $B(a,a')$ and $T(a+a')$. Let $\tilde{p}_t(a,a',c) = \frac{\tilde{B}_t(a,a',c)}{T(a+t)}$, where $T(v) = \binom{K\mathtt{l}}{v\mathtt{l}} (v\mathtt{l})!$ (recall that we consider the binary ensemble). Noting that 
\begin{equation}\nonumber
	\begin{aligned}
		T(a+t) = T(a+t-1) \prod_{i=0}^{\mathtt{l}-1} (K\mathtt{l}-(a+t-1)\mathtt{l}-i),
	\end{aligned}
\end{equation}
we can, based on (\ref{eq_B_recursion}), write the inequality
\begin{multline}\label{eq_p_recursion}
	\tilde{p}_t(a,a',c) \\ \le \tilde{p}_{t-1}(a,a',c) \frac{t(c+t) \mathtt{r} \binom{(K\mathtt{l}/\mathtt{r}-a'+t-1)\mathtt{r}-(a+t-1)\mathtt{l}}{\mathtt{l}-1} (\mathtt{l})!}{\prod_{i=0}^{\mathtt{l}-1} (K\mathtt{l}-(a+t-1)\mathtt{l}-i)},
\end{multline}
where $t \in [1,a']$. Here, to cover the case $\mathtt{l} = 0$, we adopt the convention $\prod_{i=0}^{\mathtt{l-1}} x_i = 1$ if $\mathtt{l} \le 0$. If we recursively bound $\tilde{p}_t(a,a',c)$, the initial condition is $\tilde{p}_0(a,a',c) = \tilde{B}_0(a,a',c)/T(a)$. Thus, we obtain the upper bound on $p_{a'}(a,a',c)$ as follows:
\begin{equation}\nonumber
	\tilde{p}_{a'}(a,a',c) \le \hat{p}(a,a',c),
\end{equation}
where $\hat{p}(a,a',c)$ is defined in (\ref{eq:hat_p}) and the equality holds if $a'=0$. Noting that the $a'+c$ L-CNs involved in $\tilde{p}_{a'}(a,a',c)$, which are the neighbors of the maximal stopping set or the peeling L-CNs, are fixed, we have
\begin{equation}\label{eq:UB_p_a_a_prime_1}
	\begin{aligned}
	p(a,a')  &= \sum_{c = 0}^{K\mathtt{l}/\mathtt{r}}\binom{K\mathtt{l}/\mathtt{r}}{a'+c} \tilde{p}_{a'}(a,a',c)\\
	& \le \sum_{c = 0}^{K\mathtt{l}/\mathtt{r}}\binom{K\mathtt{l}/\mathtt{r}}{a'+c} \hat{p}(a,a',c).
	\end{aligned}
\end{equation}

At this point, we have obtained the upper bound (\ref{eq:UB_p_a_a_prime_1}). We proceed to tighten the upper bound. Consider the $({\cal A},{\cal A}')$ extended stopping set with $|{\rm Neib}({\cal A})|= c$. Noting the fact that such an extended stopping set must contain the $(a,c)$ maximal stopping set, we can write
\begin{equation}\label{eq:conditional_inequality}
	\begin{aligned}
		&{\Pr}\left\{\text{such an extended stopping set}\right\} \\
		&= {\Pr}\Big\{{\cal A}=\text{ stopping set}, |{\rm Neib}({\cal A})|= c \Big\} \\
		&~~~\times \Pr\Big\{{({\cal S} \cup {\cal S}') \ne}\text{ stopping set},\forall {\cal S}\subset {\cal A}, {\cal S}'\subset {\cal A}',\\
		&~~~ {\cal S}' \ne \emptyset \bigm| {\cal A}=\text{ stopping set}, |{\rm Neib}({\cal A})|= c\Big\}.
	\end{aligned}
\end{equation}
Here, the position of $c$ L-CNs in ${\rm Neib}({\cal A})$ are not specified. The left hand side of (\ref{eq:conditional_inequality}) equals $\binom{K\mathtt{l}/\mathtt{r}}{a'+c} \tilde{p}_{a'}(a,a',c)$ and the first term in the right hand side of (\ref{eq:conditional_inequality}) equals $\binom{K\mathtt{l}/\mathtt{r}}{c} \hat{p}(a,0,c)$. Since probability $\le 1$, we have the inequality
\begin{equation}\nonumber
	\binom{K\mathtt{l}/\mathtt{r}}{a'+c} \tilde{p}_{a'}(a,a',c) \le \binom{K\mathtt{l}/\mathtt{r}}{c} \hat{p}(a,0,c).
\end{equation}
Therefore, we can give a tighter upper bound as
\begin{equation}\label{eq_prob_eSS}
	\begin{aligned}
		p(a,a')  &\le \sum_{c = 0}^{K\mathtt{l}/\mathtt{r}} \min \left\{ \binom{K\mathtt{l}/\mathtt{r}}{c} \hat{p}(a,0,c), \binom{K\mathtt{l}/\mathtt{r}}{a'+c} \hat{p}(a,a',c) \right\}\\
		& \triangleq L(a,a';K,\mathtt{l},\mathtt{r}),
	\end{aligned}
\end{equation}
where $\hat{p}(a,a',c)$ is defined in (\ref{eq:hat_p}).
\end{proof}

\section{Tables of Empirical Rank Distributions and Degree Distributions}\label{tables}
The degree distribution ${\bm \Psi} = (\Psi_1,\Psi_2,\ldots,\Psi_K)$ is represented by the polynomial 
\begin{equation}\nonumber
	\Psi(x) = \sum_{d = 1}^{K} \Psi_d x^d.
\end{equation}

\begin{table}[htbp]
	\caption{Empirical Rank Distribution ${\bf h}_1$ for $M=16$ and $q=256$}
	\centering
	\rowcolors{1}{lightgray!20}{}
	\begin{tabular}{lllllll}
		\hline
		$h_0$ & $h_1$ & $h_2$ & $h_3$ & $h_4$ & $h_5$ & $h_6$\\
		$0$&    $0$&    $0$&    $0$&    $0$&    $0.0001$&    $0.0004$\\
		 $h_7$ & $h_8$ & $h_9$ & $h_{10}$ & $h_{11}$ & $h_{12}$ & $h_{13}$ \\
		$0.0025$&    $0.0110$ & $0.0387$&    $0.1041$&   $0.2062$& $0.2795$&   $0.2339$\\
		$h_{14}$ & $h_{15}$ & $h_{16}$ & & & &\\
		$0.1039$&   $0.0190$&   $0.0008$& & & &\\
		\hline
	\end{tabular}
	\label{table_h1}
\end{table}

\begin{table}[htbp]
	\caption{Empirical Rank Distribution ${\bf h}_2$ for $M=8$ and $q=4$}
	\centering
	\rowcolors{1}{lightgray!20}{}
	\begin{tabular}{lllllll}
		\hline
		$h_0$ & $h_1$ & $h_2$ & $h_3$ & $h_4$ & $h_5$ & $h_6$ \\
		$0$&    $0.0002$&   $0.0025$&    $0.0207$&    $0.1027$&    $0.2846$&    $0.3805$\\
		$h_7$ & $h_8$ & & & & &\\
		$0.1895$&    $0.0194$& & & & &\\
		\hline
	\end{tabular}
\label{table_h2}
\end{table}

\begin{table}[htbp]
	\caption{Empirical Rank Distribution ${\bf h}_3$ for $M=32$ and $q=2$}
	\centering
	\rowcolors{1}{lightgray!20}{}
	\begin{tabular}{lllllll}
		\hline
		$h_0$ & $h_1$ & $h_2$ & $h_3$ & $h_4$ & $h_5$ & $h_6$ \\
		$0$&    $0$&    $0$&    $0$&    $0$&    $0$&    $0$\\
		$h_7$ & $h_8$&$h_9$ & $h_{10}$ & $h_{11}$ & $h_{12}$ & $h_{13}$ \\
		$0$&    $0$&$0$&    $0$&   $0$& $0$&   $0$\\
		$h_{14}$ & $h_{15}$ & $h_{16}$ & $h_{17}$&$h_{18}$ & $h_{19}$ & $h_{20}$ \\
		$0$&   $0.0001$&   $0.0002$& $0.0009$&$0.0030$&   $0.0092$& $0.0246$\\
		$h_{21}$ & $h_{22}$ & $h_{23}$ & $h_{24}$ & $h_{25}$ & $h_{26}$&$h_{27}$\\
		$0.0564$&    $0.1082$&   $0.1689$&   $0.2081$& $0.1955$ & $0.1347$&$0.0650$\\
		$h_{28}$ & $h_{29}$ & $h_{30}$ & $h_{31}$ & $h_{32}$ & &\\
		$0.0207$&   $0.0040$&    $0.0004$&   $0$&   $0$ & & \\
		\hline
	\end{tabular}
	\label{table_h3}
\end{table}

\begin{table}[htbp]
	\caption{Degree Distributions for Standard BATS Codes}
	\centering
	\rowcolors{1}{lightgray!20}{}
	\begin{tabular}{p{0.2cm}p{4.2cm}p{3.3cm}}
		\hline
		No. & Description & Polynomial Representation\\
		${\bm \Psi}_1$ & For $\mathscr{C}_{\rm std}^{(256)}(128,n,16,\allowbreak {\bm \Psi},{\bf h}_1)$, finite-length optimization using \cite[Theorem 16]{FL_analysis_BATS} at $n_1 = 30$ & $0.3504 x^{12} + 0.0253 x^{14} + 0.0785 x^{15} + 0.0931 x^{20} + 0.0245 x^{21} + 0.1030 x^{27} + 0.0501 x^{37} + 0.0288 x^{38} + 0.0244 x^{39} + 0.0503 x^{58} + 0.0184 x^{59} + 0.1532 x^{128}$\\
		${\bm \Psi}_2$ & For $\mathscr{C}_{\rm std}^{(256)}(256,n,16,\allowbreak {\bm \Psi},{\bf h}_1)$, finite-length optimization using \cite[Theorem 16]{FL_analysis_BATS} at $n_1 = 40$ & $0.2015 x^{12} + 0.0428 x^{14} + 0.1893 x^{15} + 0.0110 x^{20} + 0.0915 x^{21} + 0.0332 x^{27} + 0.0872 x^{28} + 0.0678 x^{37} + 0.0073 x^{39} + 0.0697 x^{50} + 0.0083 x^{71} + 0.0542 x^{72} + 0.0301 x^{116} + 0.0240 x^{117} + 0.0821 x^{256}$\\
		${\bm \Psi}_{11}$ & For $\mathscr{C}_{\rm std}^{(256)}(128,n,16,\allowbreak {\bm \Psi},{\bf h}_1)$, asymptotic optimization using \cite[(P1)]{BATS} at $\bar{\eta} = 0.99$ & $0.0464 x^{14} + 0.2263 x^{15} + 0.1304 x^{20} + 0.0344 x^{21} + 0.1221 x^{27} + 0.0702 x^{37} + 0.0404 x^{38} + 0.0776 x^{58} + 0.0258 x^{59} + 0.2263 x^{128}$\\
		${\bm \Psi}_{12}$ & For $\mathscr{C}_{\rm std}^{(256)}(128,n,16,\allowbreak {\bm \Psi},{\bf h}_1)$, finite-length optimization using (\ref{eq:optimization}) at $n_1=30$ & $0.3167 x^{11} + 0.0368 x^{14} + 0.1792 x^{15} + 0.0273 x^{21} + 0.0967 x^{27} + 0.0503 x^{37} + 0.0320 x^{38} + 0.0614 x^{58} + 0.0205 x^{59} + 0.1792 x^{128}$\\
		${\bm \Psi}_{18}$ & For $\mathscr{C}_{\rm std}^{(2)}(128,n,32,\allowbreak {\bm \Psi},{\bf h}_3)$, finite-length optimization using \cite[Theorem 16]{FL_analysis_BATS} at $n_1 = 15$ & $0.0248 x^{21} + 0.4235 x^{22} + 0.0252 x^{27} + 0.0984 x^{35} + 0.0245 x^{36} + 0.0249 x^{37} + 0.0507 x^{46} + 0.0355 x^{47} + 0.0251 x^{48} + 0.0511 x^{66} + 0.2163 x^{128}$\\
		\hline
	\end{tabular}\label{table:DD_std_BATS}
\end{table}

\begin{table}[htbp]
	\caption{Degree Distributions for LDPC-BATS Codes With $K'=64$}
	\centering
	\rowcolors{1}{lightgray!20}{}
	\begin{tabular}{p{0.2cm}p{4.2cm}p{3.3cm}}
		\hline
		No. & Description & Polynomial Representation\\
		${\bm \Psi}_3$ & For $\mathscr{C}_{\rm pre}^{(256)}(128,n,16, {\bm \Psi}, {\bf h}_1, 3,6,\allowbreak \nu_{\rm min})$, finite-length optimization using (\ref{eq:optimization}) at $n_1 = 11$, $\nu_{\rm min} = 1$, $n_2 = \infty$, $\epsilon^* = 1$ & $0.1236 x^{12} + 0.5587 x^{13} + 0.1500 x^{17} + 0.1677 x^{18}$\\
		${\bm \Psi}_7$ & For $\mathscr{C}_{\rm pre}^{(4)}(128,n,8,{\bm \Psi},{\bf h}_2,3,6,\allowbreak \nu_{\rm min})$, finite-length optimization using (\ref{eq:optimization}) at $n_1 = 24$, $\nu_{\rm min} = 1$, $n_2 = \infty$, $\epsilon^* = 1$ & $0.0721 x^6 + 0.9279 x^7$\\
		${\bm \Psi}_8$ & For $\mathscr{C}_{\rm pre}^{(4)}(80,n,8,{\bm \Psi},{\bf h}_2,3,15,\allowbreak \nu_{\rm min})$, finite-length optimization using (\ref{eq:optimization}) at $n_1 = 22$, $\nu_{\rm min} = 1$, $n_2 = \infty$, $\epsilon^* = 1$ & $0.4719 x^6 + 0.0353 x^7 + 0.1727 x^8 + 0.0519 x^{11} + 0.1117 x^{17} + 0.1566 x^{18}$\\
		${\bm \Psi}_{17}$ & For $\mathscr{C}_{\rm pre}^{(256)}(128,n,16,{\bm \Psi},{\bf h}_1,3,6,\allowbreak \nu_{\rm min})$, finite-length optimization using (\ref{eq:optimization}) at $n_1 = 15$, $\nu_{\rm min} = 4$, $n_2 = \infty$, $\epsilon^* = 1$ & $0.7384 x^{12} + 0.0242 x^{17} + 0.1208 x^{18} + 0.0235 x^{20} + 0.0270 x^{22} + 0.0247 x^{24} + 0.0414 x^{27}$\\
		\hline
	\end{tabular}\label{table:DD_LDPC_BATS_64}
\end{table}

\begin{table}[htbp]
	\caption{Degree Distributions for LDPC-BATS Codes With $K'=128$}
	\centering
	\rowcolors{1}{lightgray!20}{}
	\begin{tabular}{p{0.2cm}p{4.2cm}p{3.3cm}}
		\hline
		No. & Description & Polynomial Representation\\
		${\bm \Psi}_4$ & For $\mathscr{C}_{\rm pre}^{(256)}(256,n,16,{\bm \Psi},{\bf h}_1,3,6,\allowbreak \nu_{\rm min})$, finite-length optimization using (\ref{eq:optimization}) at $n_1 = 21$, $\nu_{\rm min} = 1$, $n_2 = \infty$, $\epsilon^* = 1$ & $0.3934 x^{13} + 0.2655 x^{14} + 0.1631 x^{17} + 0.1780 x^{18}$\\
		${\bm \Psi}_5$ & For $\mathscr{C}_{\rm pre}^{(256)}(256,n,16,{\bm \Psi},{\bf h}_1,3,6,\allowbreak \nu_{\rm min})$, finite-length optimization using (\ref{eq:optimization}) at $n_1 = 26$, $\nu_{\rm min} = 7$, $n_2 = \infty$, $\epsilon^* = 1$ & $0.5172 x^{12} + 0.1992 x^{13} + 0.0818 x^{17} + 0.1177 x^{18} + 0.0304 x^{20} + 0.0281 x^{21} + 0.0255 x^{23}$\\
		${\bm \Psi}_6$ & For $\mathscr{C}_{\rm pre}^{(256)}(256,n,16,{\bm \Psi},{\bf h}_1,3,6,\allowbreak \nu_{\rm min})$, asymptotic optimization using \cite[(P1)]{BATS} at $\bar{\eta} = 0.5$ & $0.3066 x^{14} + 0.3311 x^{17} + 0.3623 x^{18}$\\
		${\bm \Psi}_9$ & For $\mathscr{C}_{\rm pre}^{(2)}(256,n,32,{\bm \Psi},{\bf h}_3,3,6,\allowbreak \nu_{\rm min})$, asymptotic optimization using \cite[(P1)]{BATS} at $\bar{\eta} = 0.5$ & $0.1723 x^{26} + 0.2921 x^{27} + 0.5356 x^{35}$\\
		${\bm \Psi}_{10}$ & For $\mathscr{C}_{\rm pre}^{(2)}(160,n,32,{\bm \Psi},{\bf h}_3,3,15,\allowbreak \nu_{\rm min})$, asymptotic optimization using \cite[(P1)]{BATS} at $\bar{\eta} = 0.8$ & $0.0666 x^{26} + 0.2494 x^{27} + 0.1059 x^{34} + 0.0814 x^{35} + 0.0266 x^{42} + 0.1168 x^{43} + 0.1532 x^{55} + 0.1563 x^{81} + 0.0438 x^{82}$\\
		${\bm \Psi}_{13}$ & For $\mathscr{C}_{\rm pre}^{(256)}(256,n,16,{\bm \Psi},{\bf h}_1,3,6,\allowbreak \nu_{\rm min})$, finite-length optimization using (\ref{eq:optimization}) at $n_1 = 26$, $\nu_{\rm min} = 1$, $n_2 = \infty$, $\epsilon^* = 1$ & $0.3057 x^{11} + 0.1413 x^{13} + 0.1561 x^{14} + 0.1690 x^{17} + 0.1844 x^{18} + 0.0230 x^{23} + 0.0204 x^{30}$\\
		${\bm \Psi}_{14}$ & For $\mathscr{C}_{\rm pre}^{(2)}(256,n,32,{\bm \Psi},{\bf h}_3,3,6,\allowbreak \nu_{\rm min})$, finite-length optimization using (\ref{eq:optimization}) at $n_1 = 15$, $\nu_{\rm min} = 7$, $n_2 = 8$, $\epsilon^* = 0.05$ & $0.4152 x^{23} + 0.0254 x^{25} + 0.0781 x^{26} + 0.1107 x^{27} + 0.3706 x^{35}$\\
		${\bm \Psi}_{15}$ & For $\mathscr{C}_{\rm pre}^{(2)}(256,n,32,{\bm \Psi},{\bf h}_3,3,6,\allowbreak \nu_{\rm min})$, finite-length optimization using (\ref{eq:optimization}) at $n_1 = 15$, $\nu_{\rm min} = 7$, $n_2 = 8$, $\epsilon^* = 1$ & $0.1029 x^{20} + 0.5192 x^{22} + 0.1956 x^{23} + 0.0781 x^{27} + 0.0251 x^{32} + 0.0241 x^{34} + 0.0210 x^{35} + 0.0339 x^{37}$\\
		${\bm \Psi}_{16}$ & For $\mathscr{C}_{\rm pre}^{(256)}(256,n,16,{\bm \Psi},{\bf h}_1,3,6,\allowbreak \nu_{\rm min})$, finite-length optimization using (\ref{eq:optimization}) at $n_1 = 21$, $\nu_{\rm min} = 7$, $n_2 = \infty$, $\epsilon^* = 1$ & $0.3979 x^{13} + 0.2478 x^{14} + 0.2351 x^{17} + 0.1192 x^{18}$\\
		\hline
	\end{tabular}\label{table:DD_LDPC_BATS_128}
\end{table}

\bibliographystyle{IEEEtran}
\bibliography{reference}

\begin{IEEEbiographynophoto}{Mingyang Zhu}
	received the B.E. degree in information engineering and the Ph.D. degree in information and communication engineering from Southeast University, Nanjing, China, in 2018 and 2024, respectively. He was a Research Assistant with the Institute of Network Coding, The Chinese University of Hong Kong, from 2023 to 2024, where he is currently a Post-Doctoral Fellow. His research interests include channel coding theory and practice.
\end{IEEEbiographynophoto}

\begin{IEEEbiographynophoto}{Shenghao Yang}
	received the B.S. degree from Nankai University in 2001, the M.S. degree from Peking University in 2004, and the Ph.D. degree in information engineering from The Chinese University of Hong Kong in 2008. He was a Visiting Student with the Department of Informatics, University of Bergen, Norway, in Spring 2007. He was a Post-Doctoral Fellow with the University of Waterloo from 2008 to 2009 and the Institute of Network Coding, The Chinese University of Hong Kong, from 2010 to 2012. He was with the Institute for Interdisciplinary Information Sciences, Tsinghua University, from 2012 to 2015. He is currently an Associate Professor with The Chinese University of Hong Kong, Shenzhen. His research interests include network coding, information theory, coding theory, and quantum information.
\end{IEEEbiographynophoto}

\begin{IEEEbiographynophoto}{Ming Jiang}
	received the B.Sc., M.S., and Ph.D. degrees in communication and information engineering from Southeast University, Nanjing, China, in 1998, 2003, and 2007, respectively. He is currently an Associate Professor with the National Mobile Communications Research Laboratory, Southeast University, Nanjing. His research interest includes coding and modulation technology.
\end{IEEEbiographynophoto}

\begin{IEEEbiographynophoto}{Chunming Zhao}
	received the B.Sc. and M.S. degrees from the Nanjing Institute of Posts and Telecommunications in 1982 and 1984, respectively, and the Ph.D. degree from the Department of Electrical and Electronic Engineering, University of Kaiserslautern, Germany, in 1993. He has been a Post-Doctoral Researcher with the National Mobile Communications Research Laboratory, Southeast University, where he is currently a Professor and the Vice Director of the Laboratory. He has managed several key projects of Chinese Communications High-Tech Program. His research interests include communication theory, coding/decoding, mobile communications, and VLSI design. Dr. Zhao received the First Prize of National Technique Invention of China in 2011 and the Excellent Researcher Award from the Ministry of Science and Technology, China.
\end{IEEEbiographynophoto}
\end{document}